\newtheorem{prop}{Proposition}
\newtheorem{thm}{Theorem}
\newcommand\munderbar[1]{%
  \underaccent{\bar}{#1}}
\def\hmath$#1${\texorpdfstring{{\rmfamily\textit{#1}}}{#1}}
\def\al{\alpha}
\def\be{\beta}
\newcommand\cS{{\mathcal S}}
\newcommand\cI{{\mathcal I}}
\newcommand\hh{{f}}
\def\acS{{\, \xrightarrow{\cS}\, }}
\theoremstyle{definition}
\newtheorem{remark}{Remark}
\title{Darboux transformations and related non-Abelian\\ integrable differential-difference systems\\
of the derivative nonlinear Schr\"odinger type }
\author{Edoardo Peroni$^{a}$, Jing Ping Wang$^{b, a}$\\
{\small$^{a}$School of Engineering, Mathematics and Physics, University of Kent, Canterbury, UK,}\\
{\small$^b$ School of Mathematics and Statistics, Ningbo University, Ningbo 315211, People’s Republic of China.}\\ \\
}
\date{}
\begin{document}

\maketitle

\begin{abstract}
We construct linear and quadratic Darboux matrices compatible with the reduction group of the Lax operator for each of the seven known non-Abelian derivative nonlinear Schrödinger equations that admit Lax representations. The differential-difference systems derived from these Darboux transformations generalise established non-Abelian integrable models by incorporating non-commutative constants. Specifically, we demonstrate that linear Darboux transformations generate non-Abelian Volterra-type equations, while quadratic transformations yield two-component systems, including non-Abelian versions of the Ablowitz-Ladik, Merola-Ragnisco-Tu, and relativistic Toda equations. Using quasideterminants, we establish necessary conditions for factorising a higher-degree polynomial Darboux matrix with a specific linear Darboux matrix as a factor. This result enables the factorisation of quadratic Darboux matrices into pairs of linear Darboux matrices.
\end{abstract}

\section{Introduction}
Darboux transformations play a significant role in the study of integrable partial differential equations (PDEs). A Darboux transformation,
acting as an automorphism of the Lax representation of such equation, leads to a B\"acklund transformation, which is widely used to construct exact solutions. Furthermore, Darboux transformations can themselves be interpreted as integrable differential-difference equations (D$\Delta$Es). \par

Originally introduced by Darboux in the context of linear differential equations \cite{9b}, the framework was subsequently extended to nonlinear systems. Over time, Darboux transformations have become a powerful tool in the study of integrable PDEs, with a substantial body of literature dedicated to their applications.  While a full survey of references is beyond our scope, notable contributions relevant to this work include \cite{1, 80, 135, 154, 132, 131}.

Much of the research on Darboux transformations focuses on the dressing method, a procedure that generates new soliton solutions from known ones. In this paper, we analyse the Darboux transformations as integrable systems themselves, represented by D$\Delta$Es.
These equations effectively encode the relationships between solutions produced by the dressing method and can be viewed as discrete analogues of the original PDEs. By further discretising the system and ensuring the compatibility of two Darboux transformations, it is possible to obtain fully discrete equations, known as partial-difference equations (P$\Delta$Es).
This connection—linking PDEs, D$\Delta$Es, and P$\Delta$Es—is often referred to as the Lax-Darboux scheme.
For a detailed introduction to this framework, see \cite{133, 133b}.\par

In this paper, we explore the Lax-Darboux scheme for non-Abelian derivative nonlinear Schr\"odinger (DNLS) equations, constructing their Darboux transformations using a Darboux matrix that is polynomial in the spectral parameter. While aspects of this problem for classical DNLS equations have been addressed in prior works, notably in \cite{80, 179, 132, 103}, few studies have approached it from a non-commutative perspective. Furthermore, to our knowledge, none have treated the involved constants as non-commutative. Our objective is to provide a unified and rigorous treatment of this topic.

The classical DNLS equation possessing infinitely many conserved densities
is given by the following system  \cite{69, 79, 72}:
\begin{equation}
  \left\{\begin{array}{@{}l@{}}
       p_t = -p_{x x} + 2 \alpha p^2 q_x + 2 \beta p q p_x - \alpha (\beta - 2 \alpha)p^3 q^2\\
  q_t = q_{x x} + 2 \alpha q^2 p_x + 2 \beta p q q_x + \alpha (\beta - 2 \alpha)p^2 q^3
  \end{array}\right.
  \label{system}
\end{equation}
where $p = p(x,t)$ and $q = q(x,t)$ are two functions depending on real variables $x$ and $t$ and $\alpha, \beta \in \mathbb{C}$
are constants. Subscripts are used as shorthand notation for partial derivatives.
Specifying the constants in \eqref{system}, three historically significant cases are identified:
\begin{enumerate}
 \item The Kaup-Newell system \cite{64} arises  when $\alpha = 2$ and $\beta = 4$:
 \begin{equation*}
  \left\{\begin{array}{@{}l@{}}
      p_t = - p_{xx} + 4(p^2  q)_x\\
      q_t=   q_{xx} + 4(p q^2)_x
  \end{array}\right.\tag{A}\label{DNLS0I}
\end{equation*}
 \item The Chen-Lee-Liu system \cite{66} is obtained for $\alpha = 0$ and $\beta = 2$:
  \begin{equation*}
  \left\{\begin{array}{@{}l@{}}
      p_t = - p_{ xx} + 4 p q p_x\\
      q_t =  q_{ xx} + 4 p q q_x
  \end{array}\right.\tag{B}\label{DNLS0II}
\end{equation*}
 \item The Gerdjikov-Ivanov system \cite{115b} corresponds to $\alpha = -2$ and $\beta = 0$:
  \begin{equation*}
  \left\{\begin{array}{@{}l@{}}
      p_t = -p_{xx} - 4 p^2 q_x + 8  p^3 q^2\\
      q_t =  q_{ xx} - 4 q^2 p_x - 8 p^2 q^3
  \end{array}\right.\tag{C}\label{DNLS0III}
\end{equation*}
\end{enumerate}

Note that the DNLS equations \eqref{system} are homogeneous if we assign weight $1$ to the dependent variables $p$ and $q$, while the differentiations in $x$ and $t$ have weights $2$ and $4$ respectively.

Several papers have been dedicated to the study of integrable non-Abelian versions of the equations mentioned above, where the variables $p$ and $q$ take values in a non-commutative associative algebra. In \cite{72}, Olver and Sokolov presented a classification of integrable derivative nonlinear Schrödinger (DNLS) equations. The authors considered systems of the form
\begin{equation}
  \left\{\begin{array}{@{}l@{}}
       p_t = -p_{x x} + F(p,q,p_x,q_x)\\
       q_t = q_{x x} + G(p,q,p_x,q_x)
  \end{array}\right.
  \label{form}
\end{equation}
where $F$ and $G$ are polynomials of weight $5$ with respect to the weights assigned for \eqref{system}. They investigated systems admitting higher-order symmetries of weight $9$ and identified nine non-Abelian equations, up to rescaling of the variables, exchange $p \leftrightarrow q$, and an involution $\star$ (which, for matrix equations, corresponds to the matrix transpose), satisfying
\begin{equation}\label{transpose}
 (f^\star)^\star=f, \qquad (fg)^\star=g^\star f^\star.
\end{equation}

The integrability of these equations was established by Tsuchida and Wadati \cite{74}, who demonstrated that two of the equations in \cite{72} are linearisable, while the remaining seven possess Lax representations. The same set of integrable equations with Lax representations also appeared in \cite{115}, where Adler and Sokolov conducted a broader study of non-Abelian evolutionary systems.

Below, we explicitly list the seven non-Abelian integrable DNLS equations in \cite{72} admitting Lax representations:
\begin{align}
\allowdisplaybreaks
  & \left\{\begin{array}{@{}l@{}}
          p_t =  -  p_{xx} + 4(p   q   p)_x\\
         q_t =  q_{xx} + 4(q   p   q)_x
     \end{array}\right. \tag{A1} \label{DNLSI}\\
    & \left\{\begin{array}{@{}l@{}}
            p_t =-p_{xx} + 4 \left(p^2 q_{x}+p_{x} p q+p_{x} q p\right)+8 \left(p^3 q^2-p^2 q^2 p\right)\\
             q_t =q_{xx}+ 4 \left(p q q_{x}+q p q_{x}+p_{x} q^2\right)+8 \left(q p^2 q^3-p^2 q^3\right)
         \end{array}\right.  \tag{A2}  \label{DNLSI-2}\\
    & \left\{\begin{array}{@{}l@{}}
         p_t = -p_{xx} + 4 p_x q p\\
         q_t = q_{xx} + 4 q p q_x
     \end{array}\right. \tag{B1} \label{DNLSII}\\
    & \left\{\begin{array}{@{}l@{}}
      p_t = - p_{xx} +4\left(p    q   p\right)_x - 4( q_x   p^2 + q   p   p_x )  + 8( q   p^2  q   p - 2 q   p   q   p^2 + q^2   p^3)\\
      q_t = q_{xx}  + 4 \left(q    p   q\right)_x -4( q^2   p_x + q_x   q   p) + 8( 2 q^2   p   q  p  -  q   p   q^2   p - q^3   p^2 )
     \end{array}\right. \tag{B2} \label{DNLSII-2}\\
    & \left\{\begin{array}{@{}l@{}}
      p_t = - p_{xx} + 4 (p^2  q_x  + p_x p  q   - p    q_x   p ) + 8( p^3   q^2  -  p^2   q   p q -  p^2 q^2   p + p  q   p   q   p)\\
      q_t = q_{xx} +  4( p_x q^2 +p q q_x - q   p_x   q )+ 8(  p q   p   q^2 +  q p^2   q^2    -  p^2 q^3 -  q   p   q   p q )
     \end{array}\right. \tag{B3} \label{DNLSII-3}\\
    & \left\{\begin{array}{@{}l@{}}
         p_t = -p_{xx} - 4 p  q_x  p + 8 p  q  p  q  p\\
          q_t =  q_{xx} - 4 q  p_x  q - 8 q  p  q  p  q
     \end{array}\right.\tag{C1}\label{DNLSIII}\\
    & \left\{\begin{array}{@{}l@{}}
            p_t =-p_{xx} +4 (p_x q p- q p p_x- q_x p^2)+8 (q p^2 q p- q p q p^2+ q^2 p^3)\\
             q_t = q_{xx}+4 ( q p q_x- q^2 p_x- q_x q p)+8 (q^2 p q p- q p q^2 p- q^2 p^2)
         \end{array}\right. \tag{C2} \label{DNLSIII-2}
 \end{align}
Here, the labels indicate the connection to the commutative form of the equations. For example, \eqref{DNLS0I} represents the commutative form of both \eqref{DNLSI} and \eqref{DNLSI-2}.  These equations are the primary focus of this paper and are equivalent to those in \cite{72} under the transformations mentioned above.

In this paper, we present the known Lax representations of the above seven DNLS equations by solving a classification problem for specified forms of Lax pairs. We then
discuss their reduction group \cite{101} and gauge transformations. The main results follow in the subsequent sections. In Section \ref{S2}, we introduce fundamental definitions of Darboux transformations and examines their inverse and composition.  Instead of considering generic Darboux matrices for DNLS equations, Section \ref{S6} focuses on Darboux matrices polynomial in the spectral parameter $\lambda \in \mathbb{C}$ that inherit the same reduction group as the Lax representations. Specifically, we consider linear and quadratic rank-1 up Darboux matrices
in the following forms:

\begin{eqnarray}
 &&
 M_\uparrow(1) = \begin{pmatrix}
      f & 0 \\ 0 & 0
     \end{pmatrix} \lambda + \begin{pmatrix}
     0 &  f   p \\
      q_1   f & 0
\end{pmatrix},\label{DarBoux1}\\
 &&M_\uparrow(2)= \begin{pmatrix}
      \hh & 0 \\ 0 & 0
     \end{pmatrix} \lambda^2 + \begin{pmatrix}
     0 &  \hh   p \\  q_1   \hh & 0
     \end{pmatrix} \lambda + \begin{pmatrix}
      a & 0 \\ 0 & b
     \end{pmatrix}.\label{OriginalMatrix}
\end{eqnarray}
The term ``up" (or equivalently ``down" in Section \ref{inverse}) refers to the non-zero eigenvalue in the matrix coefficient for the highest power of $\lambda$.

The quasideterminants of these matrices enable us to compute their inverses and determine whether a polynomial Darboux matrix has the linear Darboux matrix \eqref{DarBoux1} as a factor.
In Section \ref{sec45}, we discuss the factorisation of quadratic Darboux matrices and derive a necessary condition for it.
We show that linear and quadratic rank-1 down-Darboux matrices are related to the inverses of their respective up-Darboux matrices.
Furthermore, the reduction leading to the 2-component Volterra lattice occurs precisely when the quadratic Darboux matrix factorises into two linear Darboux matrices.

In all considered cases of DNLS equations, the linear Darboux transformations with matrices \eqref{DarBoux1} can be cast in either the Volterra or the modified Volterra equation:
    \begin{align}
        &  u_x=   2 (\mu_{1} u_{1} u - u u_{-1}\mu_{-1}), \label{Va}\tag{V$_a$}\\
        & u_x=   2 u (  u_1 \mu_1 -  \mu u_{-1})u, \label{mVb} \tag{mV$_b$}
    \end{align}
where $\mu$ satisfies $\mu_x = 0$ and it is a non-commutative constant with respect to $x$.
Note that there is no direct Miura transformation between them, as discussed in \cite{112} and in Appendix \ref{Appb-V}.
The reductions of the quadratic Darboux transformations with matrix \eqref{OriginalMatrix} lead to the following two-component
integrable D$\Delta$Es:
\allowdisplaybreaks
    \begin{align}
    & \left\{\begin{array}{@{}l@{}}
   u_{x}= 2 \mu_1 u_1( \nu  - v u)\\
   v_{x}= 2 (v u  -  \nu) v_{-1} \mu
  \end{array}\right.\label{AL}\tag{AL}\\
  & \left\{\begin{array}{@{}l@{}}
         u_x = 2(\mu u_1\nu - u v u)\\
         v_x = 2(- \nu_{-1} v_{-1}\mu_{-1} + v u v )
    \end{array}\right.\label{MRT}\tag{MRT}\\
        &\left\{\begin{array}{@{}l@{}}
      u_x=   2  v_1 (u_1+\mu_1) u - 2 u v (u + \mu) \\
      v_x = 2 (v u v -  u_{-1} v_{-1} v)
  \end{array}\right. \label{N1} \tag{N$_1$}\\
        &   \left\{\begin{array}{@{}l@{}}
     u_x = 2 \left(u   \mu   v -  v_1  \mu_1  u \right) + 2  (v_1   u_1   v_1^{-1} u -  u   v^{-1} u_{-1}  v)\\
      v_x = 2(  v u -u_{-1}   v  )
  \end{array}\right. \label{N2} \tag{N$_2$}\\
  & \left\{\begin{array}{@{}l@{}}
     u_x = 2(u v_1 - v u + u u_1 \mu_1 - u \mu u)\\
     v_x = 2(v u \mu - \mu_{-1} v_{-1} u_{-1})
    \end{array}\right.\label{N3}\tag{N$_3$}\\
    & \left\{\begin{array}{@{}l@{}}
    u_{x}= 2 ( uv\mu - \mu_1v_1u) +2 (u\mu^{-1} u_{-1}\mu -\mu_1u_1\mu^{-1}_1u)\\
   v_{x}= 2(v u_{-1} - u v)
 \end{array}\right.\label{rT}\tag{rT}\\
 & \left\{\begin{array}{@{}l@{}}
   u_{x}= 2(\mu_1 v_1 u - u v \mu)\\
   v_{x}= 2(u v - v u_{-1})
  \end{array}\right.\label{2Va}\tag{2V$_a$}\\
    &\left\{\begin{array}{@{}l@{}}
   u_x = 2( v_1 \mu_1   u - u \mu v) \\
      v_x = 2( v  u -  u_{-1} v )
  \end{array}\right. \label{2Vb} \tag{2V$_b$}
    \end{align}
Here $\mu$ and $\nu$ are non-commutative constants satisfying $\mu_x=\nu_x=0$. The resulting D$\Delta$Es are considered up to rescaling of the variables\footnote{To maintain the same scaling as in the DNLS systems, we retain the factor $2$ rather than rescaling it to $1$.}, an involution $\star$ defined by \eqref{transpose} and another involution $\cI$ (reflection operator) satisfying
\begin{equation}\label{involp}
  \cI(p_{-n})=p_n, \quad\cI(q_{-n})=q_n, \quad \cI(f g)=\cI(f) \cI(g),
\end{equation}
where $f$ and $g$ are functions of $p$, $q$ and their shifts.\par

In the list above, several systems are generalisations of known non-Abelian integrable systems containing non-commutative constants. System \eqref{AL} corresponds to the Ablowitz-Ladik system \cite{178,177} and its non-Abelian version is given in \cite{26}. System \eqref{MRT} denotes the Merola-Ragnisco-Tu system \cite{163,162}. System \eqref{rT} represents the relativistic Toda equation \cite{174,175,176} and its non-Abelian version is introduced in \cite{164,165}. Systems \eqref{2Va} and \eqref{2Vb} are two versions of the 2-component Volterra lattice.
Notably, system \eqref{N2} shares the same commutative form (under the rescaling $u\mapsto -u$) as \eqref{rT}. It can therefore be regarded
as a second non-Abelian lift of the relativistic Toda lattice. To the best of our knowledge, systems \eqref{N1}, \eqref{N2} and \eqref{N3}  are novel and have not previously appeared in the literature.

We conclude the paper with  brief remarks on the implications and potential directions for future studies. We also include
an appendix exploring Miura transformations among the obtained integrable D$\Delta$Es.
Transformations among the Volterra-related lattices including systems \eqref{2Va} and \eqref{2Vb} are discussed in Section \ref{Appb-V}.
In Section \ref{Appb-Q1}, we show that
system \eqref{AL} can be transformed into systems \eqref{N1} and \eqref{N2} via Miura transformations. Finally, we derive Miura transformations connecting the following generalisation of the Kaup lattice \cite{133,26} with a non-commutative constant $\mu$:
\begin{equation}
    \left\{\begin{array}{@{}l@{}}
         u_x = 2(\mu_{1}u_{1} - u \mu)(u + v)\\
         v_x = 2(u + v)(\mu v - v_{-1} \mu_{-1})
    \end{array}\right.
      \label{K}\tag{K}
\end{equation}
and system \eqref{MRT} to system \eqref{N3} in Section \ref{Appb-Q2}.

\section{Lax representations and gauge transformations}\label{Sn3}
Given an evolutionary PDE considered to be S-integrable \cite{134}, one associates to it a pair of linear operators $D_x-U$ and $D_t-V$,  defining the following auxiliary system
\begin{equation}
 \left\{\begin{array}{@{}l@{}}
     \Phi_x = U(p,q;\lambda) \Phi\\
     \Phi_t = V(p,q;\lambda) \Phi
 \end{array}\right.
 \label{L1}
\end{equation}
where $\Phi$ is a fundamental solution. Here $U$ and $V$ are square matrices whose entries depend on
the variables $p, q$, their $x$-derivatives, and certain rational (or, in some cases, elliptic) functions of
the spectral parameter $\lambda \in \mathbb{C}$. The compatibility condition of the auxiliary system, given by
\begin{equation}
 U_t- V_x +[U,V] = 0,
 \label{L2}
 \end{equation}
 where the bracket $[\cdot,\ \cdot]$ denotes the matrix commutator, is equivalent to the original evolutionary equation. Equation \eqref{L2} is also referred to as the zero curvature representation,
 and the pair $(U,V)$ is conventionally called a Lax representation of the original evolutionary equation.

Given an invertible matrix $G$, we perform a gauge transformation of the fundamental solution as $\Phi=G \Psi$. The new auxiliary system \eqref{L1} in $\Psi$ becomes
\begin{equation*}
 \left\{\begin{array}{@{}l@{}}
    \Psi_x = U'\Psi\\
     \Psi_t = V' \Psi
 \end{array}\right.
\end{equation*}
and it is associated with a new Lax representation
\begin{equation}
 U \mapsto  U' = G^{-1} U G - G^{-1}G_x, \qquad
    V \mapsto  V' = G^{-1} V G - G^{-1}G_t.
 \label{Gauge}
\end{equation}
Under this transformation, the zero curvature condition (and the equivalent evolutionary PDE) remains invariant \cite{11}. In this paper, especially in Section \ref{S6}, we employ gauge transformations to refine the obtained Lax representations and eliminate specific dependent variables (interpreted as gauge degrees of freedom).

\subsection{Lax representation of the DNLS equations}\label{S3}
In this section, we conduct a small-scale classification study for the Lax representations of the DNLS equations. Given two $2 \times 2$ matrices $U$ and $V$ of a specific form associated with DNLS equations, we classify all
cases that satisfy the zero curvature condition (\ref{L2}). Building on this classification, we present the Lax representations for the DNLS equations \eqref{DNLSI}-\eqref{DNLSIII-2}.

These representations were systematically presented by Tsuchida and Wadati in \cite{74}, although some of the equations had alternative Lax representations known prior to the cited article. Different but equivalent Lax representations for the same models can also be found in \cite{115}.

Consider $2 \times 2$ matrices $U$ and $V$ in the following form:
\begin{subequations}
\begin{align}
    U = & 
    {I}\lambda^2 + 2
    J \lambda+ 2 P, \label{LaxP1-A} \\
     V = & -2I\lambda^4 - 4 J  \lambda^3 + 4 IJJ \lambda^2  + 2 Q \lambda - 2 R, \label{LaxP1-B}
\end{align}
\label{LaxP1}
\end{subequations}
where the matrices $I$ and $J$ are given by
\begin{equation}
 I = \begin{pmatrix}
                 1 & 0 \\ 0 & -1
                \end{pmatrix}, \qquad
 J = \begin{pmatrix}
                 0 & p \\ q & 0
                \end{pmatrix} .
    \label{mat}
\end{equation}
Here, $P$ and $R$ are diagonal matrices, while $Q$ has only off-diagonal non-zero entries. All their entries are functions of $p$, $q$ and their $x$-derivatives. Note that $I$ corresponds to the third Pauli matrix.

Assuming that $U$ and $V$ in \eqref{LaxP1} form a Lax representation for a PDE, we deduce the constraints among the undetermined matrices $P, Q$ and $R$ such that they satisfy \eqref{L2}. These conditions allow us to explore the most general dynamics associated with a Lax representation of the form \eqref{LaxP1}.
\begin{prop}\label{prop1}
The matrices $U$ and $V$ defined by \eqref{LaxP1} form a Lax representation for the PDE
\begin{equation}
 J_t + I J_{xx} + 2 I[J,P_x] + 4 I[J_x,P] - 4 (J^3)_x + 8 [P,J^3] + 4 I[P,[P,J]] - 2[J,R] = 0
 \label{mostgenDNLS}
\end{equation} 
if and only if $Q$ and $R$ in \eqref{LaxP1-B} satisfy the conditions
 \begin{eqnarray}
 &&Q = 2I [P,J] + 4 J^3 -  I J_x,
 \label{Q}\\
&& P_t + R_x = 2[P,R] .
 \label{cond}
\end{eqnarray}
\end{prop}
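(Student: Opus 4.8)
The plan is to substitute the given forms \eqref{LaxP1-A}, \eqref{LaxP1-B} directly into the zero curvature condition \eqref{L2} and collect the coefficients of each power of $\lambda$. Since $U$ is a polynomial of degree $2$ and $V$ a polynomial of degree $4$ in $\lambda$, the product $[U,V]$ has degree $6$, while $U_t$ has degree $2$ and $V_x$ degree $4$; thus \eqref{L2} decomposes into a finite list of matrix equations, one for each power $\lambda^k$ with $k$ running from $0$ up to $6$. First I would record the elementary facts that will be used repeatedly: $I^2 = \mathbbm{1}$, $I$ is diagonal and constant, $J$ is purely off-diagonal so $IJ = -JI$ (hence $IJ^2 = J^2 I$ and $[I,J^2]=0$), $P$ and $R$ are diagonal hence commute with $I$, and $Q$ is purely off-diagonal hence anticommutes with $I$. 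These parity rules make most commutators collapse quickly.

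The top-degree coefficients should vanish identically and serve as a consistency check: at $\lambda^6$ one gets $[I,-2I]=0$ automatically; at $\lambda^5$ the terms $[I,-4J]+[2J,-2I]$ cancel because $[I,J]$ appears with opposite signs; at $\lambda^4$ the terms involving $[I,4IJ^2]$, $[2J,-4J]$, $[2P,-2I]$ and $-V_x$'s leading part $-(-4J_x\cdot 0)$... more carefully, $-V_x$ contributes $+4J_x\lambda^3 - 8IJJ_x\cdot\lambda^2\cdot(\text{chain rule})$ etc., so I would just carefully expand $V_x = -4J_x\lambda^3 + 4(IJ^2)_x\lambda^2 + 2Q_x\lambda - 2R_x$ and $U_t = 2J_t\lambda + 2P_t$. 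Then the $\lambda^4$ and $\lambda^3$ equations, after using the parity rules, should force exactly the relation \eqref{Q} for $Q$: the $\lambda^3$ equation reads (schematically) $4J_x + 2[I,2P]\cdot(\cdots) + [2J,4IJ^2] + [2P,-4J] = 0$ together with the piece coming from $[U,V]$ at order $\lambda^3$, and solving it for $Q$ — which first enters at order $\lambda^3$ through $[I\lambda^2, 2Q\lambda]$ — yields $Q = 2I[P,J] + 4J^3 - IJ_x$.

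Next I would look at the $\lambda^1$ and $\lambda^0$ coefficients. The $\lambda^0$ equation is $2P_t - (-2R_x) + [2P,-2R] = 0$, i.e. $P_t + R_x = 2[P,R]$, which is precisely \eqref{cond}. The $\lambda^1$ equation, after substituting the expression \eqref{Q} for $Q$ and its $x$-derivative, and using the parity identities to simplify all the commutators $[I,\cdot]$, $[J,\cdot]$, $[P,\cdot]$, should reduce exactly to the evolution equation \eqref{mostgenDNLS} for $J$: the term $2J_t$ comes from $U_t$; $IJ_{xx}$ and the $-(IJ_x)_x$-type contributions come from $-2R_x$'s... no — from $2Q_x = 2(2I[P,J] + 4J^3 - IJ_x)_x$, which produces $-2IJ_{xx}$ (sign to be tracked), $8(J^3)_x$, and $4I[P_x,J]+4I[P,J_x]$; the bracket $[U,V]$ at order $\lambda^1$ contributes $[2P,2Q] + [2J,-2R]$-type terms giving $8[P,J^3]+4I[P,[P,J]]$ and $-4[J,R]$... and so on, so that after collecting everything and multiplying through by the appropriate sign one lands on \eqref{mostgenDNLS}. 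Conversely, assuming \eqref{Q} and \eqref{cond} hold, the same computation run backwards shows that \eqref{L2} is equivalent to \eqref{mostgenDNLS}, giving the ``if and only if''.

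The main obstacle is purely bookkeeping: correctly expanding $[U,V]$ into its seven $\lambda$-homogeneous pieces and tracking every sign and every factor of $2$ through the chain-rule derivatives of $Q$ and of $IJ^2$, while consistently applying the anticommutation $IJ=-JI$ to rewrite things like $I[P,J]$, $[J,IJ^2]$, $[Q,J]$ in a canonical form. There is no conceptual difficulty — the parity grading of the matrices ($I,P,R$ even; $J,Q$ odd) guarantees that each $\lambda$-coefficient equation is either automatically zero or solvable for the next unknown — but the risk of a dropped term is real, so I would organise the computation by first listing, for each power of $\lambda$, exactly which terms of $U_t$, $V_x$, and $[U,V]$ contribute, and only then simplify. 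I expect the $\lambda^2$ equation to be the most delicate cross-check: it should hold identically once \eqref{Q} is imposed, with the $(IJ^2)_x$ term from $V_x$ cancelling against $[2P,-4IJ^2]+[2J,2Q]$ after the parity rules are applied, and verifying this cancellation is the step where an error would most likely surface.
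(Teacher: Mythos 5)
Your proposal is correct and follows essentially the same route as the paper's proof: substitute \eqref{LaxP1} into the zero curvature condition \eqref{L2}, collect powers of $\lambda$, use the diagonal/off-diagonal parity rules with $I$, solve the $\lambda^3$ coefficient for $Q$ to get \eqref{Q}, check that the $\lambda^2$ coefficient then holds identically, and read off \eqref{mostgenDNLS} and \eqref{cond} from the $\lambda^1$ and $\lambda^0$ coefficients. The only remaining work is the sign and factor bookkeeping you flag, which the paper carries out in exactly the way you outline.
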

\begin{proof}
Recall that $P$ and $R$ are diagonal matrices, while $Q$ has off-diagonal non-zero entries. The following commutation and anti-commutation relations hold:
 \begin{align*}
   [I, A] = I A - A I = 0, & \qquad  \text{if $A$ is a diagonal matrix };\\
   \{I, B\} = I B + B I = 0, & \qquad \text{if $B$ is an off-diagonal matrix}.
 \end{align*}
Additionally, it is obvious that $I_t = I_x = 0$ and that $I^2 = \mathds{1}$, the identity matrix.

The statement is then proven by substituting \eqref{LaxP1} into \eqref{L2} and collecting the coefficients of the resulting polynomial in $\lambda$. Using the properties above, we obtain the following equations:
\begin{subequations}
 \begin{align}
  & \lambda^3: \qquad  J_{x}- 4 I J J J+ 2 J P- 2 P J+ I Q = 0; \label{syst3}\\
  & \lambda^2: \qquad I J J_{x}+I J_{x} J + 2 I J J P-2 I P J J- J Q + Q J = 0; \label{syst2}\\
  & \lambda^1: \qquad J_{t}-Q_{x}-2 J R+2 R J+2 P Q-2 Q P = 0; \label{syst1}\\
  & \lambda^0: \qquad  P_{t}+R_{x}-2 P R+2 R P = 0. \label{syst0}
 \end{align}
\end{subequations}
Multiplying \eqref{syst3} by $I$ on the left, we recover $Q$ as it appears in \eqref{Q}.
For such $Q$, \eqref{syst2} is also satisfied identically. Moreover, the $x$-derivative of $Q$ is
\begin{equation*}
 Q_x = -I J_{x x} +2 I \left( P J_{x}+ P_{x} J - J P_{x}- J_{x} P\right)+4 \left(J J J_{x}+J J_{x} J+J_{x} J J\right).
\end{equation*}
Substituting this into \eqref{syst1} yields equation \eqref{mostgenDNLS}. Finally, the coefficient \eqref{syst0} corresponds to the condition \eqref{cond}.
\end{proof}
\begin{remark}
The equivalent version of Proposition \ref{prop1} for the commutative DNLS equations is analogous. We can consider the same statement, with the additional constraint that diagonal matrices commute.
Such condition significantly simplifies \eqref{cond}, which reduces to:
\begin{equation*}
 P_t + R_x  = 0.
\end{equation*}
\end{remark}

It follows from Proposition \ref{prop1} that we need to determine the diagonal matrices $P={\rm diag}(P_{ii})_{2\times 2}$ and $R={\rm diag}(R_{ii})_{2\times 2}$ to obtain Lax representations for (\ref{mostgenDNLS}). Due to the homogeneity of DNLS equations, we take $P_{ii}$ and $R_{ii}$, $i=1,2$, to be homogeneous polynomials of $p$, $q$ and their $x$-derivatives with weights $2$ and $4$ respectively. Thus we let
\begin{subequations}
    \begin{align}
        & P_{ii} = \al_i^{(1)} p^2 + \al_i^{(2)} p q + \al_i^{(3)} q p + \al_i^{(4)} q^2,\\
        & R_{ii} = \be_i^{(1)} p^4+ \be_i^{(2)}p^3q+ \be_i^{(3)} p^2qp+ \be_i^{(4)} pqp^2+ \be_i^{(5)} qp^3+ \be_i^{(6)}p^2q^2 + \notag  \\ &\qquad + \be_i^{(7)} pqpq+ \be_i^{(8)} pq^2p+ \be_i^{(9)} q p^2 q+ \be_i^{(10)} qpqp+ \be_i^{(11)} q^2p^2 + \notag \\ &\qquad+ \be_i^{(12)} q^3p+ \be_i^{(13)} q^2pq+ \be_i^{(14)} qpq^2+ \be_i^{(15)} pq^3+ \be_i^{(16)} q^4 + \\ &\qquad+ \be_i^{(17)}p_x p+ \be_i^{(18)} p p_x+ \be_i^{(19)} p_x q+ \be_i^{(20)} p q_x + \be_i^{(21)} q_x p + \notag  \\ &\qquad +\be_i^{(22)} q p_x +  \be_i^{(23)}q_x q + \be_i^{(24)}q q_x, \notag
    \end{align}
    \label{ans}
\end{subequations}
where $\al_i^{(j)}, \be_i^{(l)}\in \mathbb{C}$ are constants with $i=1,2$, $j=1,\dots, 4$ and $l=1,\dots, 24$. We determine their
values using Proposition \ref{prop1}.
The condition \eqref{cond} reduces most of the parameters in \eqref{ans} to zeros. The remaining non-zero parameters are $\al_1^{(2)}, \al_1^{(3)}, \al_2^{(2)}$ and $\al_2^{(3)}$, determined by the following algebraic system:
\begin{equation}
    \left\{\begin{array}{@{}l@{}}
         \al_1^{(2)}(\al_1^{(2)}+1)=0\\
         \al_2^{(3)}(\al_2^{(3)}-1)=0\\
         \al_1^{(3)}\al_2^{(2)}=0\\
         \al_2^{(2)}(1+2 \al_1^{(2)} - \al_2^{(2)}) = 0\\
         \al_1^{(3)} (1+\al_1^{(3)}-2 \al_2^{(3)}) =  0
    \end{array}\right.
    \label{Equa}
\end{equation}
The expressions in \eqref{ans} become
    \begin{align*}
        & P_{11} = \al_1^{(2)} p q + \al_1^{(3)} q p;\\
        & P_{22} = \al_2^{(2)} p q + \al_2^{(3)} q p; \\
        & R_{11} =\left(4 (\al_2^{(3)}-1) \al_1^{(2)}-2 (\al_1^{(2)})^2\right) p q p q+2 \al_1^{(3)} (2 \al_2^{(2)}-\al_1^{(2)}) q p^2 q+\\  & \qquad +\al_1^{(3)} q p_x-\al_1^{(3)} q_x p-2 \al_1^{(2)} \al_1^{(3)} p q^2 p-2 \al_1^{(3)} (2 \al_1^{(2)}-\al_1^{(3)}+2) q p q p +\notag\\  & \qquad+4 \al_1^{(2)} \al_2^{(2)} p^2 q^2-\al_1^{(2)} p q_x+\al_1^{(2)} p_x q-4 (\al_1^{(3)})^2 q^2 p^2;\notag\\
        & R_{22} = \al_2^{(3)} q p_x-\al_2^{(3)} q_x p-2 \al_2^{(3)} (-\al_2^{(3)}+2 \al_1^{(2)}+2) q p q p +\\  & \qquad +p q^2 p (2 \al_2^{(3)} \al_2^{(2)}-4 \al_1^{(3)} \al_2^{(2)})-4 \al_2^{(3)} \al_1^{(3)} q^2 p^2 +4 (\al_2^{(2)})^2 p^2 q^2 + \notag\\  & \qquad-2 \al_2^{(2)} (-2 \al_2^{(3)}+\al_2^{(2)}+2) p q p q+2 \al_2^{(3)} \al_2^{(2)} q p^2 q+\al_2^{(2)} p_x q  -\al_2^{(2)} p q_x.\notag
    \end{align*}
Solving system \eqref{Equa} and determining the associated DNLS equations, we obtain the following statement:
\begin{thm}\label{thm1}
 Up to rescaling of the variables, exchange $p \leftrightarrow q$, and the involution $\star$ defined in \eqref{transpose},
 the DNLS equations \eqref{DNLSI}-\eqref{DNLSIII-2} are the only seven systems admitting a Lax representation with the matrices $U$ and $V$ of the form \eqref{LaxP1} and $P$, $ R$ given by \eqref{ans}.
\end{thm}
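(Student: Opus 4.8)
The plan is to start from Proposition~\ref{prop1}: since $U$ and $V$ are already assumed of the form \eqref{LaxP1}, equation \eqref{Q} merely defines $Q$, and the only genuine constraint on the remaining freedom is \eqref{cond}, namely $P_t+R_x=2[P,R]$. As the excerpt has already carried out, substituting the homogeneous ansatz \eqref{ans} into \eqref{cond} and collecting coefficients reduces everything to the algebraic system \eqref{Equa} in the four surviving constants $\al_1^{(2)},\al_1^{(3)},\al_2^{(2)},\al_2^{(3)}$, after which $P$ and $R$ are given by the displayed formulas for $P_{ii},R_{ii}$. Hence the proof of Theorem~\ref{thm1} reduces to three tasks: (i) list all solutions of \eqref{Equa}; (ii) for each solution, read off the corresponding evolutionary system from \eqref{mostgenDNLS}; (iii) sort the resulting systems into equivalence classes under rescalings of the variables, the exchange $p\leftrightarrow q$, and the involution $\star$ of \eqref{transpose}, and check that exactly seven classes occur, represented by \eqref{DNLSI}--\eqref{DNLSIII-2}.

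For step (i), the first two equations of \eqref{Equa} factor and force $\al_1^{(2)}\in\{0,-1\}$ and $\al_2^{(3)}\in\{0,1\}$. Fixing one of these four combinations, the fourth and fifth equations each become a quadratic with one root equal to $0$ (in $\al_2^{(2)}$ and $\al_1^{(3)}$ respectively), while the third equation $\al_1^{(3)}\al_2^{(2)}=0$ forbids both of these constants from being simultaneously nonzero. A short case-by-case inspection then yields exactly twelve admissible tuples $(\al_1^{(2)},\al_1^{(3)},\al_2^{(2)},\al_2^{(3)})$. For step (ii), given such a tuple the diagonal matrix $P=\mathrm{diag}(P_{11},P_{22})$ is known, $Q$ is read off from \eqref{Q} and $R$ from the displayed formulas, and extracting the two off-diagonal entries of \eqref{mostgenDNLS} produces a system $p_t=-p_{xx}+F$, $q_t=q_{xx}+G$ with $F,G$ homogeneous of weight $5$; this is a mechanical (ideally computer-algebra-assisted) computation repeated twelve times. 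In particular the tuple $(0,0,0,0)$ gives $P=R=0$ and recovers \eqref{DNLSI} directly, which is a useful sanity check on the normalisations.

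For step (iii), one applies the permitted transformations — scalings $p\mapsto cp,\ q\mapsto c^{-1}q$ together with the compatible rescalings of $x$ and $t$, the swap $p\leftrightarrow q$, and the transpose $\star$ — to collapse the twelve systems, verifying that they organise into precisely seven orbits and that these orbits are exactly \eqref{DNLSI}, \eqref{DNLSI-2}, \eqref{DNLSII}, \eqref{DNLSII-2}, \eqref{DNLSII-3}, \eqref{DNLSIII} and \eqref{DNLSIII-2}, each occurring at least once. The cleanest presentation is a table listing, for each of the twelve tuples, the computed $(F,G)$ and the member of \eqref{DNLSI}--\eqref{DNLSIII-2} it is equivalent to.

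The main obstacle is step (iii): systems coming from different parameter tuples are frequently the same equation in disguise — typically related by $p\leftrightarrow q$ composed with a sign rescaling, or by $\star$ — so one must track these identifications carefully, neither merging inequivalent systems nor overlooking a coincidence, and then confirm that the orbit count is exactly seven. It is also worth noting, as a consistency check against the Olver--Sokolov and Tsuchida--Wadati results, that none of the twelve systems is one of the two linearisable equations of \cite{72}: those admit no Lax pair of the restricted shape \eqref{LaxP1}, which is precisely why only the seven classes survive. The remaining steps are routine: the solution of \eqref{Equa} is elementary, and degeneracy is never an issue since the leading terms $-p_{xx}$ and $q_{xx}$ are present for every solution.
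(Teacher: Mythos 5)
Your proposal is correct and follows essentially the same route as the paper: take the reduction to the algebraic system \eqref{Equa} as already established, enumerate its twelve solutions (your counting argument via $\al_1^{(2)}\in\{0,-1\}$, $\al_2^{(3)}\in\{0,1\}$ and the exclusivity constraint $\al_1^{(3)}\al_2^{(2)}=0$ reproduces exactly the paper's list), compute the resulting system \eqref{mostgenDNLS} for each, and collapse the twelve systems into seven equivalence classes under rescaling, $p\leftrightarrow q$ and $\star$. The paper's proof is precisely this case-by-case verification, stating which seven tuples give \eqref{DNLSI}--\eqref{DNLSIII-2} directly and which five are related to them by the allowed transformations.
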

\begin{proof}
Directly solving system \eqref{Equa}, we obtain 12 cases:
    \begin{align*}
        & ({\rm i}).\quad \al_1^{(2)}=\al_1^{(3)}=\al_2^{(2)}=\al_2^{(3)}=0; && ({\rm ii}).\quad \al_1^{(2)}=\al_1^{(3)}=\al_2^{(2)}=0,\  \al_2^{(3)}=1; \\
        & ({\rm iii}).\quad \al_1^{(2)}=\al_1^{(3)}=\al_2^{(3)}=0,\ \al_2^{(2)}=1; && ({\rm iv}).\quad \al_1^{(2)}=\al_2^{(2)}=\al_2^{(3)}=0, \ \al_1^{(3)}=-1; \\
       & ({\rm v}).\quad \al_1^{(3)}=\al_2^{(2)}=\al_2^{(3)}=0,\ \al_1^{(2)}=-1; && ({\rm vi}).\quad \al_1^{(2)}=\al_2^{(2)}=0,\ \al_1^{(3)}=\al_2^{(3)}=1; \\
        & ({\rm vii}).\quad \al_1^{(2)}=\al_1^{(3)}=0,\ \al_2^{(2)}=\al_2^{(3)}=1; && ({\rm viii}).\quad \al_1^{(2)}=\al_1^{(3)}=-1,\ \al_2^{(2)}=\al_2^{(3)}=0; \\
        & ({\rm ix}).\quad \al_1^{(2)}=\al_2^{(2)}=-1,\ \al_1^{(3)}=\al_2^{(3)}=0; && ({\rm x}).\quad \al_1^{(2)}=-1,\ \al_1^{(3)}=\al_2^{(2)}=0,\ \al_2^{(3)}=1; \\
        & ({\rm xi}).\quad \al_1^{(2)}=-1,\ \al_2^{(2)}=0,\ \al_1^{(3)}=\al_2^{(3)}=1; && ({\rm xii}).\quad \al_1^{(2)}=\al_2^{(2)}=-1,\ \al_1^{(3)}=0,\ \al_2^{(3)}=1.
    \end{align*}
We now derive system \eqref{mostgenDNLS} case by case. Direct computation shows that cases
(i), (vi), (v), (iv) (xii), (x) and (viii) yield the DNLS equations \eqref{DNLSI}-\eqref{DNLSIII-2}, respectively.
Case (iii) corresponds to equation
\begin{equation*}
 \left\{\begin{array}{@{}l@{}}
             p_t =-p_{xx} +4 (p q p)_x -4 \left( p^2 q_x+ p_x p q \right)+8 \left( p^3 q^2-2 p^2 q p q+ p q p^2 q\right)\\
             q_t =\ \ q_{xx} +4 (q p q)_x -4\left( p q q_x+ p_x q^2\right)-8 \left( p^2 q^3 -2 p q p q^2+ p q^2 p q\right)
        \end{array}\right.
\end{equation*}
which is equivalent to equation \eqref{DNLSII-2} under the involution $\star$. Similarly, cases (ii), (vii), (ix) and (xi) are related to equations \eqref{DNLSII}, \eqref{DNLSIII-2}, \eqref{DNLSI-2} and \eqref{DNLSII-3}, respectively, via suitable transformations. This completes the proof.
\end{proof}
Following the above theorem, we list the Lax representations of the seven DNLS equations \eqref{DNLSI}-\eqref{DNLSIII-2}, which we are going to use subsequently. Most are found in \cite{74}. We explicitly provide only the expressions for $P$ and $R$ as $Q$ can be derived from \eqref{Q}.\\
The Lax representation of \eqref{DNLSI} is
\begin{equation*}
 \begin{gathered}
  P =  
    \begin{pmatrix}
            0 & 0 \\
            0 & 0\\
    \end{pmatrix},   \qquad
    R = \begin{pmatrix}
            0 & 0\\
            0 &  0
    \end{pmatrix} .
 \end{gathered}
 \tag{Lax-A1}
\label{LaxA1}
\end{equation*}
The Lax representation of \eqref{DNLSI-2} is
\begin{equation*}
 \begin{gathered}
  P =  
    \begin{pmatrix}
            - p q & 0 \\
            0 & - p q\\
    \end{pmatrix},   \qquad
    R = \begin{pmatrix}
             p q_{x}-p_{x} q+4 p^2 q^2+2 (p q)^2 & 0 \\ 0 & p q_{x}-p_{x} q+4 p^2 q^2+2 (p q)^2
        \end{pmatrix}.
 \end{gathered}
 \tag{Lax-A2}
\label{LaxA2}
\end{equation*}
The Lax representation of \eqref{DNLSII} is
\begin{equation*}
 \begin{gathered}
  P =  
    \begin{pmatrix}
             -p q & 0 \\
            0 & 0\\
    \end{pmatrix},  \qquad
    R = \begin{pmatrix}
            p q_x - p_x q  +2 (p q)^2 & 0 \\ 0 & 0
        \end{pmatrix}.
 \end{gathered}
 \tag{Lax-B1}
\label{LaxB1}
\end{equation*}
The Lax representation of \eqref{DNLSII-2} is
\begin{equation*}
 \begin{gathered}
  P =  
    \begin{pmatrix}
            -q p & 0 \\
            0 & 0\\
    \end{pmatrix},  \qquad
    R = \begin{pmatrix}
            q_x p-q p_x + 6 (q p)^2 - 4 q^2 p^2 & 0\\
            0 & 0
    \end{pmatrix}.
 \end{gathered}
 \tag{Lax-B2}
\label{LaxB2}
\end{equation*}
The Lax representation of \eqref{DNLSII-3} is
\begin{equation*}
 \begin{gathered}
  P =  
    \begin{pmatrix}
            -p q & 0 \\
            0 & q p - p q\\
    \end{pmatrix}, \qquad
    \\
    R = \begin{pmatrix}
             p q_x - p_x q+ 4 p^2 q^2 - 2(p q)^2 & 0\\ 0 & \begin{array}{l}p q_x + q p_x - p_x q - q_x p + 2 (q p)^2\\ - 2 (p q)^2 -2 p q^2 p - 2 q p^2 q + 4 p^2 q^2\end{array}
    \end{pmatrix} .
 \end{gathered}
 \tag{Lax-B3}
\label{LaxB3}
\end{equation*}
The Lax representation of \eqref{DNLSIII} is
\begin{equation*}
 \begin{gathered}
  P =  
    \begin{pmatrix}
            -p q & 0 \\
            0 & q p\\
    \end{pmatrix}, \qquad
    R = \begin{pmatrix}
             p q_x - p_x q - 2(p q)^2 & 0\\ 0 &  q p_x - q_x p + 2 (q p)^2
    \end{pmatrix}.
 \end{gathered}
 \tag{Lax-C1}
\label{LaxC1}
\end{equation*}
Finally, The Lax representation of \eqref{DNLSIII-2} is
\begin{equation*}
 \begin{gathered}
  P =  
    \begin{pmatrix}
            -q p - p q & 0 \\
            0 & 0\\
    \end{pmatrix},   \qquad
    R = \begin{pmatrix}
             \begin{array}{l}p q_x-q p_x-p_x q+q_x p-2 p q^2 p\\-2 q p^2 q + 2 (p q)^2 + 2 (q p)^2- 4 q^2 p^2\end{array} & 0 \\ 0 & 0
        \end{pmatrix}.
 \end{gathered}
 \tag{Lax-C2}
\label{LaxC2}
\end{equation*}
The reduction group is a powerful tool to encapsulate the symmetry properties of Lax representations. This has proven useful both for classifying known Lax pairs and discovering new ones.  Originally introduced in \cite{138b,31,101}, the concept of reduction group
has yielded numerous insights, including those in \cite{78,139}.
The generic Lax representation $(U,V)$ presented in \eqref{LaxP1} is invariant under the following action:
\begin{equation}
 U(\lambda) \mapsto I U(-\lambda) I, \qquad V(\lambda) \mapsto I V(-\lambda) I, \label{reduction1}
\end{equation}
where $I$ is defined in \eqref{mat}.
The above group action is isomorphic to $\mathbb{Z}_2$, and it defines a reduction group for all DNLS Lax representations.

\subsection{Gauge transformations for DNLS equations}\label{S5}
The concept of a gauge transformation for non-Abelian DNLS equations was introduced by Tsuchida and Wadati in \cite{74} to prove the integrability of their matrix valued versions.
In this paper, we adapt this gauge transformation technique to simplify Darboux matrices.

Consider a gauge transformation \eqref{Gauge} defined by the matrix
\begin{equation}
 G = \begin{pmatrix}
      A & 0 \\ 0 & B
     \end{pmatrix},
     \label{key}
\end{equation}
where $A$ and $B$ are non-zero functions depending on $p$ and $q$. Following \eqref{Gauge}, the Lax representation \eqref{LaxP1} becomes
\begin{subequations}
\begin{align}
 U' & =  
    I \lambda^2 + 2
    \begin{pmatrix}
            0 & A^{-1} p B \\
            B^{-1} q  A & 0\\
    \end{pmatrix} \lambda+ 2  G^{-1} P G - G^{-1} G_x,\\
    V' & = -2I \lambda^4 -4 \begin{pmatrix}
            0 &  A^{-1} p B \\
            B^{-1} q A & 0\\
    \end{pmatrix}  \lambda^3  + 4 \begin{pmatrix}
            A^{-1} p q  A & 0 \\
            0 & - B^{-1} q  p B \\
    \end{pmatrix} \lambda^2 + \\ & \qquad  +2 G^{-1} Q G  \lambda -2 G^{-1} R G -G^{-1} G_t.\notag
    \end{align}
    \label{30}
\end{subequations}
If we introduce a transformation of the dependent variables:
\begin{equation}
 p \mapsto p' = A^{-1} p B , \qquad q \mapsto q' = B^{-1} q A,
 \label{para}
\end{equation}
we can express the gauged pair $(U', V')$ in a form analogous to \eqref{LaxP1}:
\begin{subequations}
\begin{align}
 U' & =  
    I \lambda^2 + 2
    J'  \lambda+ 2 P',\\
    V' & = -2 I \lambda^4 - 4  J' \lambda^3 + 4 I J' J' \lambda^2 + 2 Q' \lambda - 2   R'.
    \end{align}
    \label{LaxP2}
\end{subequations}
Here, $J' = J(p',q')$ is defined similarly to $J$ in \eqref{mat}. The matrices $P'$ and $R'$ are:
\begin{subequations}
 \begin{align}
  & P' = \begin{pmatrix}
       A^{-1} \tilde P_{11} A - \frac{1}{2} A^{-1} A_x & 0 \\ 0 & B^{-1} \tilde P_{22} B - \frac{1}{2} B^{-1} B_x
      \end{pmatrix},\\
      & R' = \begin{pmatrix}
       A^{-1} \tilde R_{11} A + \frac{1}{2} A^{-1} A_t & 0 \\ 0 & B^{-1} \tilde R_{22} B + \frac{1}{2} B^{-1} B_t
      \end{pmatrix},
 \end{align}
 \label{ultima}
\end{subequations}
where we present a function $f(p,q)$ with a tilde as $\tilde{f}=\tilde{f}(p',q')= f(A p' B^{-1}, B q' A^{-1})$. The transformed matrix $Q' = G^{-1} \tilde Q G$ satisfies \eqref{Q} for $P'$ and $R'$.

In matrix notation, this gauge transformation simplifies to:
\begin{equation*}
 J \mapsto J' = G^{-1} \tilde J G, \qquad P \mapsto P' = G^{-1} \tilde P G - \frac{1}{2} G^{-1} G_x, \qquad 
 R \mapsto R' = G^{-1} \tilde R G + \frac{1}{2} G^{-1} G_t.
\end{equation*}
For \eqref{LaxP2} to be a Lax representation of a DNLS equation, every entry of $P'$ and $R'$ must be expressed in closed polynomial form in $p', q'$ and their $x$-derivatives. This ensures that the gauge transformation maps Lax pairs of the form \eqref{LaxP1} to Lax pairs of the same form.

Given the Lax matrices $U$ \eqref{LaxP1} and $U'$ \eqref{LaxP2} defined by $P_{11}$, $P_{22}$ and $P'_{11}$, $P'_{22}$ respectively, both representing DNLS equations, a gauge transformation given by \eqref{key} and \eqref{para} is determined by
\begin{subequations}
    \begin{align}
        & A^{-1} A_{x} = 2(A^{-1} \tilde{P}_{11} A- P'_{11}); && B^{-1} B_{x}  = 2( B^{-1}\tilde{P}_{22} B- P'_{22});
 \label{gaugeaffect1}\\
& A^{-1} A_{t} = 2(R'_{11} - A^{-1} \tilde{R}_{11} A); && B^{-1} B_{t}  = 2( R'_{22} - B^{-1}\tilde{R}_{22} B); \label{gaugeaffect2}
    \end{align}
\end{subequations}
provided $A^{-1} \tilde{P}_{11} A$, $B^{-1}\tilde{P}_{22} B$, $A^{-1} \tilde{R}_{11} A$ and $B^{-1}\tilde{R}_{22} B$ can be expressed in the terms of $p', q'$ and their $x$-derivatives.
This holds trivially for \eqref{LaxA1} (where $P=0$), establishing a gauge transformation from \eqref{DNLSI} to all other DNLS equations.
In \cite{74}, the authors used the gauge transformations from the Lax representation of \eqref{LaxB1} to construct all other Lax representations, proving the integrability of the DNLS equations.

Before we end this section, we recover the known gauge transformation \cite{69} among the commutative DNLS equations \eqref{system} using \eqref{gaugeaffect1}. In the commutative setting, equation \eqref{para} reduces to
 \begin{equation*}
  p \mapsto p' =  E p, \qquad q \mapsto q' = E^{-1} q, \qquad E = A^{-1} B.
 \end{equation*}

From \eqref{gaugeaffect1}, the transformation relating $P$ and $P'$ satisfies:
\begin{equation*}
E_x E^{-1} = 2(\tilde{P}_{22} - \tilde{P}_{11} + P_{11}' - P_{22}').
\end{equation*}
Solving this ordinary differential equation suggests $E = e^{-w}$ for a function $w$. The commutative versions of the systems \eqref{DNLSI}-\eqref{DNLSIII-2} are \eqref{DNLS0I}-\eqref{DNLS0III}, whose Lax representations are \eqref{LaxA1}, \eqref{LaxB1}, \eqref{LaxC1} with commutative multiplication. In these cases, we have
\begin{equation*}
w_x = a p q, \qquad a \in \mathbb{C}.
\end{equation*}
Similarly, we deduce from \eqref{gaugeaffect2} that
\begin{equation*}
 w_t = a \left(p q_x - p_x q + \left(\alpha + \beta\right)p^2 q^2\right), \qquad a \in \mathbb{C}.
\end{equation*}
These two partial derivatives are compatible in the sense that $(w_x)_t = (w_t)_x$ for solutions of the original system \eqref{system} with parameter $\alpha, \beta$.
This mapping of one DNLS equation to another is equivalent to reparametrising $\alpha, \beta$ as $\alpha', \beta'$ in \eqref{system}.
Thus, we define the gauge transformation in terms of $w$ as
\begin{equation*}
 p \mapsto p' = e^{-w} p, \qquad q \mapsto q' =  e^{w} q; \qquad \alpha \mapsto \alpha' = \alpha - a, \qquad \beta\mapsto\beta'=\beta-a.
\end{equation*}
For instance, the Kaup-Newell system \eqref{DNLS0I} transforms into the Chen-Lee-Liu \eqref{DNLS0II} when $a = 2$ and into the Gerdjikov-Ivanov \eqref{DNLS0III} when $a = 4$.

\section{Darboux transformations}\label{S2}
Given an auxiliary system as \eqref{L1} and a fundamental solution $\Phi$, a Darboux transformation $\mathcal{S}$ is an invertible linear transformation
 \begin{equation}
  \mathcal{S}: \Phi  \mapsto \bar{\Phi} = M \Phi,\qquad \det(M)\ne 0,
  \label{D1}
 \end{equation}
mapping $\Phi$ to a fundamental solution $\bar\Phi$ of a new auxiliary system
\begin{equation*}\label{lax1}
 \left\{\begin{array}{@{}l@{}}
     \bar\Phi_x = U(\bar p,\bar q;\lambda) \bar\Phi\\
     \bar\Phi_t = V(\bar p,\bar q;\lambda) \bar\Phi
 \end{array}\right.
\end{equation*}
where the new potentials  $\overline{p}$ and $\overline{q}$ still satisfy the same original PDE as $p$ and  $q$.
The matrix $M = M(\bar{p}, \bar{q},p,q;\lambda)$ that represents the linear transformation \eqref{D1}, is known as Darboux matrix.
Its entries are functions of $p$, $q$, $\bar{p}$, $\bar{q}$ and the spectral parameter $\lambda$. Additional auxiliary functions or parameters may also be involved in $M$.

The map \eqref{D1} is invertible ($\det(M)\ne 0$) and it can be iterated in both directions
\[
  \cdots\munderbar{\Phi}\acS \Phi\acS\bar{\Phi}
  \acS\bar{\bar{\Phi}}\acS\cdots\, .
\]
We introduce the notation
    \begin{align*}
        & \ldots && \Phi_{-1}=\munderbar{\Phi},&& \Phi_0=\Phi, && \Phi_1=\bar{\Phi}, &&\Phi_2=\bar{\bar{\Phi}}, && \dots\\
        & \ldots && {p}_{-1}=\munderbar{p}, && {p}_0={p}, && {p}_1=\bar{p}, && {p}_2=\bar{\bar{p}}, && \dots\\
        & \ldots && {q}_{-1}=\munderbar{q}, &&  {q}_0={q}, && {q}_1=\bar{q}, &&   {q}_2=\bar{\bar{q}}, && \dots
    \end{align*}
The map $\cS$ increases and decreases the subscript index by one, playing the role of shift operator on a $\mathbb{Z}$ lattice. When the subscript index is zero, we often omit it.
The Darboux transformation $\mathcal{S}$ acts on a function $a$ of the variables $p, q$ and their shifts as
$$
\mathcal{S}:a(p_n, q_n, \cdots, p_m, q_m)\mapsto a_1 =  a(p_{n+1}, q_{n+1}, \cdots, p_{m+1}, q_{m+1}),\quad \mathcal{S}:\alpha\mapsto \alpha; \qquad n,m \in \mathbb{Z},\quad  \alpha\in \mathbb{C}.
$$
Similarly, we define the action of the shift operator on matrix $U$ as $\cS(U) = U_1= U(p_1, q_1; \lambda)$, which is the same matrix $U$ with shifted entries. Iterating the transformation once more yields $p_2, q_2$ and so on.

Combining the definition of Darboux transformation \eqref{D1} and the auxiliary system \eqref{L1} of a PDE, we obtain an over-determined auxiliary system
\begin{equation*}
    \left\{\begin{array}{@{}l@{}}
         \cS(\Phi) = M \Phi\\
        \Phi_x = U \Phi
    \end{array}\right.
\end{equation*}
Their compatibility condition leads to the zero curvature equation
\begin{equation}
    M_x = U_1 M -M U.
  \label{D3}
\end{equation}
The resulting equation relates two different solutions of the PDE and is often called an (auto)-B\"acklund transformation.
Equation \eqref{D3} can be used
to determine the Darboux matrix $M$ and provides the Lax representation for the resulting integrable D$\Delta$E.

Similarly to \eqref{Gauge}, a gauge transformation leaves equation (\ref{D3}) invariant. For an invertible matrix $G$ and a generic function $\zeta(\lambda)$ of the spectral parameter, the gauge invariance of \eqref{D3} is given by
\begin{equation}
   M  \mapsto  M' = \zeta(\lambda) G_1^{-1} M G.
\label{gauge}
\end{equation}
In this paper, we primarily consider diagonal matrices $G$.

\subsection{Inverse and composition}\label{secinco}
The inverse map $\mathcal{S}^{-1}$ of a Darboux transformation $\cS$ is the linear transformation
\begin{equation}
\mathcal{S}^{-1} : \Phi \mapsto \Phi_{-1} = M^{I} \Phi,
\label{ID1}
\end{equation}
where $ M^I = (M_{-1})^{-1} $ denotes the Darboux inverse matrix.
Note that the Darboux inverse matrix is the Darboux matrix related to the inverse of the Darboux transformation, and not simply the inverse of the Darboux matrix. The auxiliary system for $M^I$ is defined by
\begin{equation*}
\left\{\begin{array}{@{}l@{}}
   \mathcal{S}^{-1}(\Phi) = M^I \Phi\\
    \Phi_x = U \Phi
  \end{array}\right.
\end{equation*}
and the associated zero curvature condition is
\begin{equation}
 M^I_x = U_{-1} M^I - M^I U.
\label{ID3}
\end{equation}
Comparing \eqref{ID3} with \eqref{D3}, the Darboux inverse matrix $M^I$ does not satisfy equation \eqref{D3}.
If we extend the involution map ${\cI}$ defined by \eqref{involp} to matrices as follows:
\begin{equation}\label{invol}
  \cI(U)=U, \quad\cI(U_{-1})=U_1, \quad \cI(E F)=\cI(E) \cI(F),
\end{equation}
where $E$ and $F$ are matrices such as $M^I$ and $U$, then $\cI(M^I)$ satisfies equation \eqref{D3}.
We will discuss such map for the DNLS equations in Section \ref{inverse}.

We now consider the composition of Darboux transformations. Let $\cS^{(1)}$ and $\cS^{(2)}$ be two Darboux transformations, associated with the Darboux matrices $M$ and $N$, respectively.
We denote the Darboux transformation of $p$ under $\cS^{(1)}$ by $\tilde{p}$, and under $\cS^{(2)}$ by $\bar{p}$.
This notation clarifies the order of successive shifts: applying $\cS^{(2)}$ followed by $\cS^{(1)}$ yields $\cS^{(1)}\cS^{(2)}p = \tilde{\bar{p}}$, while applying $\cS^{(1)}$ followed by $\cS^{(2)}$ yields $\cS^{(2)}\cS^{(1)}p = \bar{\tilde{p}}$. The same notation applies to $q$ and the fundamental solution $\Phi$.
Using this convention, the transformations are defined as:
\begin{equation}\label{TwoDarb}
   \cS^{(1)}: \Phi \mapsto \tilde{\Phi} = M \Phi;\qquad
  \cS^{(2)}: \Phi \mapsto \bar{\Phi} = N \Phi.
\end{equation}
The composition $\cS^{(1)} \cS^{(2)}$ corresponds to the map:
\begin{equation*}
\cS^{(1)} \cS^{(2)} : \Phi \mapsto \widetilde{\bar{\Phi}} = M(\tilde{\bar{p}}, \tilde{\bar{q}}, \bar{p}, \bar{q})  N({\bar{p}}, {\bar{q}}, p, q)  \Phi.
\end{equation*}
Notice that the shift $\cS^{(2)}$ does not act on $M$, but on its arguments.
To better represent the composition of two Darboux transformations, we introduce the operator $\iota_{\cS}$ as
\begin{equation}
 \iota_{\cS} M = M|_{p\mapsto \cS(p), q \mapsto \cS(q)},
 \label{iota}
\end{equation}
and call $\iota_{\cS}$ the ``inner shift".

With this notation, the composition of two Darboux matrices in \eqref{TwoDarb} can be expressed as
\begin{equation}
 \cS^{(1)}  \cS^{(2)} : \Phi \mapsto \tilde{\bar{\Phi}}= \iota_{\cS^{(2)}} (M) N \Phi.
 \label{combina}
\end{equation}
The resulting matrix $\iota_{\cS^{(2)}} (M) N$ is still a Darboux matrix, in the sense that it satisfies the zero curvature condition \eqref{D3} with shift $ \cS^{(1)} \cS^{(2)} $. Indeed, we have
$$ (\iota_{\cS^{(2)}} (M) N)_x = \cS^{(1)} \cS^{(2)} (U) \iota_{\cS^{(2)}} (M) N - \iota_{\cS^{(2)}} (M) N U .$$

\subsection{Darboux transformations for DNLS equations}\label{S4}
Consider a $2 \times 2$ matrix polynomial in the spectral parameter $\lambda$
\begin{equation}
M = \sum_{i=0}^N M^{(i)} \lambda^i,
\label{ansatz0}
\end{equation}
where $N\in \mathbb{N}$ denotes the polynomial degree of $M$, corresponding to the degree of the Darboux transformation.
Since gauge transformations \eqref{gauge} allow rescaling powers of $\lambda$, the degree $N$ is defined under the requirement that
$M^{(0)}\neq 0$.

We assume that $M$ is a Darboux matrix.
Substituting $M$ and $U$ given by \eqref{LaxP1-A} into the zero curvature condition \eqref{D3}, from the coefficient of the highest power in the spectral parameter, i.e., $\lambda^{N+2}$, it follows that matrix $M^{(N)}$, coefficient of $\lambda^N$ in \eqref{ansatz0} (the leading term of $M$), is diagonal.

Knowing that $U$ is invariant under the reduction group in \eqref{reduction1}, we assume that the Darboux matrix \eqref{ansatz0} satisfies
\begin{equation}
M(\lambda) = \pm I M(-\lambda) I,
 \label{reduction2}
\end{equation}
where ``$+$'' corresponds to the original reduction group and ``$-$'' reflects the requirement of nonzero $M^{(0)}$.
By direct computation, we have the following statement:
\begin{prop}\label{prop2}
A polynomial Darboux matrix of degree $N$ \eqref{ansatz0} for the DNLS equations, invariant under the reduction group \eqref{reduction2}, takes the form
\begin{equation}
\begin{split}
M(N) = &\begin{pmatrix}
        M^{(N)}_{11} & 0 \\ 0 &  M^{(N)}_{22}
      \end{pmatrix}\lambda^{N}
      + \begin{pmatrix}
       0 &  M^{(N)}_{11} p - p_1  M^{(N)}_{22} \\ q_1  M^{(N)}_{11} -  M^{(N)}_{22} q & 0
      \end{pmatrix} \lambda^{N-1} \\
      &+ \sum_{j=2}^{N} M_{\sigma(j)}^{(N-j)} \lambda^{N-j}; \qquad
      \sigma(k) = \left\{\begin{array}{@{}l@{}}
        d \qquad k \text{ is even}  \\
        a \qquad k \text{ is odd}
      \end{array}\right.
\end{split}
\label{ansatz3}
\end{equation}
The function $\sigma(k)$ determines whether the coefficient matrix $M^{(k)}$ is diagonal $M^{(k)}_d$ or off-diagonal $M^{(k)}_a$.
\end{prop}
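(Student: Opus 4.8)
The plan is to substitute the ansatz \eqref{ansatz0} into the zero curvature condition \eqref{D3} with $U$ given by \eqref{LaxP1-A}, and analyse the resulting polynomial identity in $\lambda$ power by power, starting from the top degree $\lambda^{N+2}$ and descending. The leading coefficient $\lambda^{N+2}$ gives $[I,M^{(N)}]=0$ (after accounting that the $\lambda^2 I$ term of $U$ contributes $I M^{(N)} - M^{(N)} I$), which forces $M^{(N)}$ to be diagonal; this is already noted in the text preceding the proposition. I would then record that this means $M^{(N)}=M^{(N)}_d$, matching the claimed form of the $\lambda^N$ coefficient in \eqref{ansatz3}.

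Next I would bring in the reduction constraint \eqref{reduction2}. Writing $M(\lambda)=\sum_i M^{(i)}\lambda^i$, the relation $M(\lambda)=\epsilon\, I M(-\lambda) I$ with $\epsilon=\pm1$ becomes, coefficient by coefficient, $M^{(i)} = \epsilon\,(-1)^i\, I M^{(i)} I$. Since conjugation by $I$ fixes diagonal matrices and negates off-diagonal ones, this says: when $\epsilon(-1)^i=+1$ the matrix $M^{(i)}$ is diagonal, and when $\epsilon(-1)^i=-1$ it is off-diagonal. The sign $\epsilon$ is pinned down by the two top coefficients already computed: $M^{(N)}$ is diagonal, which requires $\epsilon(-1)^N=+1$, i.e. $\epsilon=(-1)^N$; then $\epsilon(-1)^{N-1}=-1$, so $M^{(N-1)}$ is off-diagonal, and in general $M^{(N-j)}$ is diagonal exactly when $j$ is even and off-diagonal exactly when $j$ is odd. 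This is precisely the alternating pattern encoded by $\sigma$ in the statement, so I would phrase it as: $M^{(N-j)}=M^{(N-j)}_{\sigma(j)}$ with $\sigma(j)=d$ for $j$ even, $\sigma(j)=a$ for $j$ odd.

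It remains to identify the off-diagonal entries of $M^{(N-1)}$. For this I would look at the $\lambda^{N+1}$ coefficient of \eqref{D3}. Collecting terms, the contributions come from $I\lambda^2\cdot M^{(N-1)}\lambda^{N-1}$, from $2J_1\lambda\cdot M^{(N)}\lambda^N$ on the $U_1 M$ side, and the analogous $M U$ terms; since $M^{(N-1)}$ is off-diagonal and $M^{(N)}$ diagonal, the commutator-type identity $\{I,\cdot\}=0$ on off-diagonal matrices lets me solve for the $(1,2)$ and $(2,1)$ entries of $M^{(N-1)}$ in closed form, giving $M^{(N-1)}_{12}=M^{(N)}_{11}p - p_1 M^{(N)}_{22}$ and $M^{(N-1)}_{21}=q_1 M^{(N)}_{11} - M^{(N)}_{22}q$, exactly the second matrix in \eqref{ansatz3}. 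The remaining coefficients $\lambda^{N-j}$ for $j\ge2$ produce differential-difference relations rather than purely algebraic ones; these determine the dynamics of the auxiliary functions and the dependent variables but impose no further structural constraint on the shape of $M$, so they can be set aside for this proposition (they are the substance of the later sections). The only mild subtlety — and the step I would be most careful about — is bookkeeping the sign $\epsilon$ consistently: one must check that the choice forced by $M^{(N)}$ being diagonal is the same as the ``$-$ reflects nonzero $M^{(0)}$'' remark, i.e. that $\epsilon=(-1)^N$ indeed makes $M^{(0)}=M^{(0-0)}$... more precisely $M^{(0)}=M^{(N-N)}$, nonzero and of the type $\sigma(N)$, which is consistent with the degree being well-defined. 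Everything else is routine matrix algebra using $[I,\text{diag}]=0$ and $\{I,\text{off-diag}\}=0$.
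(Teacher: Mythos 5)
Your proposal is correct and follows essentially the same route the paper intends: diagonality of $M^{(N)}$ from the $\lambda^{N+2}$ coefficient of \eqref{D3}, the alternating diagonal/off-diagonal pattern from the coefficient-wise reading of \eqref{reduction2} with the sign $\epsilon=(-1)^N$ fixed by the leading term, and the explicit off-diagonal entries of $M^{(N-1)}$ from the $\lambda^{N+1}$ coefficient using $[I,\mathrm{diag}]=0$ and $\{I,\text{off-diag}\}=0$. The computations check out, including the identification $M^{(N-1)}=I\bigl(M^{(N)}J-J_1M^{(N)}\bigr)$ yielding exactly the entries in \eqref{ansatz3}.
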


A Darboux transformation of even degree is invariant with respect to \eqref{reduction2} for the plus sign, while if the degree is odd, it corresponds to the minus sign.\par

When $M$ is a polynomial in $\lambda$, we define the rank of such Darboux transformation as the rank of its leading term, which is the matrix coefficient $M^{(N)}$ of the highest power in $\lambda$ \cite{92}.
We thus split \eqref{ansatz3} into three cases:
\begin{itemize}
 \item[{\rm (i)}]  $M^{(N)}_{11}\neq 0$ and $M^{(N)}_{22} = 0$;
\item[{\rm (ii)}]  $M^{(N)}_{11}=0$ and $M^{(N)}_{22}\neq 0$;
\item[{\rm (iii)}]  $M^{(N)}_{11} M^{(N)}_{22} \neq 0$.
\end{itemize}
The first two cases are rank-1 Darboux matrices, denoted by $M_\uparrow(N)$ and $M_\downarrow(N)$, respectively.
As argued for \eqref{DNLS0I} in \cite{80}, cases (i) and (ii) are gauge equivalent, and case (iii) often arises from composing Darboux matrices from the first two cases.

In this paper, we primarily focus on case {\rm (i)} when $N\leq 2$,  corresponding to \eqref{DarBoux1} and \eqref{OriginalMatrix}.
Here, we introduce the auxiliary function $f$ for $M_{11}^{(1)}$ in $M_\uparrow(1)$ and analogously for $M_{11}^{(2)}$ in $M_\uparrow(2)$.
Moreover, we show that for $N\in \{1, 2\}$, the Darboux matrix $M_\downarrow(N)$ relates to the inverse of $M_\uparrow(N)$.

\subsection{Quasideterminants and factorisation}\label{factor}
The reverse process of composition is factorisation. It addresses whether a Darboux transformation of higher polynomial degree
can be expressed as a composition of a certain number of transformations of lower degree.

Recall that in the commutative case, the determinant $\det(M)$ of a Darboux matrix $M$ satisfies
\begin{equation}
  \det(M)^{-1}\det(M)_x = {\rm tr}(U_1) - {\rm tr}(U),
\end{equation}
where its derivative depends on the trace of the associated Lax representation.
The Lax representations of the (commutative) models \eqref{DNLS0I} and \eqref{DNLS0III} are traceless, therefore the determinant of the related $M$ is a constants of motion \cite{6}.

Moreover,  if a Darboux matrix $M$ is the composition of two Darboux matrices $M^{(1)}$ and $M^{(2)}$ in this order, by \eqref{combina} it follows that $M = \iota_{\cS^{(2)}} (M^{(1)}) M^{(2)}$, where $\cS^{(i)}$ is the shift associated with $M^{(i)}$ for $i = 1, 2$.
Consequently, the determinant of $M$ factorises in a product of the determinants of $M^{(i)}$:
$$\det(M)=\iota_{\cS^{(2)}}(\det(M^{(1)}))\det(M^{(2)}).$$
Such observation indicates that the entire transformation can be decomposed into two simpler steps.

For matrices with non-commutative entries, there is no universal definition of determinant although several generalisations that preserve different aspects or properties of the commutative determinant exist. Examples with many applications to integrable system \cite{144,4} include the quasideterminants, introduced and studied in \cite{56}, and the Dieudonn\'e determinant, presented in \cite{141}.

We recall some basic facts on quasideterminants. Consider an invertible $2 \times 2$ matrix $X$ with non-commutative entries:
\begin{equation*}
  X = \begin{pmatrix}
        X_{11} & X_{12} \\ X_{21} & X_{22}
      \end{pmatrix}, \qquad  X^{-1} = \begin{pmatrix}
        \Delta_{11}^{-1}(X) & -\Delta_{11}^{-1}(X) X_{12} X_{22}^{-1}  \\ -\Delta_{22}^{-1}(X) X_{21} X_{11}^{-1}& \Delta_{22}^{-1}(X)
      \end{pmatrix}.
\end{equation*}
The matrix $X^{-1}$ depends on the inverse of two objects $\Delta_{11}(X)$ and $\Delta_{22}(X)$, which are the quasideterminants $\Delta_{ij}(X)$ of $X$ in the positions $(1,1)$ and $(2,2)$, respectively.

For the $2\times 2$-matrix $X$ with non-commutative entries, there are four quasideterminants:
\begin{equation}\label{quasim}
\begin{array}{c}
 \Delta_{11}(X) = X_{11} - X_{12} X_{22}^{-1} X_{21}, \qquad \Delta_{12}(X)= X_{12} - X_{11} X_{21}^{-1} X_{22},\\
 \Delta_{21}(X) = X_{21} - X_{22} X_{12}^{-1} X_{11}, \qquad \Delta_{22}(X)= X_{22} - X_{21} X_{11}^{-1} X_{12} .
 \end{array}
\end{equation}
When all the entries of $X$ are non-zero, the quasideterminants $\Delta_{ij}(X)$, $i, j = 1, 2$ are interrelated.
For instance, all $\Delta_{ij}(X)$ can be expressed in terms of $\Delta_{11}(X)$:
    \begin{align*}
        &\Delta_{12}(X) = - \Delta_{11}(X)X_{21}^{-1}X_{22}, && \Delta_{21}(X) = - X_{22}X_{12}^{-1}\Delta_{11}(X),\\
        & \Delta_{22}(X) =  X_{22}X_{12}^{-1}\Delta_{11}(X)X_{11}^{-1}X_{12}, && \Delta_{22}(X) =  X_{21}X_{11}^{-1}\Delta_{11}(X)X_{21}^{-1}X_{22}.
  \end{align*}
This property allows the inverse of $X$ to be written in two equivalent forms, based on either $\Delta_{11}$ or $\Delta_{22}$:
\begin{equation*}
 X^{-1} = \begin{pmatrix}
        \Delta_{11}^{-1}(X) & -\Delta_{11}^{-1}(X) X_{12} X_{22}^{-1}  \\ -X_{22}^{-1} X_{21}\Delta_{11}^{-1}(X) & X_{22}^{-1} X_{21} \Delta_{11}^{-1}(X) X_{11} X_{21}^{-1}
      \end{pmatrix} = \begin{pmatrix}
        X_{11}^{-1} X_{12}\Delta_{22}^{-1}(X) X_{22} X_{12}^{-1} & -X_{11}^{-1} X_{12} \Delta_{22}^{-1}(X)  \\ -\Delta_{22}^{-1}(X) X_{21} X_{11}^{-1}& \Delta_{22}^{-1}(X)
      \end{pmatrix}
\end{equation*}
In case some of the entries are zero (and the matrix is still invertible), only one inverse form becomes applicable, and can be computed from the definition.\par
\begin{prop}\label{prop3}
The quasidetermiants for $M_\uparrow(1)$ given in \eqref{DarBoux1} and $M_\uparrow(2)$ given in \eqref{OriginalMatrix} are
\begin{eqnarray}\label{quasim1}
 \nexists \Delta_{11}(M_\uparrow(1)), \quad \Delta_{12}(M_\uparrow(1)) = f p, \quad \Delta_{21}(M_\uparrow(1)) = q_1 f, \quad \Delta_{22}(M_\uparrow(1)) = - q_1 f p \lambda^{-1}
\end{eqnarray}
and
\begin{equation}\label{quasim2}
\begin{array}{l}
  \Delta_{11}(M_\uparrow(2)) = a + (\hh - \hh p b^{-1} q_1 \hh)\lambda^2, \\
\Delta_{12}(M_\uparrow(2)) = (\hh p - q_1^{-1} b )\lambda- a \hh^{-1} q_1^{-1} b \lambda^{-1},\\
\Delta_{21}(M_\uparrow(2)) = (q_1 \hh - b p^{-1})\lambda - b p^{-1} \hh^{-1} a \lambda^{-1},\\
\Delta_{22}(M_\uparrow(2)) = q_1 \hh (\hh \lambda^2 + a)^{-1}((q_1^{-1}b - \hh p)\lambda^2 + a \hh^{-1} q_1^{-1}b).
\end{array}
\end{equation}
\end{prop}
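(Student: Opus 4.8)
The plan is to compute every quasideterminant directly from the defining formulas \eqref{quasim}, exploiting throughout the fact that the spectral parameter $\lambda$ is a central scalar and therefore commutes with all matrix entries. First I would write the two Darboux matrices as ordinary $2\times2$ arrays,
\[
M_\uparrow(1)=\begin{pmatrix} f\lambda & fp\\ q_1 f & 0\end{pmatrix},\qquad
M_\uparrow(2)=\begin{pmatrix} f\lambda^2+a & fp\lambda\\ q_1 f\lambda & b\end{pmatrix},
\]
so that the entries $X_{11},X_{12},X_{21},X_{22}$ entering \eqref{quasim} are read off immediately in each case.

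For $M_\uparrow(1)$, since $X_{22}=0$ is not invertible, the term $X_{12}X_{22}^{-1}X_{21}$ in $\Delta_{11}$ is undefined, which is precisely the assertion $\nexists\,\Delta_{11}(M_\uparrow(1))$. The remaining three are one-line substitutions: $\Delta_{12}=X_{12}-X_{11}X_{21}^{-1}X_{22}=fp$ and $\Delta_{21}=X_{21}-X_{22}X_{12}^{-1}X_{11}=q_1f$, since the subtracted term vanishes in both; and $\Delta_{22}=X_{22}-X_{21}X_{11}^{-1}X_{12}=-\,q_1 f(f\lambda)^{-1}fp=-\,q_1 fp\,\lambda^{-1}$, using $(f\lambda)^{-1}=f^{-1}\lambda^{-1}$ and centrality of $\lambda$.

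For $M_\uparrow(2)$ all four entries are generically invertible, so all four quasideterminants exist and are obtained from \eqref{quasim}. The formula for $\Delta_{11}=(f\lambda^2+a)-fp\lambda\,b^{-1}\,q_1f\lambda=a+(f-fpb^{-1}q_1f)\lambda^2$ is immediate. For $\Delta_{12}$ and $\Delta_{21}$ the only manipulation is to move scalar powers of $\lambda$ past matrix factors and to split $(f\lambda^2+a)$ termwise; e.g.\ $(f\lambda^2+a)(q_1f\lambda)^{-1}b=(f\lambda^2+a)f^{-1}q_1^{-1}\lambda^{-1}b=q_1^{-1}b\,\lambda+af^{-1}q_1^{-1}b\,\lambda^{-1}$, giving the stated $\Delta_{12}$, and symmetrically for $\Delta_{21}$. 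For $\Delta_{22}$ I would \emph{not} expand $X_{22}-X_{21}X_{11}^{-1}X_{12}$ directly (this is correct but lands on a differently shaped expression); instead I would invoke the interrelation $\Delta_{22}(X)=X_{21}X_{11}^{-1}\Delta_{11}(X)X_{21}^{-1}X_{22}$ recorded just before the proposition. Substituting $X_{21}=q_1f\lambda$, $X_{11}=f\lambda^2+a$, $X_{22}=b$ and the already-computed $\Delta_{11}$, cancelling the scalar $\lambda$'s, and using $(f-fpb^{-1}q_1f)f^{-1}q_1^{-1}b=q_1^{-1}b-fp$ for the middle factor, one arrives at exactly $q_1 f(f\lambda^2+a)^{-1}\big((q_1^{-1}b-fp)\lambda^2+af^{-1}q_1^{-1}b\big)$.

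The computations are entirely routine; there is no genuine obstacle, only careful non-commutative bookkeeping. The two points that need attention are: (a) keeping track of which factors are scalar and which are matrices, and in particular noticing that $\Delta_{11}(M_\uparrow(1))$ truly fails to exist because $X_{22}=0$ while none of the listed quasideterminants of $M_\uparrow(2)$ does; and (b) for $\Delta_{22}(M_\uparrow(2))$, using the $\Delta_{11}$-based interrelation rather than the raw definition so as to land on the precise form stated — equivalently, if one expands directly one must reconcile the two presentations via the small identity $f(f\lambda^2+a)^{-1}a=a(f\lambda^2+a)^{-1}f$, which follows by inverting $a^{-1}(f\lambda^2+a)f^{-1}=f^{-1}(f\lambda^2+a)a^{-1}=a^{-1}\lambda^2+f^{-1}$.
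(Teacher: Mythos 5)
Your proposal is correct and takes essentially the same approach as the paper, whose proof is simply the one-line statement that the formulas follow by direct use of \eqref{quasim}; your computations (including the non-existence of $\Delta_{11}(M_\uparrow(1))$ due to $X_{22}=0$, and the reconciliation of the two equivalent forms of $\Delta_{22}(M_\uparrow(2))$ via the interrelation $\Delta_{22}=X_{21}X_{11}^{-1}\Delta_{11}X_{21}^{-1}X_{22}$) all check out.
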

\begin{proof}
 These are obtained by directly using formula \eqref{quasim}.
\end{proof}
Unlike the commutative case, the quasideterminants of a product of matrices in the non-commutative framework are not equal to the product of the corresponding quasideterminants. However, for two $n \times n$ matrices $X$ and $Y$ with non-zero entries, the following relation holds \cite{56}:
\begin{equation*}
 \Delta_{ij}^{-1}(X Y) = \sum_{k = 1}^n \Delta_{k j}^{-1}(Y)\Delta_{i k}^{-1}(X).
 \label{Retak}
\end{equation*}
Note that if some entries are zero, the quasideterminants of a product can be expressed in a simplified form \cite{144}.

Given a Darboux transformation of higher degree, we investigate whether it can be obtained by composing two lower-degree transformations. In this paper, we focus on rank-1 Darboux transformations. Inspired by the commutative case, we examine the quasideterminants of the composition of a Darboux matrix $N$ with a linear matrix $M_\uparrow(1)$, considering both left-multiplication and right-multiplication.
\begin{prop}\label{prop4}
Let $\cS_{\uparrow}: \Phi \mapsto  \tilde{\Phi}=M_\uparrow(1) \Phi$ and $\cS_N: \Phi \mapsto \bar{\Phi}=N \Phi$. Then
\begin{eqnarray}
 &&\Delta_{1 i}\left(\iota_{\cS_N} (M_\uparrow(1)) N\right) =\mathcal{S}_N (f p) \Delta_{2 i}(N) , \label{Fac1}\\
 &&\Delta_{i 1}\left(\iota_{\cS_\uparrow} (N) M_\uparrow(1)\right) = \iota_{\mathcal{S}_{\uparrow}} ( \Delta_{i 2} (N)) \mathcal{S}_{\uparrow}(q) f ,\label{Fac2}
\end{eqnarray}
where $\Delta_{ij}$ are quasidetermiants at positions $(i,j)$, $i,j=1,2$, defined by \eqref{quasim}.
\end{prop}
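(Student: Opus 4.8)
The plan is to compute the relevant quasideterminants of the product matrices directly, exploiting the very special (rank-one, reduction-symmetric) structure of $M_\uparrow(1)$ so that most terms collapse. Recall from \eqref{DarBoux1} that $M_\uparrow(1)$ has a zero $(2,2)$-entry in its leading coefficient and, more importantly, that its off-diagonal entries are $f p$ and $q_1 f$ while $M_{11}=f\lambda$; the crucial algebraic fact (already visible in \eqref{quasim1}) is that $M_\uparrow(1)$ has rank one when $\lambda$ is specialised appropriately, i.e. its rows (resp. columns) are left (resp. right) proportional. Concretely, writing $M_\uparrow(1)=\begin{pmatrix} f\lambda & f p \\ q_1 f & 0\end{pmatrix}$, one sees the first row equals $f\lambda^{?}$ times\,$\ldots$ — more precisely $M_{12}=f p = (f\lambda)\,(\lambda^{-1}p)$ and so on; it is this proportionality that forces the quasideterminant of a product to telescope. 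I would first record these proportionality relations cleanly: there exist row/column vectors such that $M_\uparrow(1)=\begin{pmatrix}f\\ q_1 f p^{-1}\end{pmatrix}\big(\lambda\ \ p\big)$ up to the obvious adjustments, which is exactly the statement that $\Delta_{11}$ does not exist (the naive Schur complement is $f\lambda - f p\cdot 0^{-1}\cdot q_1 f$, undefined) while the other three quasideterminants are the simple expressions in \eqref{quasim1}.

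Next I would set up the two products. For \eqref{Fac1}, let $\iota_{\cS_N}(M_\uparrow(1))$ be $M_\uparrow(1)$ with $p,q$ replaced by $\bar p,\bar q$ — but note that in $M_\uparrow(1)$ the ``$p$'' is $p_0$ and the ``$q_1$'' is $q_1$, so under $\iota_{\cS_N}$ these become $\cS_N(p)$ and $\cS_N(q_1)$; the $f$ becomes $\cS_N(f)$. Then I would expand $\big(\iota_{\cS_N}(M_\uparrow(1))\,N\big)$ entrywise, and apply the definition \eqref{quasim} of $\Delta_{1i}$ for $i=1,2$. The key computational point is that $\Delta_{11}$ and $\Delta_{12}$ of a product $AB$, when the left factor $A$ has proportional rows (first row $=A_{11}A_{21}^{-1}$ times second row in the appropriate sense), reduce to $A_{1?}\,A_{2?}^{-1}$ times a quasideterminant of $B$. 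Using the general product formula for quasideterminants of $n\times n$ matrices with nonzero entries, $\Delta_{ij}^{-1}(XY)=\sum_k \Delta_{kj}^{-1}(Y)\Delta_{ik}^{-1}(X)$ (stated in the excerpt just before Proposition \ref{prop4}), together with the fact that for $M_\uparrow(1)$ we have $\Delta_{1k}(M_\uparrow(1))$ degenerate/infinite in a way that kills all but the $k=2$ term, the sum collapses to a single summand and yields $\Delta_{1i}(\iota_{\cS_N}(M_\uparrow(1))N)=\cS_N(f p)\,\Delta_{2i}(N)$. I would carry this out carefully for the case of zero entries, since the clean product formula requires nonzero entries; the rank-one structure is exactly what lets the zero-entry version still give a single term. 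The computation for \eqref{Fac2} is the transpose/mirror situation: now $M_\uparrow(1)$ is the \emph{right} factor, its \emph{columns} are right-proportional, and one uses $\Delta_{i1}$, the inner shift $\iota_{\cS_\uparrow}$ acts on $N$'s arguments, and one obtains $\iota_{\cS_\uparrow}(\Delta_{i2}(N))\,\cS_\uparrow(q)\,f$.

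The main obstacle, as I see it, is bookkeeping rather than conceptual depth: (a) keeping straight which shift ($\cS_N$, $\cS_\uparrow$, or the inner shifts $\iota$) acts on which factor and on which internal index (the $p$ vs.\ $p_1$, $q$ vs.\ $q_1$ distinction inside $M_\uparrow(1)$ matters), and (b) justifying the use of the quasideterminant product identity in the degenerate case where $M_\uparrow(1)$ has a zero entry, so that the naive formula $\Delta^{-1}_{ij}(XY)=\sum_k\Delta^{-1}_{kj}(Y)\Delta^{-1}_{ik}(X)$ must be replaced by its reduced form — here I would either cite the simplified-form reference already mentioned in the text or, more safely, just expand the $2\times 2$ products by hand and verify the two identities by direct non-commutative algebra, which is short. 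I expect the honest proof is: write out the four entries of each product, plug into \eqref{quasim}, and simplify using associativity and the cancellation $X_{12}X_{22}^{-1}X_{21}$-type terms; everything telescopes because the left (resp.\ right) factor is rank one. I would present it in that direct form, flagging the rank-one proportionality as the reason the telescoping happens.
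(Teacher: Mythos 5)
Your proposal is correct and follows essentially the same route as the paper: write out the four entries of each product, apply the definitions \eqref{quasim}, and observe that the $\lambda$-terms cancel so that each quasideterminant factors as claimed. One minor caveat: $M_\uparrow(1)$ itself is invertible rather than rank one (``rank-1'' in the paper refers only to the leading coefficient), and the telescoping actually comes from the zero $(2,2)$ entry — i.e.\ the second row of $M_\uparrow(1)$ being a left multiple of $(1,0)$, which makes $M_{12}M_{22}^{-1}M_{21}$ collapse — but since you ultimately commit to the direct $2\times 2$ expansion, this imprecision does not affect the argument.
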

\begin{proof}
Recall that  $M_\uparrow(1)$ is given in \eqref{DarBoux1} and let Darboux matrix $N$ be generic, that is,
    \begin{equation*}
        M_\uparrow(1) = \begin{pmatrix}
            f \lambda & f p \\ q_1 f & 0
        \end{pmatrix}, \qquad N = \begin{pmatrix}
            A & B \\ C & D
        \end{pmatrix}.
    \end{equation*}
For the Darboux transformation $\cS_\uparrow\cS_N$, the associated Darboux matrix is
    \begin{equation*}
        M = \iota_{\cS_N} (M_\uparrow(1)) N = \begin{pmatrix}
            \lambda \bar f A + \bar f \bar p C & \lambda \bar f B + \bar f \bar p D\\ \tilde{\bar q} \bar f A & \tilde{\bar q} \bar f B
        \end{pmatrix}.
    \end{equation*}
According to \eqref{quasim}, the quasideterminants of matrix $M$ are
    \begin{equation*}
        \begin{array}{l} \Delta_{11}(M) = \bar{f}\bar{p}(C - D B^{-1} A) = \cS_N(f p )\Delta_{21}(N),\\
             \Delta_{12}(M) = \bar{f}\bar{p}(D - C A^{-1} B) = \cS_N(f p )\Delta_{22}(N).
        \end{array}
    \end{equation*}
Thus we obtain \eqref{Fac1} as a product of two factors: an entry of $M_\uparrow(1)$ and a quasideterminant of $N$.

Analogously, for the Darboux transformation $ \cS_N\cS_\uparrow$, the associated Darboux matrix is
    \begin{equation*}
        M = \iota_\mathcal{\cS_\uparrow} (N) M_\uparrow(1) = \begin{pmatrix}
            \lambda \iota_\mathcal{\cS_\uparrow} (A) f + \iota_\mathcal{\cS_\uparrow} (B) \tilde{q} f & \iota_\mathcal{\cS_\uparrow} (A) f p \\ \lambda \iota_\mathcal{\cS_\uparrow} (C) f + \iota_\mathcal{\cS_\uparrow} (D) \tilde{q} f & \iota_\mathcal{\cS_\uparrow} (C) f p
        \end{pmatrix}.
    \end{equation*}
    In a similar way as above, the quasideterminants of matrix $M$ are
    \begin{equation*}
        \begin{split}
        & \Delta_{11}(M) = (\iota_\mathcal{\cS_\uparrow} (B) - \iota_\mathcal{\cS_\uparrow} (A) \iota_\mathcal{\cS_\uparrow} (C)^{-1} \iota_\mathcal{\cS_\uparrow} (D)) \tilde{q}f = \iota_\mathcal{\cS_\uparrow} (\Delta_{12}(N)) \tilde{q} f,\\
            & \Delta_{21}(M) =  (\iota_\mathcal{\cS_\uparrow} (D) - \iota_\mathcal{\cS_\uparrow} (C) \iota_\mathcal{\cS_\uparrow} (A)^{-1} \iota_\mathcal{\cS_\uparrow} (B)) \tilde{q}f = \iota_\mathcal{\cS_\uparrow} (\Delta_{22}(N)) \tilde{q} f.
        \end{split}
    \end{equation*}
    As before, these quasideterminants $\Delta_{11}(M)$ and $\Delta_{21}(M)$ have a simple polynomial form and
    correspond to \eqref{Fac2} in the statement.
\end{proof}
\begin{remark}
It is possible to prove an analogous statement involving a rank-1 down linear Darboux matrix:
$$
M_\downarrow(1) = \begin{pmatrix}
      0 & 0 \\ 0 & g
     \end{pmatrix} \lambda - \begin{pmatrix}
     0 &  p_1 g \\
      g q  & 0
\end{pmatrix} .
$$
Considering $\cS_\downarrow: \Phi \mapsto  M_\downarrow(1) \Phi $,  we state that
\begin{eqnarray}
 &&\Delta_{2 i}(\iota_{\cS_N} (M_\downarrow(1)) N) =- \mathcal{S}_N (g q) \Delta_{1 i}(N) , \label{Fac3}\\
 &&\Delta_{i 2}(\iota_{\cS_\downarrow} (N) M_\downarrow(1)) = - \iota_{\mathcal{S}_{\downarrow}} ( \Delta_{i 1} (N)) \mathcal{S}_{\downarrow}(p) g .\label{Fac4}
\end{eqnarray}
\end{remark}
In Section \ref{sec45}, we are going to apply this result to determine whether a quadratic Darboux transformation is a composition of two linear Darboux transformation.

\section{Computation of the Darboux transformations}\label{S6}
In this section, we analyse constant, linear and quadratic Darboux matrices for the DNLS equations, focusing on the $M_\uparrow(N)$
for $N\leq 2$.
After fixing the form of $M$, we proceed applying it to the zero curvature condition \eqref{D3} to derive the governing system.
We then determine the first integrals, reduce the resulting equations, and obtain the associated integrable D$\Delta$Es.

\subsection{Constant Darboux transformations}
We present the case of a polynomial Darboux matrix with only one term in the expansion \eqref{ansatz3}. Due to the scaling seen in \eqref{gauge}, we can eliminate the dependence on $\lambda$, therefore this case is equivalent to a zero degree polynomial (or constant) Darboux transformation:
\begin{equation}\label{MS}
M = \begin{pmatrix}
      a & 0 \\ 0 & b
     \end{pmatrix}, \quad a b\neq 0.
\end{equation}
From the zero curvature condition \eqref{D3} with the Lax representation \eqref{LaxP1},  we obtain the equations
\begin{equation*} 
 \left\{\begin{array}{@{}l@{}}
  a p-p_1 b=0, \qquad  a_{x}=2\mathcal{S}(P_{11}) a-2a P_{11},\\ 
   b q-q_1 a=0,  \qquad  b_{x}=2\mathcal{S}(P_{22}) b-2b P_{22}.
\end{array}\right. 
\end{equation*} 
For the cases \eqref{DNLSI}, \eqref{DNLSII} and \eqref{DNLSIII}, this leads to
\begin{equation*}
 a_x = 0, \qquad b_x = 0.
\end{equation*}
Thus, for these three DNLS equations the constant matrix \eqref{MS} is a Darboux matrix with $a$ and $b$ non-commutative constants.
The resulting B\"acklund transformation corresponds to the scaling symmetry
\begin{equation*}
p_1=a\ {p}\ b^{-1}, \qquad q_1=b\ {q}\ a^{-1}.
\end{equation*}
For \eqref{DNLSII-2} and {\color{black}\eqref{DNLSIII-2}} we obtain
\begin{equation*}
  a_x = 2 (b q p b^{-1} a - a q p), \qquad b_x = 0,
\end{equation*}
while with \eqref{DNLSI-2} and \eqref{DNLSII-3} we have
\begin{equation*}
  a_x = 0, \qquad b_x = 2 (a p q a^{-1} b - b p q).
\end{equation*}
When both $a$ and $b$ are commutative, the Darboux matrix \eqref{MS} has constant evolution for all the DNLS equations and it corresponds with a scaling of the dependent variables $p$ and $q$. The same result for equation \eqref{DNLS0I} is obtained in \cite{92, 80}.
\subsection{Linear Darboux transformations}\label{Lin}
Consider the rank-1 up Darboux matrix $M_\uparrow(1)$ linear in $\lambda$ defined in \eqref{DarBoux1}:
\begin{equation*}
 M_\uparrow(1) = \begin{pmatrix}
      f & 0 \\ 0 & 0
     \end{pmatrix} \lambda + \begin{pmatrix}
     0 &  f   p \\
      q_1   f & 0
\end{pmatrix}.
\end{equation*}
Here we introduce the auxiliary function $f$ in place of the entry $M_{11}^{(1)}$.
In the context of the nonlinear Schr\"odinger equation (NLS), this transformation, called ``elementary'', has been studied in \cite{102, 98}. In \cite{80} the linear Darboux transformations of the equation \eqref{DNLS0I}
are treated as a special case of the quadratic Darboux matrix when $a=b=0$ in \eqref{OriginalMatrix}.

From the zero curvature condition \eqref{D3} with matrix $U$ given by \eqref{LaxP1},
we obtain the system:
\begin{equation}
 \left\{\begin{array}{@{}l@{}}
   p_x = 2\left(P_{11} p - p P_{22}\right) + 2\left( p q p - f^{-1} p_1 q_1 f p\right)\\ 
   q_{1,x} = 2\mathcal{S}\left(P_{22} q - q P_{11} \right) + 2\left(q_1 f p q f^{-1} - q_1 p_1 q_1\right)\\
   f_x= 2\mathcal{S}\left(P_{11}+ p q\right) f - 2 f \left(P_{11} + p q \right)
\end{array}\right.
\label{DT2}
\end{equation}
In the following parts, we analyse this system for all seven DNLS equations one by one.

\subsection*{Equation \eqref{DNLSI}}
We consider equation \eqref{DNLSI} with Lax representation \eqref{LaxA1},
the system \eqref{DT2} becomes
\begin{equation}\label{A1eq}
   \left\{\begin{array}{@{}l@{}}
   f_{x}=2(p_1q_1f -f pq) \\
    p_{x}= 2 (p   q p - f^{ -1}  p_1  q_1  f   p )  \\
   q_{1,x}= 2 (q_1  f   p   q  f^{-1}-q_1  p_1  q_1)
 \end{array}\right.
\end{equation}
This system then yields two constants of motion $ \alpha = f p$ and $\beta = q_1 f$, that is $\alpha_x=\beta_x=0$.
These invariants generate distinct reductions, which we present below for this case.

\begin{itemize}
 \item {\bf Only constant $\bm\alpha$}. We set $f = \alpha p^{-1}$, where $\alpha$ is defined on integers and takes value in the same free associates algebra as the dependent variables $p$ and $q$. Substituting it into \eqref{A1eq}, we get
 \begin{equation}\label{A1dd}
   \left\{\begin{array}{@{}l@{}}
      p_x=   2   (p q p - p \alpha^{-1} p_1 q_1 \alpha) \\
      q_{1,x} = 2 (q_1 \alpha q p \alpha^{-1} -q_1 p_1 q_1)
  \end{array}\right.
 \end{equation}
and its Lax pair is formed by the matrix $U$ given in \eqref{LaxA1} and the Darboux matrix
\begin{equation}\label{DTA1}
   M=\begin{pmatrix}
  \alpha p^{-1} & 0 \\ 0 & 0
 \end{pmatrix}\lambda +  \begin{pmatrix}
    0 &  \alpha \\ q_1 \alpha p^{-1} & 0
    \end{pmatrix}.
\end{equation}
 \item {\bf Only constant $\bm\beta$}. By setting $q_1 f=\beta$, we obtain system \eqref{A1dd} with $\alpha=\beta^{-1}$ up to an involution $\star$ defined by \eqref{transpose}.
 \item {\bf Both constants $\bm\alpha$ and $\bm\beta$}. We employ both constants of motion to determine $q_1 = \beta p \alpha^{-1}$, then, with a gauge transformation
$G ={\rm diag}(1,\ \beta_{-1})$ acting on matrices $U$ and $-M$ in \eqref{DTA1}, we obtain the modified Volterra equation \eqref{mVb} with $u = -p \alpha^{-1}$ and $\mu=-\alpha \beta_{-1}$,
 with Lax pair given by
\begin{equation}\label{MUmVL1}
 M=\begin{pmatrix}
  u^{-1} & 0 \\ 0 & 0
 \end{pmatrix}\lambda +  \begin{pmatrix}
    0 &  \mu \\ -1 & 0
    \end{pmatrix},\qquad  U = \begin{pmatrix}
           1 & 0 \\ 0 & -1
          \end{pmatrix} \lambda^2 + 2 \begin{pmatrix}
          0 & u \mu \\ - u_{-1} & 0
          \end{pmatrix}\lambda.
\end{equation}
\end{itemize}

\subsection*{Equation \eqref{DNLSI-2}}
We consider equation \eqref{DNLSI-2} with Lax representation \eqref{LaxA2}.
Thus the system \eqref{DT2} admits one constant of motion $ \phi = f$ satisfying $\phi_x=0$ and it becomes
\begin{equation*}
    \left\{\begin{array}{@{}l@{}}
            p_x = 2(p^2 q - \phi^{-1}p_1 q_1 \phi p)\\
            q_{1,x} = 2( q_1 \phi p q \phi^{-1} - p_1 q_1^2)
        \end{array}\right. 
\end{equation*}
Introducing a new variable
$$u =- \phi  p q,$$
it follows that
\begin{equation*}
 M=\begin{pmatrix}
 \phi & 0 \\
 0 & 0 \\
\end{pmatrix} \lambda
+
\begin{pmatrix}
 0 & \phi p \\
 -p^{-1}_1 \phi^{-1}_1 u_1 \phi & 0 \\
\end{pmatrix}
\end{equation*}
and the matrix $U$ for \eqref{LaxA2} becomes
\begin{equation*}
U= \begin{pmatrix}
 1 & 0 \\
 0 & -1 \\
\end{pmatrix}\lambda ^2  + 2
\begin{pmatrix}
 0 & p \\
 -p^{-1} \phi^{-1} u & 0 \\
\end{pmatrix} \lambda +2
\begin{pmatrix}
 \phi^{-1} u & 0 \\
 0 & \phi^{-1} u \\
\end{pmatrix}.
\end{equation*}
The above $M$ and $U$ provide a Lax representation for the system
\begin{equation*}
    u_x = 2 (\phi_{1}^{-1} u_1 u -  u u_{-1}\phi_{-1}^{-1}), \qquad p_x = 2 \left( \phi^{-1} \phi_1^{-1} u_1 \phi p- p \phi^{-1} u\right).
\end{equation*}
The evolution of $u$ with $\mu = \phi^{-1}$ is Volterra equation \eqref{Va}, while the evolution of $p$ is linear and it depends on both $u$ and $p$.
The latter can be eliminated by a gauge transformation with $G = {\rm diag}(1,\ p^{-1}\phi^{-1})$, leading to the Lax representation of the Volterra equation \eqref{Va}:
\begin{equation}\label{VLA2}
\begin{split}
 & M = \begin{pmatrix}
            \phi & 0 \\ 0 & 0
        \end{pmatrix} \lambda + \begin{pmatrix}
            0 & 1 \\ -u_1 \phi & 0
        \end{pmatrix},\\
        & U = \begin{pmatrix}
            1 & 0 \\ 0 & -1
        \end{pmatrix}\lambda^2+2\begin{pmatrix}
            0 & \phi^{-1} \\ -u & 0
        \end{pmatrix}\lambda +2\begin{pmatrix}
            \phi^{-1} u & 0 \\ 0 & \phi_1^{-1} u_1
        \end{pmatrix} .
\end{split}
\end{equation}

\subsection*{Equation \eqref{DNLSII}}
We consider equation \eqref{DNLSII} with Lax representation \eqref{LaxB1}.
As for equation \eqref{DNLSI-2}, $\phi=f$ is a constant of motion. Thus the system \eqref{DT2} becomes
\begin{equation}
\label{B1eq}
    \begin{cases}
           & p_x = -2 \phi^{-1}p_1 q_1 \phi p\\
           & q_{1,x} = 2 q_1 \phi p q \phi^{-1}
        \end{cases}
\end{equation}
Introduce the variable
$$u = - \phi p q.$$
The system above then reduces to Volterra equation \eqref{Va} with $\mu = \phi^{-1}$. The associated Lax pair, obtained through a gauge transformation $G = {\rm diag}(1,\ (\phi p )^{-1})$, is identical to \eqref{VLA2}.

System \eqref{B1eq} can also be reduced using its constant of motion, which is given by
\begin{equation}
\label{kappa}
    \kappa =- q_1 \phi \phi_{-1} p_{-1}, \qquad \kappa_x = 0.
\end{equation}
Solving for $q_1$ leads to $q_1 = -\kappa p_{-1}^{-1}\phi_{-1}^{-1}\phi^{-1}$. Substituting it into system \eqref{B1eq},
it follows
 \begin{equation}
 \label{B1dd}
      p_x=  2 \phi^{-1} p_1 \kappa p_{-1}^{-1}\phi_{-1}^{-1} p.
 \end{equation}
When $\kappa=\mathbbm{1}$,  it reduces to the potential Volterra equation \eqref{pV} (see in Appendix \ref{Appb-V}).

Equation \eqref{B1dd} becomes the modified Volterra equation \eqref{mVb} with $\mu = \kappa_1$
under the substitution $u=\kappa p_{-1}^{-1}\phi_{-1}^{-1} p \kappa_1^{-1}$.
The associated Lax representation matches \eqref{MUmVL1}
after a gauge transformation $G = {\rm diag}(p \kappa_1^{-1} u^{-1},\ 1)$.

As shown in Appendix \ref{Appb-V}, equation \eqref{B1dd} can also be transformed into \eqref{Va}.
Introducing the variable $u = - p \kappa_{-1} p_{-2}^{-1} \phi_{-2}^{-1}$, we get \eqref{Va} with $\mu = \phi_{-1}^{-1}$. The associated Lax representation, obtained via a gauge transformation $G = {\rm diag}(\phi_{-1},\ p^{-1})$ and shifting $\phi \mapsto \phi_1$,
is identical with \eqref{VLA2}.

\subsection*{Equation \eqref{DNLSII-2}}
We consider equation \eqref{DNLSII-2} with Lax representation \eqref{LaxB2}.
Thus the system \eqref{DT2} becomes
\begin{equation*}
    \left\{\begin{array}{@{}l@{}}
         f_{x}=2 (p_1 q_1 - q_1 p_1) f -2 f (p q- q p)\\
         p_{x}=2\left(p q -q p -f^{-1} p_1 q_1 f \right)p\\
         q_{1,x}= 2 q_1\left( q_1 p_1 - p_1 q_1+ f p q f^{-1}\right)
    \end{array}\right.
\end{equation*}
The quantity $\kappa$ in \eqref{kappa} is a constant of motion for this system. Using $\kappa$ to determine $q_1=-\kappa p_{-1}^{-1} f_{-1}^{-1} f^{-1}$ and substituting it into the system above, we get
 \begin{equation}
 \label{B2dd}
   \left\{\begin{array}{@{}l@{}}
       f_x = 2 (f p \kappa _{-1} p_{-2}^{-1} f_{-2}^{-1} f_{-1}^{-1}- p_1 \kappa  p_{-1}^{-1} f_{-1}^{-1}- f \kappa _{-1} p_{-2}^{-1} f_{-2}^{-1} f_{-1}^{-1} p+\kappa  p_{-1}^{-1} f_{-1}^{-1} f^{-1} p_1 f)\\
      p_x=  2( f^{-1} p_1 \kappa  p_{-1}^{-1} f_{-1}^{-1}+\kappa _{-1} p_{-2}^{-1} f_{-2}^{-1} f_{-1}^{-1} p- p \kappa _{-1} p_{-2}^{-1} f_{-2}^{-1} f_{-1}^{-1}  )p
  \end{array}\right.
 \end{equation}
Introducing a new variable
$$u=\kappa p_{-1}^{-1} f_{-1}^{-1}p \kappa_1^{-1},$$
system \eqref{B2dd} is transformed into the modified Volterra equation \eqref{mVb} with $\mu = \kappa_1$. The Lax representation is the same as in \eqref{MUmVL1}, obtained under the gauge transformation $G = {\rm diag}(p \kappa_1^{-1} u^{-1},\ 1)$.

\subsection*{Equation \eqref{DNLSII-3}}
We consider equation \eqref{DNLSII-3} with Lax representation \eqref{LaxB3}.
System \eqref{DT2} admits $\phi=f$ as a constant motion, therefore it becomes
\begin{equation*}
    \left\{\begin{array}{@{}l@{}}
        p_{x}= 2 \left( p^2 q-  p q p- \phi^{-1}p_1 q_1 \phi p\right)\\
        q_{1,x}=  2 \left(q_1 p_1 q_1- p_1 q_1^2 + q_1 \phi p q \phi^{-1}\right)
    \end{array}\right.
\end{equation*}
By letting
$$u=-\phi p q,$$
the system reduces to Volterra equation \eqref{Va} with $\mu = \phi^{-1}$. The Lax representation, obtained through a gauge transformation by $G = {\rm diag}(1,\ p^{-1}\phi^{-1})$ is the same as \eqref{VLA2}.

\subsection*{Equation \eqref{DNLSIII}}
We consider equation \eqref{DNLSIII} with Lax representation \eqref{LaxC1},
System \eqref{DT2} admits $\phi=f$ as a constant motion, therefore it becomes
\begin{equation*}
    \left\{\begin{array}{@{}l@{}}
         p_{x}=-2\left(p q  +\phi^{-1} p_1 q_1 \phi \right)p\\
         q_{1,x}=2 q_1\left( p_1 q_1+ \phi p q \phi^{-1}\right)
    \end{array}\right.
\end{equation*}
Introducing a new variable
$$u = - \phi p q,$$
the system reduces to Volterra equation \eqref{Va} with $\mu = \phi^{-1}$. The associated Lax pair, obtained though a gauge transformation
$G = {\rm diag}(1,\ (\phi p )^{-1})$ is identical to \eqref{VLA2}.

The above system can also be reduced using its constant of motion:
\begin{equation} \label{rhol}
    \rho = q_1 \phi p.
\end{equation}
Let $q_1=-\rho p^{-1} \phi^{-1}$, it reduces to
$$
p_x=  2( \phi^{-1} p_1 \rho + p \rho_{-1} p_{-1}^{-1} \phi_{-1}^{-1} p),
$$
which is equivalent to the Volterra equation \eqref{Va} with $\mu = \phi^{-1}$ under the transformation $u=\phi p \rho_{-1} p_{-1}^{-1} \phi_{-1}^{-1}$. The associated Lax representation is the same as \eqref{VLA2} via a gauge transformation $G={\rm diag}(1,\ p^{-1}\phi^{-1})$.

\subsection*{Equation \eqref{DNLSIII-2}}
We consider equation \eqref{DNLSIII-2} with Lax representation \eqref{LaxC2}.
Thus the system \eqref{DT2} becomes
\begin{equation*}
    \left\{\begin{array}{@{}l@{}}
            f_x = 2(f q p - q_1 p_1 f)\\
            p_x = -2(q p+ f^{-1}p_1 q_1 f )p\\
            q_{1,x} = 2q_1(q_1 p_1 + 2 f p q f^{-1})
    \end{array}\right.
\end{equation*}
The quantity $\kappa$ in \eqref{kappa} is a constant of motion for this system. Using $\kappa$ to determine $q_1=-\kappa p_{-1}^{-1} f_{-1}^{-1} f^{-1}$ and substituting it above, we get a system for $f$ and $p$:
 \begin{equation}
 \label{C2dd}
   \left\{\begin{array}{@{}l@{}}
            f_x = 2 \left(\kappa  p_{-1}^{-1} f_{-1}^{-1} f^{-1} p_{1} f-f \kappa _{-1} p_{-2}^{-1} f_{-2}^{-1} f_{-1}^{-1} p\right)\\
            p_{x} = 2 \left(f^{-1} p_{1} \kappa  p_{-1}^{-1} f_{-1}^{-1} +\kappa _{-1} p_{-2}^{-1} f_{-2}^{-1} f_{-1}^{-1} p\right)p
        \end{array}\right.
 \end{equation}
We introduce a new dependent variable
$$u=-\kappa p_{-1}^{-1} f_{-1}^{-1}p \kappa_1^{-1},$$
system \eqref{C2dd} is equivalent to the modified Volterra equation \eqref{mVb} with $\mu = \kappa_1$. The associated Lax representation is the same as \eqref{MUmVL1}, through a gauge transformation by $G = {\rm diag}(p \kappa_1^{-1} u^{-1},\ 1)$.
\subsection*{Summary}
In studying linear Darboux transformations for all seven integrable non-Abelian DNLS equations, the associated D$\Delta$Es \eqref{DT2} reduce to two scalar equations: the Volterra equation \eqref{Va} and the modified Volterra equation \eqref{mVb}. This reduction relies on constants of motion and the introduction of new dependent variables that depend on both $p$ and $q$, e.g. $u=-\phi pq$.
We list the constants of motion below:
 \begin{align*}
 &\mbox{constants used} && \mbox{DNLS equation}&& \mbox{reduced scalar equation}\\
     & \alpha =f p, \, \beta = q_1 f && \eqref{DNLSI} && \qquad \eqref{mVb}\\
   &  \phi = f&&   \eqref{DNLSI-2}, \eqref{DNLSII}, \eqref{DNLSII-3},\eqref{DNLSIII} &&\qquad \eqref{Va}\\
    &  \kappa =  q_1 f f_{-1} p_{-1}&&  {\eqref{DNLSII}, \eqref{DNLSII-2},\eqref{DNLSIII-2}} &&\qquad \eqref{mVb}\\
    &  \rho = q_1 f p && \eqref{DNLSIII} &&\qquad \eqref{Va}
 \end{align*}
The Volterra equations presented here involve non-commutative constants; setting these to $\mathbbm{1}$, the unity of the algebra,  \eqref{mVb} reduces to the modified Volterra ($\mathrm{mVL}^2$) at zero parameter and \eqref{Va} to the Volterra ($\mathrm{VL}^1$) system discussed by Adler in \cite{112}.
Further details on the relationships between these Volterra equations are provided in Appendix \ref{Appb-V}.

In the construction of linear Darboux transformations, one notices the recurrence of analogous or identical structures across the DNLS equations. This observation can be justified by analysing \eqref{DT2} more generally. We illustrate this with the following two cases, which cover all seven DNLS equations.

\begin{enumerate}
    \item[(i)] {\bf Case $\bm {P_{11}=- p q}$}.
Consider the system \eqref{DT2} associated with a up linear Darboux transformation. Applying the change the variables
\begin{equation*}
    u = pq,
\end{equation*}
immediately yields the system
\begin{equation*}
       \left\{\begin{array}{@{}l@{}}
         f_x= 2\mathcal{S}\left(u+ P_{11}\right) f - 2 f \left(u + P_{11}\right)\\
         u_x = 2\left(P_{11} u-u P_{11} \right)+2 \left(u f_{-1} u_{-1}f_{-1}^{-1}-f^{-1} u_1 f u\right)
    \end{array}\right.
\end{equation*}
Notice that the term $P_{22}$ vanishes completely. This change of variables is particularly effective in case $P_{11} = - p q$, as in \eqref{DNLSI-2}, \eqref{DNLSII}, \eqref{DNLSII-3} and \eqref{DNLSIII}, the variable $f$ becomes the constant $\phi$ and the whole system is equivalent to \eqref{Va} under the mapping $u \mapsto - \phi^{-1}u$. The associated Lax representation, corresponding with \eqref{VLA2}, is obtained from $(U,M_\uparrow(1))$ by a gauge transformation $G = {\rm diag}(1,\ (f p)^{-1})$.
\item[(ii)] {\bf Case $\bm {P_{22}=0}$}.
We consider a different change of variables
$$ \kappa = - q_1 f f_{-1}p_{-1}, \qquad u = q_1 f_1^{-1}q_2^{-1}, $$
where $\kappa$ is the same as in \eqref{kappa}.
System \eqref{DT2} expressed in the new variables yields
\begin{equation*}
 \left\{\begin{array}{@{}l@{}}
  \kappa_x = 2\cS\left(P_{22}\right)\kappa - 2 \kappa \cS^{-1}\left(P_{22}\right)\\
   u_x = 2u\left(u_1 \kappa_2 - \kappa_1 u_{-1}\right) u - 2 (u \kappa_1 P_{22} \kappa_1^{-1} - \kappa \cS^{-1}(P_{22}) \kappa^{-1} u)
 \end{array}\right.
\end{equation*}
When $P_{22} = 0$, the variable $\kappa$ becomes a constant of motion, as in \eqref{DNLSI}, \eqref{DNLSII}, \eqref{DNLSII-2} and \eqref{DNLSIII-2}. Similarly, the evolution of $u$ corresponds to \eqref{mVb} with $\mu = \kappa_1$. Applying a gauge transformation $G = {\rm diag}(p \kappa_1^{-1}u_1^{-1},\ 1)$, we get the associated Lax representation \eqref{MUmVL1}.
\end{enumerate}

This scheme of reductions, based on Lax representations with either $P_{11} = -p q$ or $P_{22} = 0$, is not the only possibility. Equivalent situations arise when considering $P_{11} = 0$ or $P_{22} = q p$, yielding similar results. When $P_{11} = 0$, the constant $\theta = f p q f_{-1}$ appears, and the consequent reduction leads to \eqref{mVb}. Conversely, when $P_{22} = q p$, the system admits the constant $\rho = q_1 f p$ (which also appears in the linear Darboux transformation for \eqref{DNLSIII}), resulting in a reduction to the $\cI$ involution of \eqref{Va}.

Many Lax representations in our list fit into more than one scheme. Consider, for instance, \eqref{LaxB1}, which is characterised by both $P_{11} = -p q$ and $P_{22} = 0$. In fact, the system in this case reduces to the potential Volterra equation \eqref{pV}, which can be transformed to either \eqref{Va} or \eqref{mVb} by Miura transformations as demonstrated in Appendix \ref{Appb-V}.

\subsection{Quadratic Darboux transformations}\label{quad}
In this section, we construct quadratic Darboux transformations of the DNLS equations. As we did for the linear transformations in Section \ref{Lin}, we first consider a rank-1 up Darboux matrix $M_\uparrow(2)$ as defined in \eqref{OriginalMatrix}
\begin{equation*}
 M_\uparrow(2)= \begin{pmatrix}
      \hh & 0 \\ 0 & 0 
     \end{pmatrix} \lambda^2 + \begin{pmatrix}
     0 &  \hh  p \\  q_1   \hh & 0
     \end{pmatrix} \lambda + \begin{pmatrix}
      a & 0 \\ 0 & b
     \end{pmatrix}.
     \label{OriginalMatrix0}
\end{equation*}
Note that, together with the dependent variables $p$ and $q$, $M_\uparrow(2)$ includes three auxiliary functions $\hh, a$ and $b$.
This case has been analysed for the commutative model \eqref{DNLS0I} in \cite{92, 80}.

Substituting $M_\uparrow(2)$ in the zero curvature condition \eqref{D3} with the generic representation $U$ in \eqref{LaxP1}, we obtain the following system:
\begin{equation}
 \left\{\begin{array}{@{}l@{}}
   p_x = 2 \hh^{-1} \left(p_1   b - a   p\right) + 2\left( P_{11} p - p P_{22}\right) + 2 (p q  - \hh^{-1}p_1 q_1 \hh)p\\ 
    q_{1,x} = 2 \left(q_1   a - b   q\right)\hh^{-1} +  2\mathcal{S}\left( P_{22} q - q P_{11}\right) - 2 q_1 (p_1 q_1 - \hh p q \hh^{-1})\\ 
    \hh_x=  2\mathcal{S}\left(P_{11} + p  q\right) \hh -2 \hh \left(P_{11} + p q\right)\\
   a_x=  2\mathcal{S}(P_{11}) a - 2a P_{11}\\
   b_x= 2 \mathcal{S}(P_{22}) b - 2 b P_{22}
\end{array}\right.
\label{DT3}
\end{equation}
Define a function $\rho$ by
\begin{equation}
 \rho = q_1 f p - b ,
 \label{tardip}
\end{equation}
it follows that
\begin{equation*}
 \rho_x = 2\mathcal{S}(P_{22}-q p)\rho - 2 \rho (P_{22}- q p) .
 \label{tardi}
\end{equation*}
Thus $\rho$ is a constant of motion for all those models associated with a Lax representation where $P_{22} = q p$ (in this case only \eqref{LaxC1} satisfies this condition). However, equation \eqref{tardi} is trivially satisfied by $\rho = 0$, therefore the assumption $b = q_1 \hh p$ is a valid reduction for all DNLS equations.
In Section \ref{sec45}, we interpret $\rho = 0$ as a necessary condition for $M_{\uparrow}(2)$ to decompose into two linear up Darboux matrices.

System \eqref{DT3} is evolutionary for the dependent variables $p,q$ and the three auxiliary functions $f, a$ and $b$.
To get systems with reduced numbers of dependent variables,
we carry out the following three reductions:
\begin{equation}\label{3cases}
{\rm (i)\ only}\ a =0; \qquad {\rm (ii)\ only}\ b = 0;\qquad {\rm (iii)\ only}\ \rho = 0.
\end{equation}
Clearly we can not set either $\hh =0$ nor $a=b=0$, as either reduction would cause $M_\uparrow(2)$ to reduce to the linear matrix $M_\uparrow(1)$.
When $\rho = a = 0$, this results in a trivial Darboux transformation. Thus we also exclude it.

We will analyse the effects of the above reductions on equation \eqref{DNLSI}. For all other DNLS equations, analogous results obtained through similar reasoning will be summarised without detailed discussion.

\subsubsection*{Equation \eqref{DNLSI}}
We consider equation \eqref{DNLSI} with Lax representation \eqref{LaxA1}.
It follows from system  \eqref{DT3} that the auxiliary functions $a$ and $b$ are constant, denoted respectively by $\alpha$ and $\beta$. Consequently, \eqref{DT3} reduces to the following system where $f$ is the only auxiliary function:
\begin{equation}
    \left\{\begin{array}{@{}l@{}}
   \hh_{x}=2(p_1q_1 \hh -\hh pq) \\
    p_{x}=2 \hh^{ -1}(p_1\beta - \alpha p ) + 2 (p   q  - \hh^{ -1}  p_1  q_1  \hh  )p  \\
   q_{1,x}= 2(q_1  \alpha   -  \beta q ) \hh^{-1}+2 q_1(  \hh   p   q  \hh^{-1}-  p_1  q_1)
 \end{array}\right.
 \label{bohhhhhh}
\end{equation}
We now carry out the three reductions listed in \eqref{3cases}.
\begin{enumerate}
 \item[(i)] {\bf Only $\bm {\alpha=0}$}.
 We introduce two new dependent variables
 $$u = q_1   \hh   p, \qquad v = - q   \hh^{-1}   q_1^{-1},$$
  and use them to express $\hh$ and $q$. This leads to
    \begin{equation}
    \left\{\begin{array}{@{}l@{}}
   u_{x}=2v_1 u_1 (u-\beta )-2(u-\beta ) v u\\
    v_{x}=2 (v (u-\beta ) -(u_{-1}-\beta _{-1}) v_{-1} )v  \\
   p_{x}=2 p (u^{-1} v_1 u_1 (u-\beta )-v u)
 \end{array}\right.
 \label{SIa0eq}
\end{equation}
with a Lax representation given by
\begin{equation}
    \begin{split}
        & M=
\begin{pmatrix}
 -p_1 u_1^{-1} v_1^{-1} u p^{-1} & 0 \\
 0 & 0 \\
\end{pmatrix}\lambda ^2
+
\begin{pmatrix}
 0 & -p_1 u_1^{-1} v_1^{-1} u \\
 u p^{-1} & 0 \\
\end{pmatrix}\lambda+
\begin{pmatrix}
 0 & 0 \\
 0 & \beta  \\
\end{pmatrix},
\nonumber\\
  & U =
\begin{pmatrix}
 1 & 0 \\
 0 & -1 \\
\end{pmatrix}\lambda ^2 +2
\begin{pmatrix}
 0 & p \\
 -v u p^{-1} & 0 \\
\end{pmatrix}\lambda.
    \end{split}
\end{equation}
Notice that for system \eqref{SIa0eq} the dynamics of $u$ and $v$ depend on $u$ and $v$ only, while the dynamics of $p$ is linear in $p$.
The system of $u$ and $v$  becomes \eqref{N1} under $u \mapsto u + \beta$ and $\mu = \beta$.
As shown in Appendix \ref{Appb-Q1}, $\eqref{N1}$ is related to Ablowitz-Ladik lattice \eqref{AL} via a Miura transformation.

The presence of $p$ in the resulting Lax representation can be removed by a gauge transformation $G={\rm diag}( p u^{-1},\ 1)$.
Recalling $u \mapsto u+\beta$, we obtain a Lax representation of \eqref{N1} as follows:
\begin{equation}
\begin{split}
  & M=\begin{pmatrix}
  v_1^{-1} & 0 \\ 0 & 0
 \end{pmatrix}\lambda^2 +  \begin{pmatrix}
    0 &  v_1^{-1} (u + \beta) \\ -1 & 0
    \end{pmatrix} \lambda- \begin{pmatrix}
    0 & 0 \\ 0 & \beta
    \end{pmatrix},\\
    & U = \begin{pmatrix}
           1 & 0 \\ 0 & -1
          \end{pmatrix}\lambda^2 + 2 \begin{pmatrix}
          0 & u + \beta \\ -v & 0
          \end{pmatrix} \lambda+ 2 \begin{pmatrix}
          \beta v & 0 \\ 0 & 0
          \end{pmatrix}.
    \end{split}\label{0018}
\end{equation}
\item[(ii)] {\bf Only $\bm {\beta=0}$}.
Define two new variables
$$u = - p_1 q_1, \qquad v = \hh^{-1}.$$
As before, we use $u$ and $v$ to express $q$ and $\hh$. The variable $p$ in the Lax representation is eliminated via a gauge transformation by the matrix $G={\rm diag}(1,\ p^{-1} v)$. This leads to
\begin{equation}
\begin{split}
  & M=\begin{pmatrix}
  v^{-1} & 0 \\ 0 & 0
 \end{pmatrix}\lambda^2 + \begin{pmatrix}
    0 & 1 \\ -v_1^{-1} u v^{-1} & 0
    \end{pmatrix} \lambda +  \begin{pmatrix}
    \alpha & 0 \\ 0 & 0
    \end{pmatrix},\\
    & U = \begin{pmatrix}
           1 & 0 \\ 0 & -1
          \end{pmatrix}\lambda^2 + 2  \begin{pmatrix}
          0 & v \\ -v^{-1} u_{-1} & 0
          \end{pmatrix} \lambda- 2 \begin{pmatrix}
          0 & 0 \\ 0 & \alpha v
          \end{pmatrix}.
\end{split}
 \label{DT2NLSIc}
\end{equation}
The resulting integrable system corresponds with \eqref{N2} when $\mu = \al$.
Taking the Abelian version of \eqref{N2} with the transformation $u \mapsto -u$ yields the relativistic Toda lattice \cite{174,175}. Since \eqref{N2} is linked with \eqref{N1} as shown in Appendix \ref{Appb-Q1}, it is consequently linked with the Ablowitz-Ladik lattice \eqref{AL}.
\item[(iii)] {\bf Only $\bm {\rho=0}$}. Since $b=\beta$ for \eqref{DNLSI}, it follows from $\rho=0$ that $\hh = q_1^{-1}\beta  p^{-1}$.
Thus system \eqref{bohhhhhh} reduces to a system for $p$ and $q$:
\begin{equation}
\left\{\begin{array}{@{}l@{}}
      p_{x}=  2p (q -  \beta^{-1} q_1 \alpha ) p\\
      q_{1,x}= q_1 (\alpha p \beta^{-1} - p_1 ) q_1
  \end{array}\right.
  \label{DT2NLSIaa}
\end{equation}
with a Lax pair given by
\begin{eqnarray*}
  && M= \begin{pmatrix}
  q_1^{-1} \beta p^{-1} & 0 \\ 0 & 0
 \end{pmatrix}\lambda^2 + \begin{pmatrix}
    0 & q_1^{-1} \beta \\ \beta p^{-1} & 0
    \end{pmatrix}\lambda +  \begin{pmatrix}
     \alpha & 0 \\ 0 & \beta
    \end{pmatrix},\\
    && U = \begin{pmatrix}
           1 & 0 \\ 0 & -1
          \end{pmatrix}\lambda^2 + 2  \begin{pmatrix}
          0 & p \\ q & 0
          \end{pmatrix}\lambda.
\end{eqnarray*}
Let
$$ u = - \beta^{-1} q_1 \alpha p, \qquad v = - \beta_{-1}^{-1} q p.$$
System \eqref{DT2NLSIaa} becomes the two-component Volterra lattice \eqref{2Vb} with $\mu=\beta_{-1}$ (we refer Appendix \ref{Appb-V} for the different versions of non-Abelian Volterra equations). Employing a gauge transformation $G={\rm diag}(p,\  \beta_{-1})$, we obtain a Lax representation for \eqref{2Vb}:
\begin{equation}
    \begin{split}
  & M=\begin{pmatrix}
    v_1^{-1} & 0 \\ 0 & 0
 \end{pmatrix}\lambda^2 +  \begin{pmatrix}
    0 &  v_1^{-1}\beta_{-1} \\ -1 & 0
    \end{pmatrix} \lambda-  \begin{pmatrix}
    v_1^{-1} u & 0 \\ 0 & \beta_{-1}
    \end{pmatrix},\\
    & U = \begin{pmatrix}
           1 & 0 \\ 0 & -1
          \end{pmatrix}\lambda^2 + 2  \begin{pmatrix}
          0 & \beta_{-1} \\ - v & 0
          \end{pmatrix}\lambda + 2 \begin{pmatrix}
          \beta_{-1}v- u & 0 \\ 0 & 0
          \end{pmatrix}.
    \end{split}\label{0019}
\end{equation}
\end{enumerate}
\subsubsection*{Equation \eqref{DNLSI-2}}
From the Lax representation \eqref{LaxA2} and the system \eqref{DT3}, it follows that $\hh_x=0$. Let $\hh=\phi$, and system \eqref{DT3} becomes
\begin{equation}
\label{A2qdd}
    \left\{\begin{array}{@{}l@{}}
         a_{x}=2 (a p q- p_1 q_1 a)\\
         b_{x}=2 (b p q- p_1 q_1 b)\\
         p_{x}=2 \phi^{-1}\left( p_1 b- a p\right)+2\left(p^2 q-\phi^{-1} p_1 q_1 \phi p\right)\\
         q_{1,x}=2 \left(q_1 a- b q \right)\phi^{-1}+2 \left( q_1 \phi p q \phi^{-1}- p_1 q_1^2\right)
    \end{array}\right.
\end{equation}
Again we carry out the reductions in the three cases listed in \eqref{3cases}.
\begin{enumerate}
 \item[(i)] {\bf Only $\bm {a=0}$}.
We introduce the transformation
$$
  u = \phi^{-1}\left(p_1 q_1 \phi - v_1\right) , \qquad v = p b_{-1} p_{-1}^{-1}
$$
and use them to replace $b$ and $q$ in \eqref{A2qdd}. The reduced system of $u$ and $v$ is transformed into \eqref{rT} when $\mu = \phi^{-1}_{-1}$.
The variable $p$ in the Lax representation can be eliminated with the gauge matrix $G = {\rm diag}(1,\ p^{-1})$, which leads to a Lax representation for \eqref{rT}:
\begin{equation}
\begin{split}
     & M=
\begin{pmatrix}
 \phi & 0 \\
 0 & 0 \\
\end{pmatrix}\lambda ^2 
+
\begin{pmatrix}
 0 & \phi \\
 \phi u+v_1 & 0 \\
\end{pmatrix}\lambda 
+
\begin{pmatrix}
 0 & 0 \\
 0 & v_1 \\
\end{pmatrix},
\\
        & U=\begin{pmatrix}
 1 & 0 \\
 0 & -1 \\
\end{pmatrix}
\lambda ^2 + 2 
\begin{pmatrix}
 0 & 1 \\
 \left(v+\phi_{-1} u_{-1}\right) \phi^{-1}_{-1} & 0 \\
\end{pmatrix}\lambda 
-2 
\begin{pmatrix}
 \left(v+\phi_{-1} u_{-1}\right) \phi^{-1}_{-1} & 0 \\
 0 & u \\
\end{pmatrix}.
    \end{split}
    \label{A2a0}
\end{equation}
 \item[(ii)] {\bf Only $\bm {b=0}$}.
Consider the new variables
$$u = p q, \qquad v =a_{-1}.$$
We rewrite system \eqref{A2qdd} in terms of $u, v$ and $p$. The resulting system of $u$ and $v$ is
the relativistic Toda \eqref{rT} with $\mu = \phi^{-1}_{-1}$.
The variable $p$ in the Lax representation can be removed by a gauge transformation with $G = {\rm diag}(1,\ p^{-1})$, which leads to a Lax representation for \eqref{rT}.
\begin{equation}
\begin{split}
    & M= 
\begin{pmatrix}
 \phi & 0 \\
 0 & 0 \\
\end{pmatrix}
\lambda ^2+ 
\begin{pmatrix}
 0 & \phi \\
 u_1 \phi & 0 \\
\end{pmatrix}
\lambda+
\begin{pmatrix}
 v_1 & 0 \\
 0 & 0 \\
\end{pmatrix},
\\
        & U=
\begin{pmatrix}
 1 & 0 \\
 0 & -1 \\
\end{pmatrix} \lambda ^2
+  2 
\begin{pmatrix}
 0 & 1 \\
 u & 0 \\
\end{pmatrix}
\lambda-2
\begin{pmatrix}
 u & 0 \\
 0 & \phi^{-1} \left(u_1 \phi+v_1\right) \\
\end{pmatrix}.
    \end{split}
    \label{001}
\end{equation}
 \item[(iii)] {\bf Only $\bm {\rho=0}$}.
Substituting $b=q_1 \phi p$ into system \eqref{A2qdd}, and introducing the change of variables
$$u = - p q, \qquad v = - a_{-1},$$
we obtain \eqref{2Va} with $\mu = \phi^{-1}_{-1}$.
The variable $p$ can be removed by a gauge transformation with $G = {\rm diag}(1,\ p^{-1})$. This leads to a Lax representation
for \eqref{2Va}:
\begin{equation}
\begin{split}
     & M=
\begin{pmatrix}
 \phi & 0 \\
 0 & 0 \\
\end{pmatrix}\lambda ^2
+
\begin{pmatrix}
 0 & \phi \\
 -u_1 \phi & 0 \\
\end{pmatrix}
\lambda -
\begin{pmatrix}
 v_1 & 0 \\
 0 & u_1 \phi \\
\end{pmatrix},\\
    &  U= \begin{pmatrix}
 1 & 0 \\
 0 & -1 \\
\end{pmatrix}\lambda ^2  
 +2
\begin{pmatrix}
 0 & 1 \\
 -u & 0 \\
\end{pmatrix}
\lambda +  2 
\begin{pmatrix}
 u & 0 \\
 0 & \phi^{-1} v_1 \\
\end{pmatrix}.
\end{split}
\label{009}
\end{equation}
\end{enumerate}
\subsubsection*{Equation \eqref{DNLSII}}
It follows from the Lax representation \eqref{LaxB1} and the system \eqref{DT3} that $\hh_x=0$ and $b_x = 0$. We denote them as $\phi=f$ and $\beta=b$.
The system \eqref{DT3} becomes
    \begin{equation}
     \left\{\begin{array}{@{}l@{}}
   a _{x}=2(a p q-p_1 q_1 a)\\
   p_{x}= 2 \phi^{-1} (p_1 \beta - a p ) -2 \phi^{-1} p_1 q_1 \phi p\\
   q_{1,x}= 2 (q_1 a - \beta q) \phi^{-1} + 2 q_1 \phi p q \phi^{-1}
 \end{array}\right.
 \label{DT33}
\end{equation}
\begin{enumerate}
 \item[(i)] {\bf Only $\bm {a=0}$}.
Considering the change of variables
$$u = \phi p, \qquad v = q_1,$$
we obtain the Ablowitz-Ladik system \eqref{AL} with $\mu = \phi^{-1}$ and $\nu = \beta$. The associated Lax pair is
\begin{equation}
    \begin{split}
        & M=
\begin{pmatrix}
 \phi & 0 \\
 0 & 0 \\
\end{pmatrix}\lambda ^2+ 
\begin{pmatrix}
 0 &  u \\
 v \phi & 0 \\
\end{pmatrix}  \lambda +
\begin{pmatrix}
 0 & 0 \\
 0 & \beta \\
\end{pmatrix},\\
    &  U=
\begin{pmatrix}
 1 & 0 \\
 0 & -1 \\
\end{pmatrix}\lambda ^2  + 2  
\begin{pmatrix}
 0 & \phi^{-1} u \\
 v_{-1} & 0 \\
\end{pmatrix} \lambda- 2 
\begin{pmatrix}
\phi^{-1} u v_{-1} & 0 \\
 0 & 0\\
\end{pmatrix}.
    \end{split}\label{004bis}
\end{equation}
 \item[(ii)] {\bf Only $\bm {\beta=0}$}.
 Considering the change of variables
 $$u =p q, \qquad v = a_{-1},$$
 the reduced system is equivalent to \eqref{rT} with $\mu = \phi^{-1}_{-1}$.
The variable $p$ can be removed by a gauge transformation with $G={\rm diag}(1,\ p^{-1})$, the resulting Lax representation is identical to \eqref{001}.
 \item[(iii)] {\bf Only $\bm {\rho=0}$}.
 This reduction suggest to set $q_1=\beta p^{-1} \phi^{-1}$. Considering the change of variables
 $$u = -p \beta_{-1} p_{-1}^{-1} \phi^{-1}_{-1}, \qquad v = - a_{-1},$$
the resulting system is equivalent to \eqref{2Va} with $\mu = \phi^{-1}_{-1}$.
The variable $p$ can be removed via the gauge transformation $G = \mathrm{diag}(1,\ p^{-1})$, resulting in a Lax representation identical to \eqref{009}.
\end{enumerate}
\subsubsection*{Equation \eqref{DNLSII-2}}
It follows from the Lax representation \eqref{LaxB2} and system \eqref{DT3} that $b_x=0$. We denote it as $\beta=b$.
Now system \eqref{DT3} becomes
\begin{equation*}
    \left\{\begin{array}{@{}l@{}}
         a_{x}=2 (a q p- q_1 p_1 a)\\
         \hh_{x}=2 (p_1 q_1 - q_1 p_1) \hh -2 \hh (p q - q p)\\
         p_{x}=2 \hh^{-1}  (p_1 \beta-a p)+2 (p q  -q p -\hh^{-1} p_1 q_1 \hh)p\\
         q_{1,x}=2 (q_1 a -\beta q )\hh^{-1}+2q_1(q_1 p_1- p_1 q_1 + \hh p q \hh^{-1})
    \end{array}\right.
\end{equation*}
\begin{enumerate}
 \item[(i)] {\bf Only $\bm {a=0}$}.
 Considering the change of variables
 $$ u = q_1 \hh p- \beta, \qquad v = - q \hh^{-1} q_1^{-1},$$
the resulting system of $u$ and $v$ is equivalent to \eqref{N1} with $\mu = \beta$.
The variable $p$ in the Lax representation can be removed by a gauge transformation $G={\rm diag}( p(u + \beta)^{-1},\ 1)$, leading to
a Lax representation identical to \eqref{0018}.
 \item[(ii)] {\bf Only $\bm {\beta=0}$}.
This case admits a new constant of motion
\begin{equation}
    \theta = (q_1 \hh a^{-1}\hh p)^{-1}, \label{theta}
\end{equation}
implying $a=\hh p \theta q_1 \hh$. Then we consider the change of variables
$$u = q p, \qquad v = -q_1 \hh p.$$
It leads to a system of $u$ and $v$ as follows:
\begin{equation}
        \left\{\begin{array}{@{}l@{}}
         u_x = 2(u \theta v - v_{-1} \theta_{-1} u) + 2 (v_{-1} u_{-1}v_{-1}^{-1} u - u v^{-1}u_1 v) \\
         v_x = 2 ( v u - u_1 v)
    \end{array}\right.\label{007}
\end{equation}
which is equivalent to the involution $\cI$ (defined in \eqref{involp}) of \eqref{N2} with $\mu = \theta$.
The variable $p$ can be removed by a gauge transformation $G={\rm diag}( p v^{-1},\ 1)$, resulting in the associated Lax representation
\begin{equation}
 \begin{split}
    & M=\begin{pmatrix}
        v_1 u_{1}^{-1}  & 0 \\ 0 & 0
    \end{pmatrix}\lambda^2+\begin{pmatrix}
        0 & v_1 u_{1}^{-1}  v \\  1 & 0
    \end{pmatrix}\lambda-\begin{pmatrix}
        v_1 u_{1}^{-1}  v \theta & 0 \\ 0 & 0
    \end{pmatrix},\\
    & U=\begin{pmatrix}
        1 & 0 \\ 0 & -1
    \end{pmatrix}\lambda^2+2\begin{pmatrix}
        0 & v \\ u v^{-1}  & 0
    \end{pmatrix}\lambda - 2\begin{pmatrix}
        v \theta & 0 \\ 0 & 0
    \end{pmatrix}.
   \end{split} \label{LaxIN2}
\end{equation}
\item[(iii)] {\bf Only $\bm {\rho=0}$}. This implies that $q_1 \hh p=\beta$, that is $\hh= q_1^{-1} \beta p^{-1}$.
Consider the change of variables
$$ u = - \beta^{-1} q_1 a p, \qquad v = - \beta_{-1}^{-1} q p. $$
Then the system of $u$ and $v$ is equivalent to \eqref{2Vb} with $\mu = \beta_{-1}$. The variable $p$ can be removed by a gauge transformation
$G ={\rm diag}(p,\ \beta_{-1})$, resulting in the Lax representation \eqref{0019}.
\end{enumerate}
\subsubsection*{Equation \eqref{DNLSII-3}}
Following from the Lax representation \eqref{LaxB3} and system \eqref{DT3}, we deduce that $\hh_x=0$. We denote it as $\hh = \phi$.
Now system \eqref{DT3} is rewritten as
\begin{equation*}
    \left\{\begin{array}{@{}l@{}}
       a_{x}=2 (a p q- p_1 q_1 a)\\
        b_{x}=2 (q_1 p_1- p_1 q_1 )b -2 b(q p-p q)\\
        p_{x}= 2 \phi^{-1}(p_1 b-a p)+2 (  p^2 q-  p q p- \phi^{-1}p_1 q_1 \phi p)\\
        q_{1,x}=2 (q_1 a-b q)\phi^{-1} +2 ( q_1 p_1 q_1 - p_1 q_1^2+ q_1 \phi p q \phi^{-1} )
    \end{array}\right.
\end{equation*}
\begin{enumerate}
 \item[(i)] {\bf Only $\bm {a=0}$}.
 Considering the change of variables 
$$u_1 =(\phi p  - q^{-1}_1 b)q, \qquad v = p q,$$
the resulting system for $u$ and $v$ becomes
\begin{equation}
        \left\{\begin{array}{@{}l@{}}
          u_x = 2(u v_{-1} -v u  ) + 2u  ( u_{-1}\phi_{-2}^{-1} -   \phi_{-1}^{-1}u)\\
         v_x = 2(v u \phi_{-1}^{-1}-\phi^{-1} v_{1} u_1)
    \end{array}\right.\label{0011}
\end{equation}
which is equivalent to the involution $\cI$ of \eqref{N3} with $\mu = \phi_{-1}^{-1}$.
The variable $p$ can be eliminated by a gauge transformation $G ={\rm diag}(1,\ p^{-1} v)$, resulting a Lax representation of \eqref{0011}:
\begin{eqnarray*}
    && M=\begin{pmatrix}
        \phi & 0 \\ 0 & 0
    \end{pmatrix}\lambda^2+\begin{pmatrix}
        0 & \phi v \\  \phi & 0
    \end{pmatrix}\lambda+\begin{pmatrix}
        0 & 0 \\ 0 &  \phi  v-u_1
    \end{pmatrix},\\
    && U=\begin{pmatrix}
        1 & 0 \\ 0 & -1
    \end{pmatrix}\lambda^2+2\begin{pmatrix}
        0 & v \\ 1 & 0
    \end{pmatrix}\lambda-2\begin{pmatrix}
        v & 0 \\ 0 &   u \phi_{-1}^{-1}
    \end{pmatrix}.
\end{eqnarray*}
 \item[(ii)] {\bf Only $\bm {b=0}$}.
Considering the change of variables
$$u =p q, \qquad v = a_{-1},$$
the reduced system for $u$ and $v$ becomes \eqref{rT} with $\mu = \phi^{-1}_{-1}$.
The variable $p$ can be eliminated by a gauge transformation $G={\rm diag}(1,\ p^{-1})$, the resulting Lax representation is identical to \eqref{001}.
 \item[(iii)] {\bf Only $\bm {\rho=0}$}. This implies that $q_1 \phi p=b$. Setting $q_1=b p^{-1}\phi^{-1}$ and
considering the change of variables
$$ u = -p q, \qquad v = - a_{-1}, $$
the reduced system for $u$ and $v$ becomes \eqref{2Va} with $\mu = \phi^{-1}_{-1}$. The variable $p$ can be removed by a gauge transformation $G={\rm diag}(1,\ p^{-1})$ and it leads to a Lax representation identical to \eqref{009}.
\end{enumerate}

\subsubsection*{Equation \eqref{DNLSIII}}
Following from the Lax representation \eqref{LaxC1} and system \eqref{DT3}, we deduce that $\hh_x=0$. We denote it as $\hh = \phi$. Now system \eqref{DT3} becomes
\begin{equation}
    \left\{\begin{array}{@{}l@{}}
         a_{x}=2 (a p q-p_1 q_1 a)\\
         b_{x}=2 (q_1 p_1 b- b q p)\\
         p_{x}=2 \phi^{-1} (p_1 b-a p)-2 ( p q + \phi^{-1} p_1 q_1 \phi)p\\
         q_{1,x}=2 (q_1 a- b q )\phi^{-1}+2 q_1 ( p_1 q_1 + \phi p q \phi^{-1})
    \end{array}\right.
    \label{c1dd}
\end{equation}
For this equation, $\rho$ given in \eqref{tardip} is a constant of motion. We define $b = q_1 \phi p - \rho$. Thus the system above is reduced to
\begin{equation}
        \left\{\begin{array}{@{}l@{}}
         a_{x}=2 (a p q- p_1 q_1 a)\\
         p_{x}=-2 \phi^{-1}\left(a p + p_1 \rho + \phi p q p\right)\\
         q_{1,x}=2 \left( q_1 a + \rho q + q_1 p_1 q_1 \phi\right)\phi^{-1}
    \end{array}\right.
    \label{c1dd2}
\end{equation}
\begin{enumerate}
 \item[(i)] {\bf Only $\bm {a=0}$}.
The reduced system from \eqref{c1dd2} becomes equivalent to \eqref{MRT} with $u=p, v=q$,
$\mu = \phi^{-1}$ and $\nu = - \rho$. It follows that a Lax representation in terms of $u$ and $v$ is
\begin{equation}
\begin{split}
    & M=\begin{pmatrix}
        \phi & 0 \\ 0 & 0
    \end{pmatrix}\lambda^2+\begin{pmatrix}
        0 & \phi u  \\  v_1 \phi & 0
    \end{pmatrix}\lambda+\begin{pmatrix}
        0 & 0 \\ 0 & v_1 \phi u  - \rho
    \end{pmatrix},\\
    & U=\begin{pmatrix}
        1 & 0 \\ 0 & -1
    \end{pmatrix}\lambda^2+2\begin{pmatrix}
        0 & u \\ v & 0
    \end{pmatrix}\lambda+2\begin{pmatrix}
        - u v & 0 \\ 0 & v u
    \end{pmatrix}.\\
\end{split}
\label{0012}
\end{equation}
 \item[(ii)] {\bf Only $\bm {b=0}$}.
 Consider the change of variables
 $$ u = p q, \qquad v = a_{-1}. $$
 The reduced system from \eqref{c1dd} becomes \eqref{rT} with $\mu = \phi^{-1}_{-1}$.
 The variable $p$ can be eliminated by a gauge transformation $G={\rm diag}(1,\ p^{-1})$, resulting in a Lax representation identical to \eqref{001}.
 \item[(iii)] {\bf Only $\bm {\rho=0}$}.
 Consider the change of variables
 $$ u = -p q, \qquad v = -a_{-1}. $$
 The reduced system from \eqref{c1dd2} becomes \eqref{2Va} with $\mu = \phi^{-1}_{-1}$. The variable $p$ can be eliminated by a gauge transformation $G={\rm diag}(1,\ p^{-1})$, leading to a Lax representation identical to \eqref{009}.
\end{enumerate}
\subsubsection*{Equation \eqref{DNLSIII-2}}
It follows from the Lax representation \eqref{LaxC2} and system \eqref{DT3} that $b_x=0$. We denote it by $b=\beta$.
Now system \eqref{DT3} becomes
\begin{equation*}
    \left\{\begin{array}{@{}l@{}}
         \hh_{x}=2 \left(\hh q p- q_1 p_1 \hh\right)\\
         a_{x}=2 a( p q+  q p)-2 \left(p_1 q_1 +q_1 p_1 \right)a\\
         p_{x}=2 \hh^{-1}\left( p_1 \beta - a p\right) -2(q p+\hh^{-1} p_1 q_1 \hh )p\\
         q_{1,x}=2 \left(q_1 a- \beta q \right) \hh^{-1}+2q_1(q_1 p_1 +  \hh p q \hh^{-1})
    \end{array}\right.
\end{equation*}
\begin{enumerate}
 \item[(i)] {\bf Only $\bm {a=0}$}.
 Consider the change of variables
 $$ u =  q_1 f p - \beta, \qquad v = q p (u+\beta)^{-1}. $$
 The reduced system of $u$ and $v$ becomes \eqref{N1} with $\mu = \beta$. The variable $p$ can be eliminated by a gauge transformation $G ={\rm diag}(p (u+\beta)^{-1},\  1)$, and the resulting Lax representation is identical to \eqref{0018}.
 \item[(ii)] {\bf Only $\bm {\beta=0}$}.
In this case, the reduction admits the constant of motion $\theta$ defined in \eqref{theta}, which we use to determine $a$
by setting $a=f p \theta q_1 f$. Considering the change of variables
$$ u = qp, \qquad v = -q_1 f p,$$
the reduced system is equivalent to the involution $\cI$ in \eqref{007} of \eqref{N2} with $\mu=\theta$.  The variable $p$ can be eliminated by a gauge transformation $G ={\rm diag}(p v^{-1},\ 1)$, and the resulting Lax representation is identical to \eqref{LaxIN2}.

 \item[(iii)] {\bf Only $\bm {\rho=0}$}. This implies that $q_1 \hh p=\beta$, that is, $\hh= q_1^{-1} \beta p^{-1}$.
 Considering the change of variables
 $$u =  - q_1 a p \beta^{-1}, \qquad v = - q p \beta^{-1}, $$
 the reduced system becomes \eqref{2Vb} with $\mu = \beta$.
 The variable $p$ can be eliminated by a gauge transformation $G ={\rm diag}(p \beta^{-1},\ 1)$, and it leads to a Lax representation identical to \eqref{0019}.
\end{enumerate}
\subsection*{Summary}
The construction of quadratic Darboux matrices for DNLS equations is based on system \eqref{DT3}, which involves the dependent variables $p$ and $q$ along with three auxiliary functions.
For each of the seven integrable non-Abelian DNLS systems, we implement the three reductions specified in \eqref{3cases}. The resulting models are summarised in the following table:
    \begin{align*}
          && a=0 && b=0 && \rho = 0\\
         \eqref{DNLSI} && \eqref{N1} && \eqref{N2} && \eqref{2Vb} \\
         \eqref{DNLSI-2} && \eqref{rT} && \eqref{rT} && \eqref{2Va}  \\
         \eqref{DNLSII} && \eqref{AL} &&  \eqref{rT} && \eqref{2Va}\\
         \eqref{DNLSII-2} && \eqref{N1} && \cI\eqref{N2} && \eqref{2Vb}\\
         \eqref{DNLSII-3} && \cI\eqref{N3} && \eqref{rT} && \eqref{2Va}\\
         \eqref{DNLSIII} && \eqref{MRT} && \eqref{rT} && \eqref{2Va}\\
         \eqref{DNLSIII-2} && \eqref{N1} && \cI\eqref{N2} && \eqref{2Vb}
    \end{align*}
The symbol $\cI$ denotes the involution defined in \eqref{involp}. To the best of our knowledge, systems \eqref{N1} and \eqref{N2} and \eqref{N3} are new. In Appendix \ref{AppD}, we
present Miura transformations mapping \eqref{AL} to \eqref{N1} and \eqref{N2}, and \eqref{MRT} to \eqref{N3}.

As in the linear case in Section \ref{Lin}, many DNLS equations exhibit analogous, sometimes identical, structures. This similarity arises because their reductions are based on Lax matrices $U$ sharing common characteristics. In particular, the reduction procedures for $b = 0$ and $\rho = 0$ are notably similar. We demonstrate this by introducing a parameter $\varepsilon$, taking values $1$ or $0$, and setting $b = \varepsilon q_1 f p$. The case $\varepsilon = 0$ corresponds to the reduction $b = 0$, while $\varepsilon = 1$ yields $\rho = 0$. To clarify the relationship between these reductions, we define the variables
    $$ u = p_1 q_1, \qquad w = f^{-1}. $$
The system \eqref{DT3} associated with $M_\uparrow(2)$ then becomes:
    \begin{equation}
    \left\{\begin{array}{@{}l@{}}
         a_x = 2\cS(P_{11})a -2 a P_{11}\\
         w_x =  2 (P_{11} + u_{-1})w-2w\cS(P_{11} + u_{-1}) \\
         u_x =2 \left(\cS(P_{11}) u-u \cS(P_{11})+u a w-w_1 a_1 u\right) +2(\varepsilon -1) \left(w_1 u_1 w_1^{-1} u-u w^{-1} u_{-1} w\right)
    \end{array}\right.
    \label{500}
\end{equation}
It is clear that when $P_{11} = 0$, system \eqref{500} admits $a$ as a constant of motion, and when $P_{11} = -pq$, it admits $w$ as a constant of motion. In both cases, the system reduces from three equations to two.
\begin{enumerate}
    \item[(i)] {\bf Case $\bm {P_{11}=0}$}. In this case, we have $a_x=0$ and denote it by $\alpha=a$. System \eqref{500} is reduced to
    \begin{equation*}
    \left\{\begin{array}{@{}l@{}}
         w_x =  2\left(  u_{-1}w-wu\right) \\
         u_x =2 \left(u \alpha w-w_1 \alpha_1 u\right)+2(\varepsilon -1) \left(w_1 u_1 w_1^{-1} u-u w^{-1} u_{-1} w\right)
    \end{array}\right.
\end{equation*}
In the reduction $\varepsilon = 0$ (corresponding to $b = 0$), the system above is equivalent to \eqref{N2} under the transformations $u \mapsto -u$, $v \mapsto w$, and $\mu = \alpha$.
Similarly, when $\varepsilon = 1$ (so $\rho = 0$), the resulting system reduces to \eqref{2Vb} under $u \mapsto -u$, $v \mapsto -w$, and $\mu = \alpha$.
Note that \eqref{DNLSI} is the only system with $P_{11} = 0$.

\item[(ii)] {\bf Case $\bm {P_{11}=-pq}$}. In this case, we have $w_x=0$, that is, $f_x=0$. We denote it by $f=w^{-1}=\phi$. System \eqref{500} is then reduced to
\begin{equation*}
    \left\{\begin{array}{@{}l@{}}
         a_x = 2(a u_{-1}-u a)\\
         u_x =2 \left(u a \phi^{-1}-\phi^{-1}_1 a_1 u\right)+2(\varepsilon -1) \left(\phi^{-1}_1 u_1 \phi_1 u-u \phi u_{-1} \phi^{-1}\right)
    \end{array}\right.
\end{equation*}
As above, the reduction $\varepsilon=0$ yields \eqref{rT} under the transformations $a = v$ and $\mu = \phi^{-1}$, while $\varepsilon=1$ yields \eqref{2Va} under the transformations $u \mapsto -u$, $v = -a$, and $\mu = \phi^{-1}$. Systems \eqref{DNLSI-2}, \eqref{DNLSII}, \eqref{DNLSII-3} and \eqref{DNLSIII} all belong to this case.
\end{enumerate}
\subsection{Rank-1 down Darboux matrices}\label{inverse}
In this section, we consider the rank-1 down Darboux transformations generated by the matrices $M_\downarrow(N)$ when $N\in \{1, 2\}$.
We are going to show that (in relevant cases) they are associated with the inverses of analogous $M_\uparrow(N)$.

First, we consider the rank-1 down linear Darboux matrix
\begin{equation}
M_\downarrow(1) = \begin{pmatrix}
      0 & 0 \\ 0 & g
     \end{pmatrix} \lambda - \begin{pmatrix}
     0 &  p_1 g \\
      g q  & 0
\end{pmatrix}.
\label{down}
\end{equation}
Given the respective up case $M_\uparrow(1)$ defined by \eqref{DarBoux1}, its Darboux inverse matrix (defined in \eqref{ID1}) is
\begin{equation*}
M_\uparrow^I(1) = -\begin{pmatrix}
                   0 & 0 \\ 0 & (q f_{-1} p_{-1})^{-1}
                  \end{pmatrix} \lambda+
                  \begin{pmatrix}
                   0 & (q f_{-1})^{-1} \\ (f_{-1} p_{-1})^{-1} & 0
                  \end{pmatrix}.
\end{equation*}
Clearly, the set of equations associated with $M_\uparrow^I(1)$ is still \eqref{DT2}.
Matrix $M_\uparrow^I(1)$ shares the same form as $M_\downarrow(1)$ in \eqref{down}, but with different shifts. We apply the map $\cI$ (see \eqref{involp} and \eqref{invol}) to $M_\uparrow^I(1)$:
\begin{equation*}
\cI M_\uparrow^I(1) = -\begin{pmatrix}
                   0 & 0 \\ 0 & (q f_{1} p_{1})^{-1}
                  \end{pmatrix} \lambda+
                  \begin{pmatrix}
                   0 & (q f_{1})^{-1} \\ (f_{1} p_{1})^{-1} & 0
                  \end{pmatrix}.
\end{equation*}
Introducing a new auxiliary function $ g = -(q f_1 p_1)^{-1}$, we obtain from $\cI M_\uparrow^I(1)$ the same $M_\downarrow(1)$ matrix in \eqref{down}. Thus we have the following statement:
\begin{prop}\label{prop5}
The matrix $M_\downarrow(1)$ given by \eqref{down} with $ g = -(q f_1 p_1)^{-1}$ is a Darboux matrix for a DNLS equation if and only if $M_\uparrow(1)$ defined by \eqref{DarBoux1} is its Darboux matrix.
\end{prop}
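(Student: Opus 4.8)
The plan is to establish a short chain of equivalences linking $M_\downarrow(1)$ to $M_\uparrow(1)$ through the Darboux-inverse construction of Section~\ref{secinco} and the involution $\cI$. Reading ``$M_\uparrow(1)$ is a Darboux matrix'' as ``$M_\uparrow(1)$ satisfies the zero curvature equation \eqref{D3} for the given $U$'', I would prove three reversible statements: (a) $M_\uparrow(1)$ satisfies \eqref{D3} if and only if $M_\uparrow^I(1):=(M_\uparrow(1)_{-1})^{-1}$ satisfies \eqref{ID3}; (b) a $2\times 2$ matrix $N$ satisfies \eqref{ID3} if and only if $\cI(N)$ satisfies \eqref{D3}; (c) with $g=-(qf_1p_1)^{-1}$ one has $\cI(M_\uparrow^I(1))=M_\downarrow(1)$ exactly in the form \eqref{down}. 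Composing (a)$\Leftrightarrow$(b)$\Leftrightarrow$(c) yields both implications of the proposition simultaneously, since every step is an equivalence.

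For (a), I would differentiate $M^{-1}$: from $M_x=U_1M-MU$ and the identity $(M^{-1})_x=-M^{-1}M_xM^{-1}$ one gets $(M^{-1})_x=UM^{-1}-M^{-1}U_1$, and then shift this identity down by one index (a ring automorphism, hence invertible), which turns $M$ into $M_{-1}$ and $M^{-1}$ into $M_\uparrow^I(1)$ and produces precisely $(M_\uparrow^I(1))_x=U_{-1}M_\uparrow^I(1)-M_\uparrow^I(1)U$, i.e.\ \eqref{ID3}; reading the chain backwards gives the converse. For (b), I would apply $\cI$ to \eqref{ID3}: since $\cI$ is an algebra homomorphism on matrices with $\cI(U)=U$ and $\cI(U_{-1})=U_1$ (see \eqref{invol}), and since $\cI$ merely relabels the $\mathbb{Z}$-lattice and so commutes with $\partial_x$, the equation $N_x=U_{-1}N-NU$ is carried to $(\cI N)_x=U_1\,\cI N-\cI N\,U$, which is \eqref{D3}; because $\cI^2=\mathrm{id}$ the reverse implication is automatic.

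Step (c) is the explicit computation already displayed just before the statement, and it uses Proposition~\ref{prop3}. Since the $(2,2)$ entry of $M_\uparrow(1)$ vanishes, only the $\Delta_{22}$-form of the inverse is available; with $\Delta_{22}(M_\uparrow(1))=-q_1fp\,\lambda^{-1}$ one obtains $(M_\uparrow(1))^{-1}$ with entries $(q_1f)^{-1}$, $(fp)^{-1}$, $-\lambda (q_1fp)^{-1}$ in positions $(1,2)$, $(2,1)$, $(2,2)$ and $0$ in position $(1,1)$. Shifting every index down by one gives $M_\uparrow^I(1)$; applying $\cI$ replaces $p_{-1},q_{-1},f_{-1}$ by $p_1,q_1,f_1$; and putting $g=-(qf_1p_1)^{-1}$ one checks $-p_1g=(qf_1)^{-1}$ and $-gq=(f_1p_1)^{-1}$, so $\cI(M_\uparrow^I(1))$ is exactly the matrix \eqref{down}. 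The dynamics of the auxiliary function $g$ need not be produced separately: it is the one forced by \eqref{D3}, namely the $\cI$-image of the $f$-equation in \eqref{DT2}.

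The only genuinely delicate point is the assertion in (b) that $\cI$ commutes with the $x$-derivative on a matrix whose entries are functions of $p_n,q_n$ and the auxiliary functions; this holds because $\cI$ touches only the lattice index, equivalently because the B\"acklund dynamics \eqref{DT2} is $\cI$-covariant. One must also keep the non-commutative bookkeeping in (c) straight — in particular use $\cI(\xi^{-1})=(\cI\xi)^{-1}$ rather than any order reversal — but this is routine given \eqref{involp}.
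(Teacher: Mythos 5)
Your proposal is correct and takes essentially the same route as the paper: the paper likewise computes $M_\uparrow^I(1)=(M_\uparrow(1)_{-1})^{-1}$ via the quasideterminant $\Delta_{22}$, observes that it satisfies \eqref{ID3}, applies the involution $\cI$ to convert \eqref{ID3} into \eqref{D3}, and identifies the result with \eqref{down} under $g=-(qf_1p_1)^{-1}$. Your explicit organisation into reversible steps (a)--(c), including the remark on $\cI$ commuting with $\partial_x$, only makes the ``if and only if'' structure more transparent than the paper's informal presentation.
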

Given the above result, the Darboux transformation with matrix $M_\downarrow(1)$ can be obtained from $M_\uparrow(1)$ via the involution map $\mathcal{I}$ and a change of auxiliary function. The associated systems of equations are therefore equivalent under the same operations.

We now address the rank-1 down quadratic Darboux transformation of the form:
\begin{equation} 
M_\downarrow(2) = \begin{pmatrix} 
          0 & 0 \\ 0 & g           
    \end{pmatrix} \lambda^2 - \begin{pmatrix} 
      0 & p_1 g \\ g q & 0  
    \end{pmatrix}\lambda + \begin{pmatrix} 
     c & 0 \\ 0 & d 
    \end{pmatrix}.
\label{M2down} 
\end{equation}
Following our approach for the linear case, we need to examine whether the Darboux inverse $M^I_\uparrow(2)$ of matrix \eqref{OriginalMatrix} is of the same form of $M_\downarrow(2)$.
Direct computation shows that $M^I_\uparrow(2)$ it not a (Laurent) polynomial in $\lambda$, so it can not be compared to $M_\downarrow(2)$ in \eqref{M2down}.
However, when studying $M_\uparrow(2)$, we focus on the three reductions specified in \eqref{3cases}. In these reduced cases, $M^I_\uparrow(2)$ is indeed polynomial as follows:
\begin{enumerate}
 \item[{\rm (i)}] {\bf Case $\bm {a=0}$}. The inverse matrix becomes
$$ M^I_\uparrow(2) =
\begin{pmatrix}
0 & 0 \\
0 & p_{-1}^{-1}  {\Delta} _{-1}^{-1}  \hh_{-1}  p_{-1}  b_{-1}^{-1} \\
\end{pmatrix}
-
\begin{pmatrix}
0 & {\Delta} _{-1}^{-1}  \hh_{-1}  p_{-1}  b_{-1}^{-1} \\
b_{-1}^{-1}  q  \hh_{-1}  {\Delta}_{-1}^{-1} & 0 \\
\end{pmatrix}\lambda^{-1}
+
\begin{pmatrix}
{\Delta}_{-1}^{-1} & 0 \\
0 & 0 \\
\end{pmatrix}\lambda ^{-2},
$$
where ${\Delta} = \hh - \hh p b^{-1} q_1 \hh $. Notice that $\Delta$ here originates from the quasideterminant $\Delta_{11}(M_\uparrow(2))\lambda^{-2}$ when $a=0$. This matrix is an equivalent version of $M_\downarrow(2)$ as follows:
$$ M_\downarrow(2) = \lambda^2 \cI M^I_\uparrow(2), \qquad \text{when } d = 0, $$
and the auxiliary functions $g = (b_1 - q \hh_1 p_1)^{-1}$ and $c = p_1 g b_1 p_1^{-1} \hh_1^{-1}$.
\item[{\rm (ii)}] {\bf Case $\bm {b=0}$}. The inverse matrix becomes
$$ M^I_\uparrow(2)= -
\begin{pmatrix}
0 & 0 \\
0 & (q \hh_{-1} p_{-1})^{-1} \\
\end{pmatrix}
+
\begin{pmatrix}
0 & (q \hh_{-1})^{-1}  \\
( \hh_{-1} p_{-1})^{-1} & 0 \\
\end{pmatrix}\lambda^{-1}
-
\begin{pmatrix}
0 & 0 \\
0 & (q \hh_{-1} a_{-1}^{-1} \hh_{-1}  p_{-1})^{-1} \\
\end{pmatrix}\lambda^{-2}.
$$
When we apply the map $\cI$, we identify this result with the equivalent reduction of $M_\downarrow(2)$ as it follows:
$$ M_\downarrow(2) = \lambda^2 \cI M^I_\uparrow(2), \qquad \text{when }  c = 0  ,$$
and the auxiliary functions $g = -(q f_1 p_1)^{-1}$ and $d = {-} g q a_1 p_1 g$.
\item[{\rm (iii)}] {\bf Case $\bm {\rho=0}$}. In this case, we have $b=q_1 f p$. The inverse matrix becomes
$$ M^I_\uparrow(2)=
\begin{pmatrix}
0 & 0 \\
0 & (q  a _{-1} p_{-1})^{-1} \\
\end{pmatrix} \lambda ^2
-
\begin{pmatrix}
0 & (q a_{-1})^{-1}  \\
(a_{-1} p_{-1})^{-1} & 0 \\
\end{pmatrix} \lambda
+
\begin{pmatrix}
a_{-1}^{-1} & 0 \\
0 & (q \hh_{-1} p_{-1})^{-1} \\
\end{pmatrix},
$$
which can be identified with an equivalent reduction of $M_\downarrow(2)$ as follows:
$$ M_\downarrow(2) = \cI M^I_\uparrow(2), \qquad \text{when }  c = a_1^{-1},$$
and $g =  p_{1}^{-1}  a _{1}^{-1}  q^{-1}$ and $d = p_{1}^{-1}  \hh_{1}^{-1}  q^{-1}$.
The condition $c = a_1^{-1}$ is equivalent as assuming $c=p_1 g q$, which is analogous to $\rho =0$  given in \eqref{tardip}. Thus both matrices belong to the same reduction.

Note that in this case, $M_\uparrow^I(2)$ is polynomial in $\lambda$, similar to the linear Darboux matrix $M_\uparrow^I(1)$. In Section \ref{sec45}, we show that
the reduction $\rho = 0$ is a necessary condition for $M_\uparrow(2)$ to factorise as a composition of two $M_\uparrow(1)$ matrices.
\end{enumerate}

For all three reduction cases carried out, we have demonstrated a connection between $M_\downarrow(2)$ and $M_\uparrow(2)$. In an analogous manner, the associated systems of D$\Delta$Es are shown to be equivalent under a redefinition of the auxiliary functions.

\subsection{Factorisation of quadratic Darboux transformations}\label{sec45}
In this section, we address whether a Darboux transformation of higher degree can be factorised into a composition of lower-degree transformations. Similar results have been obtained in specific cases: for the NLS equation \cite{102, 98}, an entire class of Darboux transformations can be expressed as compositions of elementary Darboux matrices.
We show that for the reduction $\rho=0$ in Section \ref{quad}, the Darboux matrix $M_\uparrow(2)$ decomposes into a composition of two rank-1  linear up Darboux matrices using the quasideterminant properties from Section \ref{factor}.

Consider two rank-1 up linear Darboux matrices in \eqref{DarBoux1}, namely
\begin{equation*}
    M_\uparrow(1) = \begin{pmatrix}
        f \lambda& f p \\ \cS_M(q) f & 0
       \end{pmatrix},\qquad N_\uparrow(1) = \begin{pmatrix}
        g \lambda& g p \\ \cS_N(q) g & 0
       \end{pmatrix},
\end{equation*}
corresponding to the Darboux transformations $\cS_M$ and $\cS_N$, respectively,
and a rank-1 up quadratic Darboux matrix $M_\uparrow(2)$ given in \eqref{OriginalMatrix} of the form
\begin{equation*}
    M_\uparrow(2)= \begin{pmatrix}
      h & 0 \\ 0 & 0
     \end{pmatrix} \lambda^2 + \begin{pmatrix}
     0 &  h   p \\  \cS(q)   h & 0
     \end{pmatrix} \lambda + \begin{pmatrix}
      a & 0 \\ 0 & b
     \end{pmatrix},
\end{equation*}
corresponding to the Darboux transformation $\cS$.

Proposition \ref{prop4} provides a necessary conditions for the decomposition of $M_\uparrow(2)$ into a composition of two Darboux matrices, expressed in terms of quasideterminants. We then examine the three reductions of $M_\uparrow(2)$ discussed in Section \ref{quad} by computing their quasideterminants using Proposition \ref{prop3}.
\begin{itemize}
\item \textbf{Case $\bm{a = 0}$}
    \begin{align*}
        & \Delta_{11}(M_\uparrow(2)) =  (h -  h p b^{-1} \cS(q) h )\lambda^2;\\
        & \Delta_{12}(M_\uparrow(2)) = (h p - \cS(q)^{-1} b )\lambda;\\
        & \Delta_{21}(M_\uparrow(2)) = (\cS(q) h - b p^{-1})\lambda;\\
        & \Delta_{22}(M_\uparrow(2)) =b - \cS(q) h p.
    \end{align*}
 Note that the further reduction $\rho = 0$ reduces all the quasideterminants to zero.
\item \textbf{Case $\bm{b = 0}$}
    \begin{align*}
        & \nexists\Delta_{11}(M_\uparrow(2)); \\
        & \Delta_{12}(M_\uparrow(2)) = h p \lambda;\\
        & \Delta_{21}(M_\uparrow(2)) = \cS(q) h \lambda;\\
        & \Delta_{22}(M_\uparrow(2)) = -\cS(q) h (h \lambda^2 + a)^{-1} h p\lambda^2.
    \end{align*}
\item \textbf{Case $\bm{\rho = 0\ (b=\cS(q) h p)}$}
    \begin{align*}
        & \Delta_{11}(M_\uparrow(2)) = a;\\
        & \Delta_{12}(M_\uparrow(2)) = - a p \lambda^{-1};\\
        & \Delta_{21}(M_\uparrow(2)) =  - \cS(q) a \lambda^{-1};\\
        & \Delta_{22}(M_\uparrow(2)) = \cS(q)  h  (h \lambda^2 + a)^{-1} a p .
    \end{align*}
\end{itemize}
According to \eqref{Fac1} and \eqref{Fac2}, for $M_\uparrow(2)$ to be factorisable into $M_\uparrow(1)$ and $N_\uparrow(1)$, its quasideterminants $\Delta_{11}$, $\Delta_{12}$, and $\Delta_{21}$ must be monomial.  Using Proposition \ref{prop3},
conditions \eqref{Fac1} and \eqref{Fac2} become
    \begin{equation*}
        \left\{\begin{array}{@{}l@{}}
         a + (h - h p b^{-1} \cS(q) h)\lambda^2 = \cS_N( f  p  q) g\\
         (h p - \cS(q)^{-1} b )\lambda- a h^{-1} \cS(q)^{-1} b \lambda^{-1} = -\cS_N(f  p  q ) g p \lambda^{-1}
        \end{array}\right.
    \end{equation*}
which leads to a solution
\begin{equation*}
b = \cS(q) h p, \qquad         a  = \cS_N( f  p q) g.
\end{equation*}

The necessary condition $b = \cS(q) h p$ identifies the reduction $\rho =0$.
The condition $a  = \cS_N( f  p q) g$ is the requirement to identify the auxiliary functions $f$ and $g$ such that $h=\cS_N(f) g$.
For example, for system \eqref{DNLSI}, when we consider its linear Darboux transformations, we know that $fp=\mu$, $\cS_M(q) f=\nu$, $gp=\gamma$ and $\cS_N(q) g=\kappa$ are constants.
Thus we have
$$f=\mu p^{-1}=\cS_M(q)^{-1} \nu, \qquad g=\gamma p^{-1}={\cS_N(q)}^{-1} \kappa ,$$
which is consistent with $a=\cS_N(\mu) \kappa$ and $b=\cS_N(\nu) \gamma$ being constant in this case.

\section{Discussions and further research}
In this paper, we investigate linear and quadratic Darboux transformations for all seven S-integrable DNLS systems defined over a free associative algebra. We derive a list of non-Abelian integrable D$\Delta$Es containing non-Abelian constants. These systems \eqref{AL}--\eqref{2Vb} are presented in the introduction.
Such constants, which exhibit trivial dynamics, are functions on the lattice $\mathbb{Z}$ taking values in the same associative algebra. This nature arises naturally because
they originate from constants of motion for the systems, involving both dependent variables and auxiliary functions, derived from the zero-curvature condition for the Darboux matrices.
If we instead assume these constants are Abelian, specifically, of the form $\gamma \mathbbm{1}$ where $\gamma \in \mathbb{C}$ and $\mathbbm{1}$ is the unit element of the algebra,  they can all be removed through scaling transformations. Under this reduction, we recover the non-Abelian systems studied in \cite{26}.

Quasideterminants play an important role in this work. The integrable D$\Delta$Es presented are obtained through reductions of systems involving auxiliary functions. The higher the degree of Darboux matrices, the more reductions are possible. Our choices are inspired by the quasidetermiants of quadratic Darboux matrices, specifically requiring them to be monomial in terms of spectral parameter $\lambda$. In this way, we obtained all known integrable D$\Delta$Es associated with the NLS equation. This link deserves further study.
Moreover, as we demonstrate in Section \ref{sec45}, quasidetermiants also provide necessary conditions for decomposing a Darboux matrix.

Lax representations are not unique for integrable systems. We choose the matrix $P$ in \eqref{LaxP1-A} with entries either $P_{11}=-p q$ or $P_{22}=0$ from the classification result stated in  Theorem \ref{thm1} in Section \ref{S3}. Different choices result in different Darboux matrices.
The construction of Darboux matrices for the DNLS equation for all Lax representations is presented in full details in \cite{173}. The resulting D$\Delta$Es include those listed in this paper under Miura transformations. In Appendices \ref{Appb-Q1} and \ref{Appb-Q2}, we present Miura transformations related to the integrable D$\Delta$Es we derived.

The Lax representations of all DNLS systems are related by gauge transformations \cite{74}, defined via a diagonal matrix $G = \mathrm{diag}(A, B)$. Here, $A$ and $B$ are non-local (integral) functions satisfying \eqref{gaugeaffect1} and \eqref{gaugeaffect2}.
This gauge structure extends to Darboux transformations as discussed in \eqref{gauge}.
Computationally, this offers significant advantages: determining Darboux transformations for one DNLS equation suffices, since transformations for others follow via gauge equivalence.
However, the gauge-transformed Darboux matrix \eqref{gauge} may lose locality in its entries. It is known that non-local $A$ and $B$ exist which preserve locality in the transformed Lax representation \cite{74}. The corresponding non-locality problem for Darboux matrices remains unaddressed.

The D$\Delta$Es presented in this paper are generalisation of known non-Abelian integrable equations with non-Abelian constants.
Methods for studying the algebraic and geometric properties such as recursion operators and Hamiltonian structures for non-Abelian integrable equations can also be extended to these systems. This extension will be the subject of future research.
Symmetry reductions of these systems result in the Painlev\'e type discrete equations containing non-Abelian constants.
As an example, we consider a symmetry reduction of the Volterra equation \eqref{Va}.
Using its Lax representation, we find the next symmetry of \eqref{Va}
\begin{equation*}
   \begin{split}
        & 2 \mu_1 \left(u_1 u \mu+u_1 \mu_1 u_1+ \mu_2 u_2 u_1\right) u  -2 u \left(u_{-1} u_{-2} \mu_{-2}+u_{-1} \mu_{-1} u_{-1}+\mu u u_{-1}\right) \mu_{-1}.
   \end{split}
\end{equation*}
Following the procedure in \cite{112} and using this symmetry, we derive a generalisation of a non-Abelian discrete Painlev\'e equation ${\rm dP_1}$:
\begin{equation}
    \begin{split}
        & u_n u_{n-1} \mu_{n-1}+u_n \mu_n u_n+\mu_{n+1} u_{n+1} u_n + 4 x u_n + \left( n+\nu + (-1)^n \sigma  \right)\mu_n^{-1} = 0, \quad \nu, \sigma\in \mathbb{C}.
    \end{split}\label{dP1}
\end{equation}
where $\mu$ is a non-commutative constant. If $\mu$ is commutative and we rescale $u_n \mapsto \mu_n^{-1} u_n$, equation \eqref{dP1} becomes
\begin{equation*}
      u_{n+1} u_n  +u_n^2  +u_n u_{n-1}+ 4 x u_n + n-\nu +  (-1)^n \sigma = 0,
     \label{dP12}
\end{equation*}
which first appeared in \cite{166,167} and was recently derived by Adler in \cite{112,168} as a symmetry reduction of the non-Abelian Volterra equation
$
u_{n,x}=u_{n+1} u_{n}-u_n u_{n-1}.
$

The non-Abelian equations can also be viewed as quantised versions of classical systems. Recently, Mikhailov proposed an algebraic quantisation approach based on the notion of quantisation ideals in \cite{27}. One starts with a dynamical system defined on a free associative algebra, where quantisation is understood as a reduction to a system on a quotient algebra over a two-sided ideal invariant under the dynamics.
Crucially, this approach does not rely on a Poisson structure. It has revealed new and surprising quantisation ideals,
as explored in \cite{95} for the case of the Volterra chain.
We aim to investigate whether this framework can also be applied to non-Abelian integrable D$\Delta$Es containing non-Abelian constants.

\section*{Acknowledgements}
EP gratefully acknowledges the support of the EPSRC grant EP/V520093/1.
JPW gratefully acknowledges the support of Ningbo University Research Start-up Fund No. 029-432514993.
Both authors would like to thank A.V. Mikhailov and A. Hone for useful discussions.

This article is partially based upon work from COST Action CaLISTA
CA21109 supported by COST (European Cooperation in Science and Technology). www.cost.eu.

\appendix

\section{Transformations related integrable D\texorpdfstring{$\Delta$}{TEXT}Es}\label{AppD}
In this appendix, we prove that systems \eqref{N1} and \eqref{N2} are related to \eqref{AL} via non-invertible Miura transformations, while system \eqref{N3} is related to \eqref{MRT} and \eqref{K}.
We include graphs summarising the relations among the discussed systems: an arrow represents a Miura transformation from one set of variables to another, and a double-headed arrow indicates that the transformation is invertible.
\subsection{Equations related to the Volterra lattice}\label{Appb-V}
The Volterra-related equations appear recurrently in this paper: in the basic form \eqref{Va}, the modified version \eqref{mVb}, and the two-component systems \eqref{2Va} and \eqref{2Vb}. As noted by Adler \cite{112}, there are two non-Abelian versions of the commutative Volterra equation, which he denoted by $VL^1$ and $VL^2$:
    \begin{align*}
        & u_x = u_1 u - u u_{-1}, \label{Vol1} \tag{${\rm VL^1}$}\\
        & u_x = u_1^\star u - u u_{-1}^\star, \label{Vol2}\tag{${\rm VL^2}$}
    \end{align*}
where $\star$ is the involution defined in \eqref{transpose}.
Adler elaborated a scheme that connects \eqref{Vol1} and \eqref{Vol2} with the potential Volterra equation, two versions of the modified Volterra equation and the corresponding two-component Volterra equations.
In this appendix, we extend this result for the Volterra equations including non-commutative constants.\par
The pivotal equation in this scheme is the potential modified Volterra equation:
\begin{equation*}
    v_x = \mu_{-1}v_1 v_{-1}^{-1}\mu_{-2} v, \tag{${\rm pmV}$}\label{pV}
\end{equation*}
where $\mu$ is a non-commutative constant.
It relates to two sequences of Volterra-type equations, labeled $a$ and $b$. Each sequence comprises:
the Volterra equations \eqref{Va} \cite{158,159},
the modified Volterra equations \eqref{mVa} and \eqref{mVb} \cite{170},
and the two-component Volterra equations \eqref{2Va} and \eqref{2Vb}. They are defined as follows:
    \begin{align*}
        &  w_x=\mu w_1 \mu^{-1}w^{2}- w^{ 2} \mu^{-1}_{-1} w_{-1}\mu _{-1};\tag{${\rm mV_a}$}\label{mVa}\\
        &  u_x = \mu_1 u_1 u - u u_{-1} \mu_{-1};\tag{${\rm V_a}$}\\
        &  r_x = \theta_1 s_1 r - r s \theta,\quad s_x = r s - s r_{-1}, \quad \theta_n=\mu_{2n-1} \mu_{2n};\tag{${\rm 2V_a}$}\\
        &  w_x =  w ( w_1 \eta_1  - \eta w_{-1} ) w, \quad  \eta=\mu_1^{-1} \mu_{-1};  \tag{${\rm mV_b}$}\\
        &  r_x = s_1 \theta_1 r - r \theta s, \qquad s_x = s r - r_{-1} s, \quad \theta_n=\eta_{2n}. \tag{${\rm 2V_b}$}
    \end{align*}
Equation \eqref{Va} generalises Adler's \eqref{Vol1}. Generalising \eqref{Vol2}, however, requires extra conditions on non-commutative constants and lies beyond the scope of this paper.

In the following part we present the changes of variables (Miura transformations) among the given models.
From equation \eqref{pV}, we obtain \eqref{mVa} via the substitution:
\begin{equation*}
    \eqref{pV} \rightarrow \eqref{mVa}: \qquad w = v_1 v^{-1} \mu_{-1}.
\end{equation*}
The modified Volterra equation \eqref{mVa} can be converted into \eqref{Va} using:
\begin{equation*}
    \eqref{mVa} \rightarrow \eqref{Va}: \qquad u = w_1 \mu^{-1} w.
\end{equation*}
Transformations between \eqref{Va} and \eqref{mVa} when $\mu$ is commutative are the well studied Miura maps \cite{158, 161, 26}.
From both \eqref{Va} and \eqref{mVa}, one may deduce \eqref{2Va}. The transformation from \eqref{mVa} is the composition: $\eqref{mVa} \rightarrow \eqref{Va}\rightarrow \eqref{2Va}$:
    \begin{align*}
        & \eqref{mVa} \rightarrow \eqref{2Va}:  && r_n = \mu_{2n+1} w_{2n+2}\mu_{2n+1}^{-1} w_{2n+1}, && s_n = w_{2n+1} \mu_{2n}^{-1}w_{2n} \mu_{2n-1}^{-1}, && \theta_n = \mu_{2n-1}\mu_{2n};\\
        & \eqref{Va} \rightarrow \eqref{2Va}:  && r_n = \mu_{2n+1}u_{2n+1}, && s_n = u_{2n} \mu_{2n-1}^{-1}, && \theta_n = \mu_{2n-1}\mu_{2n}.
    \end{align*}
Note that system  \eqref{2Va} is invariant under the change of variables given by
    $$r' = \theta_1 s_1, \qquad s' = r \theta^{-1}.$$
This property is motivated considering the same transformation on $V_a$: the change of variables becomes equivalent to a shift and a rescaling of the dependent variable $u_{2n}' = \mu_{2n+1} u_{2n+1}\mu_{2n}^{-1}$.

We consider the sequence $b$. The Miura transformation from the potential Volterra \eqref{pV} to the modified Volterra equation \eqref{mVb} is given by
\begin{equation*}
    \eqref{pV} \rightarrow \eqref{mVb}: \qquad w = \mu^{-1}v^{-1}\mu_{-1}v_1 \mu_1, \qquad \eta= \mu_1^{-1}\mu_{-1}.
\end{equation*}
From \eqref{mVb} we obtain the two-components system \eqref{2Vb} introducing
\begin{equation*}
    \eqref{mVb} \rightarrow \eqref{2Vb}: \qquad r_n = w_{2n+1} \eta_{2n+1} w_{2n}, \qquad s_n = w_{2n-1}w_{2n}, \qquad \theta_n = \eta_{2n}.
\end{equation*}
The relationships between these Volterra-related equations are captured in the following graph:
\begin{equation*}
    \begin{tikzpicture}[main/.style = {draw, rectangle}, none/.style={}, none1/.style={draw, shape = circle, fill = black, minimum size = 0.05cm, inner sep=0pt}]
    \node[main] (0) {$pmV$};
    \node[none] (01)[left of=0] {};
    \node[main] (1)[left of=01] {$mV_a$};
    \node[none] (02)[left of=1] {};
    \node[main] (2)[left of=02] {$V_a$};
    \node[none] (03)[below of=2] {};
    \node[main] (3)[below of=03] {$2V_a$};
    \node[none] (04)[right of=0] {};
    \node[main] (4)[right of=04] {$mV_b$};
    \node[none] (05)[right of=4] {};
    \node[none] (06)[below of=4] {};
    \node[main] (6)[below of=06] {$2V_b$};
    \draw[-{Stealth}] (0) -- node[midway, left]{} (1);
    \draw[-{Stealth}] (1) -- node[midway, left]{} (2);
    \draw[{Stealth}-{Stealth}] (2) -- node[midway, left]{} (3);
    \draw[-{Stealth}] (1) -- node[midway, left]{} (3);
    \draw[-{Stealth}] (0) -- node[midway, left]{} (4);
    \draw[-{Stealth}] (4) -- node[midway, left]{} (6);
    \end{tikzpicture}
\end{equation*}
\subsection{Systems related to the Ablowitz-Ladik lattice}\label{Appb-Q1}
The Ablowitz-Ladik equation \eqref{AL} and the systems \eqref{N1} and \eqref{N2} are related by Miura transformations.
First, system \eqref{N1} (in variables $u$, $v$) is linked to system \eqref{N2} (in variables $w$, $v$) by the transformation:
\begin{equation*}
\eqref{N1} \rightarrow \eqref{N2}: \qquad w = u v, \qquad v \mapsto -v.
\end{equation*}
We introduce a new system:
\begin{equation*}
    \left\{\begin{array}{@{}l@{}}
      r_x= 2 r (\mu-s_1)^{-1} r_1 s_1-2s r_{-1}(\mu_{-1}-s)^{-1}r  \\
      s_x = 2  (s r_{-1}-r s)
  \end{array}\right. \tag{${\rm N_1'}$}\label{N1p}
\end{equation*}
The evolution equation for $s$ shows that \eqref{N1p} is analogous to the relativistic Toda lattice. This system is derived from \eqref{N1} via the substitution:
\begin{equation*}
\eqref{N1} \rightarrow \eqref{N1p}: \qquad r = -v (u+\mu), \qquad s =  - u_{-1}.
\end{equation*}
The Ablowitz-Ladik lattice \eqref{AL} (in variables $u$, $v$) is related to \eqref{N1p} by the Miura transformation:
\begin{equation*}
\eqref{AL} \rightarrow \eqref{N1p}: \qquad r = v_{-1} \mu u, \qquad s_1 = \nu - v u, \qquad \nu \mapsto \mu.
\end{equation*}
The relations among the systems \eqref{AL}, \eqref{N1} and \eqref{N2} are summarised by the following graph:
\begin{equation*}
    \begin{tikzpicture}[main/.style = {draw, rectangle}, none/.style={}, none1/.style={draw, shape = circle, fill = black, minimum size = 0.05cm, inner sep=0pt}] 
    \node[main] (0) {\eqref{N1}};
    \node[none] (01)[right of=0] {};
    \node[main] (1)[right of=01] {\eqref{N1p}};
    \node[none] (10)[right of=1] {};
    \node[main] (11)[right of=10] {\eqref{AL}};
    \node[none] (02)[left of=0] {};
    \node[main] (4)[left of=02] {\eqref{N2}};
    \draw[Stealth-Stealth] (0) -- node[midway, left]{} (1);
    \draw[{Stealth}-] (1) -- node[midway, left]{} (11);
    \draw[Stealth-Stealth] (0) -- node[midway, left]{} (4);
    \end{tikzpicture}
\end{equation*}
\subsection{Systems related to the Merola-Ragnisco-Tu lattice}\label{Appb-Q2}
The Merola-Ragnisco-Tu \eqref{MRT} equation expressed in terms of the variables $r,s$ is related to the $\cI$ involution, defined by \eqref{involp}, of \eqref{N3}, in terms of $u,v$, by the following Miura transformation
\begin{equation*}
\eqref{MRT} \rightarrow \eqref{N3}: \qquad u_1 =   - s_1^{-1}\nu s, \qquad v = r s,\qquad  \mu \mapsto \mu_{-1}.
\end{equation*}
The system \eqref{N3} can be transformed into
\begin{equation*}
    \left\{\begin{array}{@{}l@{}}
     w_x = 2(w v_1 - v w + w v_1^{-1} w_1 \mu_1 - \mu_{-1} w_{-1}v^{-1} w)\\
     v_x = 2(w \mu - \mu_{-1}w_{-1})
    \end{array}\right. \label{N3a} \tag{${\rm N'_3}$}
\end{equation*}
Note that the dynamic of $v$ is linear and depends on $w$ only. This system is obtained from \eqref{N3} via the substitution
\begin{equation*}
 \eqref{N3} \rightarrow \eqref{N3a}: \qquad  w =  v u,  \qquad  v \mapsto v.
\end{equation*}
From the system \eqref{N3}, we derive the $\cI$ involution of system \eqref{K} written in terms of the variables $r,s$:
\begin{equation*}
    \left\{\begin{array}{@{}l@{}}
         r_x = 2(\mu_{-1}r_{-1} - r \mu)(r + s)\\
         s_x = 2(r + s)(\mu s - s_1 \mu_1)
    \end{array}\right.
\end{equation*}
via the following Miura transformation
\begin{equation*}
    \eqref{K} \rightarrow \eqref{N3}: \qquad  u =  -(r+s),  \qquad v = r \mu - \mu_{-1} r_{-1}.
\end{equation*}
The relations among the equations discussed above are summarised by the following graph:
\begin{equation*}
    \begin{tikzpicture}[main/.style = {draw, rectangle}, none/.style={}, none1/.style={draw, shape = circle, fill = black, minimum size = 0.05cm, inner sep=0pt}] 
    \node[main] (0) {\eqref{K}};
    \node[none] (01)[right of=0] {};
    \node[main] (1)[right of=01] {\eqref{N3}};
    \node[none] (10)[right of=1] {};
    \node[main] (2)[right of=10] {\eqref{MRT}};
    \draw[Stealth-] (1) -- node[midway, left]{} (0);
    \draw[Stealth-] (1) -- node[midway, left]{} (2);
    \end{tikzpicture}
\end{equation*}
\bibliographystyle{abbrv}
\bibliography{biblioDNLS}

\begin{thebibliography}{10}

\bibitem{178}
M.~J. Ablowitz and J.~F. Ladik.
\newblock Nonlinear differential-difference equations.
\newblock {\em Journal of Mathematical Physics}, 16(3):598--603, 1975.

\bibitem{177}
M.~J. Ablowitz and J.~F. Ladik.
\newblock Nonlinear differential–difference equations and {F}ourier analysis.
\newblock {\em Journal of Mathematical Physics}, 17(6):1011--1018, 1976.

\bibitem{170}
V.~Adler, S.~Svinolupov, and R.~Yamilov.
\newblock Multi-component {V}olterra and {T}oda type integrable equations.
\newblock {\em Physics Letters A}, 254(1):24--36, 1999.

\bibitem{112}
V.~E. Adler.
\newblock Painlev\'e type reductions for the non-{A}belian {V}olterra lattices.
\newblock {\em Journal of Physics A: Mathematical and Theoretical},
  54(3):035204, 2020.

\bibitem{168}
V.~E. Adler and A.~B. Shabat.
\newblock Some {E}xact {S}olutions of the {V}olterra {L}attice.
\newblock {\em Theoretical and Mathematical Physics}, 201(1):1442–1456, Oct.
  2019.

\bibitem{115}
V.~E. Adler and V.~V. Sokolov.
\newblock Non-abelian evolution systems with conservation laws.
\newblock {\em Mathematical Physics, Analysis and Geometry}, 24(1), 2021.

\bibitem{133}
V.~E. Adler and R.~I. Yamilov.
\newblock Explicit auto-transformations of integrable chains.
\newblock {\em Journal of Physics A: Mathematical and General}, 27(2):477,
  1994.

\bibitem{11}
O.~Babelon, D.~Bernard, and M.~Talon.
\newblock {\em Introduction to {C}lassical {I}ntegrable {S}ystems}.
\newblock Cambridge Monographs on Mathematical Physics. Cambridge University
  Press, 2003.

\bibitem{161}
O.~I. Bogoyavlenskii.
\newblock Algebraic constructions of integrable dynamical systems-extensions of
  the {V}olterra system.
\newblock {\em Russian Mathematical Surveys}, 46:1 -- 64, 1991.

\bibitem{78}
R.~T. Bury and A.~V. Mikhailov.
\newblock Automorphic {L}ie algebras and corresponding integrable systems.
\newblock {\em Differential Geometry and its Applications}, 74:101710, 2021.

\bibitem{134}
F.~Calogero.
\newblock {\em Why Are Certain Nonlinear PDEs Both Widely Applicable and
  Integrable?}, pages 1--62.
\newblock Springer Berlin Heidelberg, Berlin, Heidelberg, 1991.

\bibitem{4}
S.~Carpentier, A.~V. Mikhailov, and J.~P. Wang.
\newblock Rational {R}ecursion {O}perators for {I}ntegrable
  {D}ifferential–{D}ifference {E}quations.
\newblock {\em Communications in Mathematical Physics}, 370(3):807–851, 2019.

\bibitem{95}
S.~Carpentier, A.~V. Mikhailov, and J.~P. Wang.
\newblock Quantisations of the {V}olterra hierarchy.
\newblock {\em Letters in Mathematical Physics}, 112, 2022.

\bibitem{26}
M.~Casati and J.~P. Wang.
\newblock {R}ecursion and {H}amiltonian operators for integrable nonabelian
  difference equations.
\newblock {\em Nonlinearity}, 34(1):205--236, 2020.

\bibitem{66}
H.~H. Chen, Y.~C. Lee, and C.~S. Liu.
\newblock Integrability of {N}onlinear {H}amiltonian {S}ystems by {I}nverse
  {S}cattering {M}ethod.
\newblock {\em Physica Scripta}, 20(3-4):490, 1979.

\bibitem{6}
J.~L. Cie\'sli\'nski.
\newblock Algebraic construction of the {D}arboux matrix revisited.
\newblock {\em Journal of Physics A: Mathematical and Theoretical},
  42(40):404003, 2009.

\bibitem{9b}
G.~Darboux.
\newblock {Sur une {P}roposition {R}elative aux \'equations {L}in\'eaires}.
\newblock {\em Comptes Rendus}, pages 1456--1459, 1882.

\bibitem{141}
J.~Dieudonn\'e.
\newblock Les d\'eterminants sur un corps non commutatif.
\newblock {\em Bulletin de la Soci\'et\'e Math\'ematique de France}, 71:27--45,
  1943.

\bibitem{167}
A.~S. Fokas, A.~R. Its, and A.~V. Kitaev.
\newblock {Discrete {P}ainlev\'e equations and their appearance in quantum
  gravity}.
\newblock {\em Commun. Math. Phys.}, 142:313--344, 1991.

\bibitem{164}
M.~Gekhtman and O.~Korovnichenko.
\newblock {\em Matrix {W}eyl {F}unctions and {N}on-{A}belian {C}oxeter-{T}oda
  {L}attices}, pages 221--237.
\newblock Springer Basel, Basel, 2011.

\bibitem{56}
I.~Gelfand, S.~Gelfand, V.~Retakh, and L.~W. Robert.
\newblock Quasideterminants.
\newblock {\em Advances in Mathematics}, 193(1):56--141, 2005.

\bibitem{115b}
V.~S. Gerdjikov and I.~Ivanov.
\newblock A {Q}uadratic {P}encil of {G}eneral {T}ype and {N}onlinear
  {E}volution {E}quations. {I}{I}. {H}ierarchies of {H}amiltonian structures.
\newblock {\em J. Phys. Bulgar.}, 10:130--143, 1983.

\bibitem{144}
C.~R. Gilson and J.~J.~C. Nimmo.
\newblock On a direct approach to quasideterminant solutions of a
  noncommutative {K}{P} equation.
\newblock {\em Journal of Physics A: Mathematical and Theoretical},
  40(14):3839, 2007.

\bibitem{166}
A.~R. Its, A.~V. Kitaev, and A.~S. Fokas.
\newblock The isomonodromy approach in the theory of two-dimensional quantum
  gravitation.
\newblock {\em Russian Mathematical Surveys}, 45(6):155, 1990.

\bibitem{64}
D.~J. Kaup and A.~C. Newell.
\newblock An exact solution for a derivative nonlinear {S}chr\"odinger
  equation.
\newblock {\em Journal of Mathematical Physics}, 19(4):798--801, 1978.

\bibitem{1}
F.~Khanizadeh, A.~V. Mikhailov, and J.~P. Wang.
\newblock Darboux transformations and recursion operators for
  differential-difference equations.
\newblock {\em Theoretical and Mathematical Physics}, 177(3):1606–1654, 2013.

\bibitem{92}
S.~Konstantinou-Rizos.
\newblock Darboux transformations, discrete integrable systems and related
  {Y}ang-{B}axter maps, 2014.

\bibitem{80}
S.~Konstantinou-Rizos, A.~V. Mikhailov, and P.~Xenitidis.
\newblock {Reduction groups and related integrable difference systems of
  nonlinear {S}chr\"odinger type}.
\newblock {\em Journal of Mathematical Physics}, 56(8), 2015.
\newblock 082701.

\bibitem{179}
S.~Konstantinou-Rizos and A.~Nikitina.
\newblock Yang–{B}axter maps of {K}d{V}, {NLS} and {DNLS} type on division
  rings.
\newblock {\em Physica D: Nonlinear Phenomena}, 465:134213, 2024.

\bibitem{69}
A.~Kundu.
\newblock Landau-{L}ifshitz and higher-order nonlinear systems gauge generated
  from nonlinear {S}chr\"odinger type equations.
\newblock {\em Journal of Mathematical Physics}, 25:3433--3438, 1984.

\bibitem{165}
B.~A. Kupershmidt.
\newblock {\em K{P} or m{K}{P}: {N}oncommutative {M}athematics of {L}agrangian,
  {H}amiltonian, and {I}ntegrable {S}ystems}.
\newblock Mathematical surveys and monographs. American Mathematical Society,
  2000.

\bibitem{102}
S.~B. Leble.
\newblock Elementary and binary {D}arboux transformations at rings.
\newblock {\em Computers \& Mathematics with Applications}, 35(10):73--81,
  1998.

\bibitem{135}
D.~Levi.
\newblock Nonlinear differential difference equations as {B}acklund
  transformations.
\newblock {\em Journal of Physics A: Mathematical and General}, 14(5):1083,
  1981.

\bibitem{139}
S.~Lombardo and A.~V. Mikhailov.
\newblock Reduction {G}roups and {A}utomorphic {L}ie {A}lgebras.
\newblock {\em Communications in Mathematical Physics}, 258:179--202, 2004.

\bibitem{154}
V.~B. Matveev and M.~A. Salle.
\newblock {\em Darboux {T}ransformations and {S}olitons}.
\newblock Springer Series in Nonlinear Dynamics. Springer Berlin Heidelberg,
  1992.

\bibitem{163}
I.~Merola, O.~Ragnisco, and G.-Z. Tu.
\newblock A novel hierarchy of integrable lattices.
\newblock {\em Inverse Problems}, 10(6):1315, 1994.

\bibitem{138b}
A.~V. Mikhailov.
\newblock {Integrability of the two-dimensional generalization of Toda chain}.
\newblock {\em Soviet Journal of Experimental and Theoretical Physics Letters},
  30:414, 1979.

\bibitem{31}
A.~V. Mikhailov.
\newblock Reduction in integrable systems. {T}he reduction group.
\newblock {\em Soviet Journal of Experimental and Theoretical Physics Letters},
  32, 1980.

\bibitem{101}
A.~V. Mikhailov.
\newblock The reduction problem and the inverse scattering method.
\newblock {\em Physica D: Nonlinear Phenomena}, 3(1):73--117, 1981.

\bibitem{133b}
A.~V. Mikhailov.
\newblock Formal {D}iagonalisation of {L}ax-{D}arboux {S}chemes.
\newblock {\em Modeling and Analysis of Information Systems}, 22(6):795–817,
  2016.

\bibitem{27}
A.~V. Mikhailov.
\newblock Quantisation ideals of nonabelian integrable systems.
\newblock {\em Russian Mathematical Surveys}, 75(5):978--980, 2020.

\bibitem{132}
A.~V. Mikhailov, G.~Papamikos, and J.~P. Wang.
\newblock {D}arboux transformation with dihedral reduction group.
\newblock {\em Journal of Mathematical Physics}, 55(11), 2014.

\bibitem{79}
A.~V. Mikhailov, A.~B. Shabat, and R.~I. Yamilov.
\newblock The symmetry approach to the classification of non-linear equations.
  complete lists of integrable systems.
\newblock {\em Russian Mathematical Surveys}, 42(4):1, 1987.

\bibitem{98}
F.~Musso and A.~Shabat.
\newblock Elementary {D}arboux {T}ransformations and {F}actorization.
\newblock {\em Theoretical and Mathematical Physics}, 144, 2005.

\bibitem{176}
W.~Oevel, B.~Fuchssteiner, H.~Zhang, and O.~Ragnisco.
\newblock Mastersymmetries, angle variables, and recursion operator of the
  relativistic {T}oda lattice.
\newblock {\em Journal of Mathematical Physics}, 30(11):2664--2670, 1989.

\bibitem{72}
P.~J. Olver and V.~V. Sokolov.
\newblock Non-abelian integrable systems of the derivative nonlinear
  {S}chr\"odinger type.
\newblock {\em Inverse Problems}, 14(6):L5, 1998.

\bibitem{173}
E.~Peroni.
\newblock {\em Darboux transformations and integrable discretisation}.
\newblock Doctoral dissertation. University of Kent, 2025.

\bibitem{131}
C.~Rogers and W.~K. Schief.
\newblock {\em B\"acklund and Darboux Transformations: Geometry and Modern
  Applications in Soliton Theory}.
\newblock Cambridge Texts in Applied Mathematics. Cambridge University Press,
  2002.

\bibitem{174}
S.~N.~M. Ruijsenaars.
\newblock {Relativistic {T}oda systems}.
\newblock {\em Commun. Math. Phys.}, 133(2):217--247, 1990.

\bibitem{159}
M.~A. Salle.
\newblock Darboux transformations for non-{A}belian and nonlocal equations of
  the {T}oda chain type.
\newblock {\em Theoretical and Mathematical Physics}, 53:1092--1099, 1982.

\bibitem{175}
Y.~B. Suris.
\newblock New integrable systems related to the relativistic {T}oda lattice.
\newblock {\em Journal of Physics A: Mathematical and General},
  30(5):1745–1761, 1997.

\bibitem{103}
T.~Tsuchida.
\newblock Integrable discretizations of derivative nonlinear {S}chr\"odinger
  equations.
\newblock {\em Journal of Physics A: Mathematical and General}, 35(36):7827,
  2002.

\bibitem{74}
T.~Tsuchida and M.~Wadati.
\newblock Complete integrability of derivative nonlinear {S}chr\"odinger-type
  equations.
\newblock {\em Inverse Problems}, 15(5):1363, 1999.

\bibitem{158}
M.~Wadati.
\newblock {\em Generalized {M}atrix {F}orm of the {I}nverse {S}cattering
  {M}ethod}, pages 287--299.
\newblock Springer Berlin Heidelberg, Berlin, Heidelberg, 1980.

\bibitem{162}
H.~Zhang, G.~Tu, W.~Oevel, and B.~Fuchssteiner.
\newblock Symmetries, conserved quantities, and hierarchies for some lattice
  systems with soliton structure.
\newblock {\em Journal of Mathematical Physics}, 32(7):1908--1918, 1991.

\end{thebibliography}
\end{document}